\pdfoutput=1
\documentclass[11pt]{article}
\usepackage[margin=1.25in]{geometry}
\usepackage{setspace}
\usepackage{amsmath,amssymb,amsthm,enumitem,booktabs,array}
\usepackage[colorlinks=true,linkcolor=blue,citecolor=blue,urlcolor=blue]{hyperref}
\usepackage[round]{natbib}

\newtheorem{theorem}{Theorem}[section]
\newtheorem{proposition}[theorem]{Proposition}
\newtheorem{lemma}[theorem]{Lemma}
\newtheorem{corollary}[theorem]{Corollary}
\newtheorem{assumption}{Assumption}
\newtheorem{definition}[theorem]{Definition}
\theoremstyle{remark}
\newtheorem{remark}[theorem]{Remark}
\newtheorem{example}[theorem]{Example}

\newcommand{\E}{\mathbb{E}}
\newcommand{\Pp}{\mathbb{P}}
\newcommand{\Var}{\mathrm{Var}}
\newcommand{\R}{\mathbb{R}}

\newcommand{\one}{\mathbf{1}}
\newcommand{\tx}{\widetilde{x}}
\newcommand{\supp}{\mathrm{supp}}
\newcommand{\kap}{\kappa_{\mathcal C}}
\newcommand{\hbeta}{\widehat\beta}

\title{Exact Inference in Fixed-Effect Regressions\\ with Concentrated Identifying Variation\thanks{Code and data reproducing all computations, including the cycle-packing implementation and the verification suite, are available in \texttt{PanelAdequacy.jl} (\url{https://github.com/profsms/PanelAdequacy.jl}) and the companion \textsf{R} package \texttt{panelcert} (\url{https://github.com/profsms/panelcert}).}}

  \author{Stanis\l{}aw M.~S.~Halkiewicz\\ \small Group of Machine Learning Research (GMUM)\\ \small Jagiellonian University, Krak\'ow\\ \small \texttt{stashal@o2.pl}}
  \date{July 2026}

\newcounter{ssec}

\begin{document}
\maketitle

\begin{abstract}
In fixed-effect regressions with many groups, fixed effects can absorb most identifying variation, leaving a handful of observations to carry what remains. When variation is this concentrated, conventional $t$-tests can reject a true null more than half the time, and any fixed critical value is either invalid or so conservative it has essentially no power. This paper builds an exact test from the design alone. A \textit{nuisance-annihilating contrast} is a linear combination of the treatment and fixed-effect dummies that eliminates the fixed effects without touching the outcome; sign-flipping these contrasts is then an exact symmetry of the null distribution at every sample size, under arbitrary heteroskedasticity. In two-way designs --- worker-firm, firm-time --- these contrasts are exactly the cycles of the bipartite mobility graph, so the movement that identifies the treatment effect is what makes exact inference possible. Exactness costs power: relative to an oracle test, a chosen set of cycles has an observable \textit{capture ratio} $\kap\in[0,1]$ and standard-error premium $\kap^{-1/2}$, and a packing algorithm resolves the capture-granularity trade-off. In the Grunfeld investment regression (single-observation score concentration $73.9\%$), 32 cycle contrasts capture $\kap=0.627$ of the identifying variation, giving an exact $95\%$ confidence interval of $[0.150,\,0.450]$.
\end{abstract}

\noindent\textbf{Keywords:} randomization inference; fixed effects; leverage; bipartite networks; cycle space; exact tests.\\
\textbf{JEL codes:} C12, C21, C23.

\section{Introduction}\label{sec:intro}

After fixed effects absorb most treatment variation, a handful of observations can carry what remains. Across 1{,}309 published instrumental-variable regressions, \citet{Young2022} finds that one cluster or observation accounts on average for $0.18$ of residualized instrument variation and up to $0.70$ in the most leveraged papers. This paper treats that concentration, rather than aggregate variation, as the regularity condition for Gaussian inference and develops a finite-sample exact alternative.

Consider the linear model with high-dimensional fixed effects,
\begin{equation}\label{eq:model}
Y_i = x_i \beta + d_i'\gamma + \varepsilon_i, \qquad i = 1, \ldots, n,
\end{equation}
with $x_i$ a scalar treatment and $d_i$ the $d_n$ fixed-effect dummies. Writing $\tx = M_D x$ for the treatment residualized on the fixed effects and $V_n = \tx'\tx$ for the identifying variation that survives, the least-squares $t$-statistic is, up to studentization, the score $\sum_i (\tx_i/\sqrt{V_n})\varepsilon_i$: a weighted sum of the errors whose weights $a_{ni}=\tx_i/\sqrt{V_n}$ satisfy $\sum_i a_{ni}^2=1$. Its asymptotic normality requires that no single weight dominate that sum, the Lindeberg-type condition
\begin{equation}\label{eq:lambda}
\lambda_n \;:=\; \max_{i \le n}\, \tx_i^2 / V_n \;\longrightarrow\; 0,
\end{equation}
routinely imposed in the many-covariate and network-regression literatures \citep{CJN2018, CattaneoJanssonNewey2018ET, JochmansWeidner2019, MikushevaSolvsten2024}. In words, $\lambda_n$ is the maximum leverage of the residualized treatment: the largest share of the identifying variation $V_n$ carried by any single observation. When that share does not vanish, the ordinary $t$-test loses its foundation, and Proposition~\ref{ex:rade} shows exactly how: at the fully concentrated boundary the limiting null distribution depends on the shape of the error distribution, and among all distributions that are symmetric about zero with a common variance, the only fixed critical value valid for the whole class is powerless against realistic alternatives --- so a well-calibrated fixed table does not exist. The result neither rules out adaptive procedures nor assigns one limit to every non-vanishing $\lambda_n$ sequence; the randomization construction below supplies an adaptive route.

But the difficulty just described is not the end of the story. One can build linear combinations of the data from the design alone, using only the treatment and the fixed-effect structure rather than the outcome, that eliminate the fixed effects exactly. Once nothing has to be estimated, flipping the signs of those combinations gives a test that is exactly valid at any sample size. Formally, a \emph{nuisance-annihilating contrast} is a design-measurable $q\in\R^n$ with $q'D=0$. Under $\beta=\beta_0$, $q'(Y-x\beta_0)=q'\varepsilon$ contains neither fixed effects nor estimated quantities. Disjointly supported contrasts are invariant to independent sign flips when errors are independent and symmetric, with arbitrary heteroskedasticity, yielding exact finite-sample inference (Theorem~\ref{thm:exact}). In one-way designs this specializes to familiar sign-change and cluster-flip procedures \citep{CRS2017,CSS2021,Toulis2022,HemerikGoemanFinos2020,DeSantisEtAl2025}; exact annihilation removes their homogeneity or nuisance-estimation requirements, while symmetry remains essentially unavoidable for exactness \citep{DutzZhang2026}.

The substance of the paper is the two-way case (Section~\ref{sec:cycles}), and its main point needs no new vocabulary to state. In worker--firm, firm--time, or student--teacher designs, the observations form a bipartite multigraph, and the linear combinations that annihilate the two-way fixed effects turn out to be exactly the \emph{cycles} of that graph: alternating $\pm 1$ contrasts around closed walks, including ``digons'' formed by parallel edges (repeated matches). So the mobility that identifies the treatment effect and the contrasts that make exact inference possible are the same object, and the entire within variation lives in that cycle space: formally, $V_n = \|\Pi_{\mathcal Z}\, x\|^2$, the squared norm of the projection of the treatment onto the cycle space (Proposition~\ref{prop:cyclespace}). Everything that follows is a consequence of this fact: which cycles to use is a packing problem, the capture ratio $\kap$ introduced below is the observable price of exactness, and that price runs into a granularity floor set by how many independent signs the chosen cycles carry.

The exactness of Theorem~\ref{thm:exact} does not depend on leverage: it holds whether or not $\lambda_n$ vanishes. What it costs is power, and that cost is observable from the design alone, before the outcome is ever seen. The theorem itself requires only disjoint supports, not cycle-shaped $\pm1$ contrasts. A \emph{support system} consists of disjoint edge sets $A_1,\ldots,A_C$, each carrying the normalized local projection $v_j\propto\Pi_{\mathcal Z_{A_j}}x$. Its observable \emph{capture ratio} is
\[
\kap \;=\; \frac{1}{V_n}\sum_{j \le C} \big\|\Pi_{\mathcal Z_{A_j}} x\big\|^2 \;\in\; [0, 1],
\]
which becomes $\sum_c(v_c'x)^2/V_n$ for edge-disjoint $\pm1$ cycles. In the diffuse benchmark, $\kap$ is Pitman efficiency relative to the infeasible oracle Gaussian test, so $\kap^{-1/2}$ is the asymptotic standard-error price (Theorem~\ref{thm:power}).

Capture rises when supports merge and reaches $\kap=1$ with one projection contrast per biconnected block (Proposition~\ref{prop:dom}; Theorem~\ref{thm:blockproj}). Yet the randomization group carries one sign per support: for the two-sided statistic the attainable floor is $2^{1-C}$, and merging can violate the power condition that no support dominate captured variation. Fine supports restore granularity but may lose capture. Proposition~\ref{prop:orbitmax}(c) formalizes this \emph{capture--granularity trade-off}: enlarging the flip group restricts its orbit span and weakly lowers the capture ceiling.

Cycle packing supplies interpretable, granular supports. Section~\ref{sec:packing} constructs digons, edge-disjoint firm-pair four-cycles with optimal nested pairing (Remarks~\ref{prop:auto} and~\ref{lem:nested}), and recursive cycles after graph contraction (Proposition~\ref{prop:contract}). On the public \citet{KSS2020} extract it attains $\kap=0.51$ for a match-level treatment versus $0.26$ for greedy packing and $0.91$ versus $0.88$ for a time-varying covariate. The standard-error prices are $1.05\times$--$1.40\times$; Section~\ref{sec:overlap} locates the remaining gap to one inside a dominant block.

Monte Carlo evidence (Section~\ref{sec:numerics}) completes the picture. On a two-way design with concentrated identifying variation ($\lambda_n = 0.29$, effective sample size $6.0$), the degrees-of-freedom-corrected $t$-test rejects a true null at rates up to $58.5\%$, and the HC2 $t$-test up to $33.4\%$, under symmetric heteroskedastic errors; the cycle test's empirical size is $4.9$--$5.0\%$ in every configuration, as Theorem~\ref{thm:exact} guarantees. Oracle size correction removes the conventional tests' spurious power advantage: across the two heteroskedastic designs, cycle-test power is $49.5$--$52.0\%$, versus $27.5$--$49.6\%$ for df-$t$ and $23.5$--$51.1\%$ for HC2. In a diffuse design the cycle test tracks the $\kap$-efficiency benchmark, with the remaining finite-$C$ gap reported explicitly. Section~\ref{sec:grunfeld} runs the concentrated Grunfeld specification end to end, while Section~\ref{sec:kss_application} demonstrates the sparse construction at worker--firm scale. Appendix~H retains a diffuse dense-panel check.

\paragraph{What is new.} Sign-flipping itself is classical \citep{LehmannRomano2005,HemerikGoeman2018}. The contributions are the concentration boundary and fixed-critical-value impossibility result (Proposition~\ref{ex:rade}); design-based nuisance annihilation with exact heteroskedastic inference (Theorem~\ref{thm:exact}); the cycle-space characterization (Proposition~\ref{prop:cyclespace}); observable Pitman efficiency $\kap$ (Theorem~\ref{thm:power}); and a structure-exploiting packing algorithm for a generally NP-hard problem, built on the contraction principle of Proposition~\ref{prop:contract}. Section~\ref{sec:overlap} characterizes the resulting capture--granularity trade-off.

\section{Background on the concentrated regime}\label{sec:concentrated}

Of the results in this section, Proposition~\ref{thm:conv} and Corollary~\ref{cor:stud} are classical --- the Lindeberg--Feller dichotomy for triangular arrays and an immediate consequence --- and are recalled only to delimit the concentrated regime precisely. Proposition~\ref{ex:rade} is not: at the fully concentrated boundary it rules out a critical value held fixed across every error distribution that is merely symmetric about zero, motivating the adaptive randomization approach of Section~\ref{sec:contrasts}. The complementary regime $\lambda_n \to 0$, where leverage-corrected inference is licensed, is treated  by \citet{FEMeasurementError}; nothing below relies on it. Together the two papers form a shared program of adequacy diagnostics for saturated fixed-effect designs---concentrated variation here, noisy continuous regressors there, heterogeneous TWFE effects in the same implementation.

Throughout, the analysis is conditional on the design: $x \in \R^n$ and the fixed-effect dummy matrix $D \in \R^{n \times d_n}$ are treated as fixed, $M_D = I_n - D(D'D)^- D'$, $\tx = M_D x$, $V_n = \tx'\tx > 0$, and $\lambda_n$ is defined in \eqref{eq:lambda}. Under \eqref{eq:model} and the null-centered outcome $r = Y - x\beta_0$, the score is $\tx' r = \tx'\varepsilon + V_n(\beta - \beta_0)$; we study its null behavior through the weights $a_{ni} := \tx_i / \sqrt{V_n}$, which satisfy $\sum_i a_{ni}^2 = 1$ and $\max_i a_{ni}^2 = \lambda_n$.

\begin{assumption}\label{ass:err}
Conditional on $(x, D)$, the errors $\varepsilon_1, \ldots, \varepsilon_n$ are independent and identically distributed with law $F$, $\E[\varepsilon_i] = 0$, and $\Var(\varepsilon_i) = \sigma^2 \in (0, \infty)$.
\end{assumption}

The i.i.d.\ assumption in this section is for transparency of the limit only; the exact tests of Section~\ref{sec:contrasts} dispense with it entirely.

\begin{proposition}[Convolution limit; classical]\label{thm:conv}
Let Assumption~\ref{ass:err} hold, and suppose there exist an integer $J \ge 0$ and constants $a_1, \ldots, a_J$ such that $a_{nj} \to a_j$ for each $j \le J$ (after relabeling observations) while $\max_{i > J} |a_{ni}| \to 0$. Write $c := \sum_{j \le J} a_j^2 \in [0, 1]$. Then
\[
S_n := \sum_{i=1}^n a_{ni}\, \varepsilon_i \;\Longrightarrow\; \sum_{j=1}^{J} a_j\, \varepsilon_j^{*} \;+\; \sqrt{(1-c)}\;\sigma Z,
\]
where $\varepsilon_1^*, \ldots, \varepsilon_J^*$ are i.i.d.\ draws from $F$ and $Z \sim N(0,1)$ is independent of them.
\end{proposition}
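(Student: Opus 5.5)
The plan is to split $S_n$ into a finite ``spike'' part and a diffuse remainder, $S_n = A_n + B_n$ with $A_n := \sum_{j\le J} a_{nj}\varepsilon_j$ and $B_n := \sum_{i>J} a_{ni}\varepsilon_i$. The two pieces are independent under Assumption~\ref{ass:err} because they involve disjoint sets of errors. So the characteristic function factorizes as $\varphi_{S_n}(t)=\varphi_{A_n}(t)\varphi_{B_n}(t)$, and it suffices to find the pointwise limit of each factor. L\'evy's continuity theorem then gives convergence in distribution to the independent sum in the statement.

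For the spike part I will use $a_{nj}\to a_j$ for each of the finitely many $j\le J$. Then $a_{nj}\varepsilon_j \to a_j\varepsilon_j$ almost surely, since $\varepsilon_j\sim F$ has a fixed law not depending on $n$ (identify $\varepsilon_j$ with $\varepsilon_j^*$). Hence $A_n \to \sum_{j\le J} a_j\varepsilon_j^*$ almost surely, and therefore in distribution. No moment condition is needed beyond what is assumed. If $J=0$ this part is empty.

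For the diffuse part I will apply the Lindeberg--Feller theorem to the triangular array $\{a_{ni}\varepsilon_i\}_{i>J}$. Its variance is $\sigma^2 s_n^2$ with $s_n^2 := \sum_{i>J} a_{ni}^2 = 1-\sum_{j\le J}a_{nj}^2 \to 1-c$, using the normalization $\sum_i a_{ni}^2=1$. The Lindeberg condition is the main step and is where the i.i.d.\ structure is used. For any $\eta>0$,
\[
\sum_{i>J} a_{ni}^2\,\E\bigl[\varepsilon_i^2\one\{|a_{ni}\varepsilon_i|>\eta\}\bigr] \le s_n^2\,\E\bigl[\varepsilon_1^2\one\{|\varepsilon_1|>\eta/m_n\}\bigr],
\qquad m_n:=\max_{i>J}|a_{ni}|\to 0 .
\]
The right side tends to zero by dominated convergence, since $\E\varepsilon_1^2<\infty$.

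One degenerate case needs separate handling. If $c=1$, then $\Var(B_n)=\sigma^2 s_n^2\to 0$, so $B_n\to 0$ in $L^2$, which matches $\sqrt{1-c}\,\sigma Z=0$. If $c<1$, then $s_n$ is eventually bounded away from zero, Lindeberg--Feller gives $B_n/(\sigma s_n)\Rightarrow N(0,1)$, and Slutsky gives $B_n\Rightarrow \sqrt{1-c}\,\sigma Z$.

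Combining these, $\varphi_{A_n}(t)\to \E e^{it\sum_j a_j\varepsilon_j^*}$ and $\varphi_{B_n}(t)\to e^{-(1-c)\sigma^2t^2/2}$. Their product is the characteristic function of the claimed limit, with $Z$ independent of the $\varepsilon_j^*$.

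The only delicate point is bookkeeping. The relabeling that places the $J$ spikes first may depend on $n$. This is harmless because the errors are i.i.d., so the joint law of $(\varepsilon_{\pi_n(1)},\dots,\varepsilon_{\pi_n(n)})$ does not depend on the permutation $\pi_n$. The Lindeberg verification above is the step I expect to carry the substance, and the uniform-tail bound through $m_n$ is the route I will take.
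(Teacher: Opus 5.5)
Your proposal is correct and follows essentially the same route as the paper's proof: the same split into the $J$ spike terms and an independent diffuse remainder, almost-sure convergence of the spike part, the same Lindeberg bound via $m_n=\max_{i>J}|a_{ni}|$ and dominated convergence, and the same separate $L^2$ treatment of the case $c=1$. The only differences are cosmetic. You combine the two independent pieces through characteristic functions and L\'evy's continuity theorem, where the paper uses joint convergence and the continuous-mapping theorem. You also make explicit the harmless $n$-dependent relabeling, which the paper leaves implicit.
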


\begin{proof}
See Appendix~\ref{app:proofs}.
\end{proof}

In words: if a handful of weights $a_{nj}$ stay non-negligible while the rest vanish, the weighted sum of errors does not converge to a clean Gaussian. It converges to a mix of the actual error draws at those few dominant observations, plus an independent Gaussian residual from everything else: dominance by a few observations survives in the limit rather than averaging out.

\begin{corollary}[Studentization does not restore normality]\label{cor:stud}
In the setting of Proposition~\ref{thm:conv}, let $\widehat\sigma_n$ be any estimator with $\widehat\sigma_n \to_p \sigma$ (for instance, the degrees-of-freedom-corrected residual standard deviation when $d_n/n \to \rho < 1$ and $\E \varepsilon^4 < \infty$). Then $S_n / \widehat\sigma_n \Rightarrow \sigma^{-1}\big[\sum_{j \le J} a_j \varepsilon_j^* + \sqrt{1 - c}\, \sigma Z\big]$, which is standard normal for all $F$ if and only if $c = 0$.
\end{corollary}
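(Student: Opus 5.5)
The argument has two parts. First, the weak limit follows from Proposition~\ref{thm:conv} and Slutsky's lemma. Second, the "if and only if" reduces to checking normality of the limit law, which I will do through its characteristic function.

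For the convergence, Proposition~\ref{thm:conv} gives $S_n \Rightarrow L := \sum_{j\le J} a_j\varepsilon_j^* + \sqrt{1-c}\,\sigma Z$. Since $\widehat\sigma_n \to_p \sigma > 0$, the continuous mapping theorem gives $\widehat\sigma_n^{-1}\to_p \sigma^{-1}$. Slutsky's lemma then yields $S_n/\widehat\sigma_n \Rightarrow \sigma^{-1}L$. The parenthetical example is a side remark and needs only a brief justification. The dof-corrected variance estimator is $\widehat\sigma_n^2 = \varepsilon' M\varepsilon/(n-d_n-1)$, with $M$ the annihilator of $(x,D)$. Its mean is exactly $\sigma^2$. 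Its variance is $O(\E\varepsilon^4 \cdot \mathrm{tr}(M)/(n-d_n-1)^2) = O(1/n)$ when $d_n/n\to\rho<1$, so Chebyshev's inequality gives consistency.

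For the "if" direction, $c=0$ forces every $a_j=0$. Then $\sigma^{-1}L = Z \sim N(0,1)$ for every $F$.

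For the "only if" direction, suppose $c>0$. It suffices to exhibit one admissible $F$ for which $\sigma^{-1}L$ is not standard normal. I will take $F$ to be the Rademacher law scaled by $\sigma$. The limit $\sigma^{-1}L$ has characteristic function
\[
\varphi(t)=\prod_{j\le J}\cos(a_j t)\, e^{-(1-c)t^2/2}.
\]
Some $a_j\neq 0$ because $c>0$. At $t=\pi/(2|a_j|)$ we get $\cos(a_jt)=0$, so $\varphi(t)=0$. The standard normal characteristic function $e^{-t^2/2}$ never vanishes, so $\sigma^{-1}L$ is not $N(0,1)$. An equivalent check uses kurtosis. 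The fourth cumulant of $\sigma^{-1}L$ is $\kappa_4(F)\sigma^{-4}\sum_j a_j^4$, which is nonzero whenever $\kappa_4(F)\neq 0$ and $c>0$.

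The main obstacle is interpretive rather than technical. I read "standard normal for all $F$" as quantifying over the class in Assumption~\ref{ass:err}. The converse then only needs a single non-Gaussian $F$, and the Rademacher choice, which vanishes in the characteristic function, settles it without any moment computations. I will also note that the variance of $\sigma^{-1}L$ is always $1$, so normality fails only through shape and never through scale.
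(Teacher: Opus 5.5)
Your proof is correct. The convergence step (Proposition~\ref{thm:conv} plus Slutsky, using $\sigma>0$) and the ``if'' direction are the same as in the paper. The difference is in the ``only if'' direction.

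\textbf{The paper's route.} It supposes the limit were Gaussian. It then applies Cram\'er's decomposition theorem repeatedly to the independent summands, which forces each nondegenerate $a_j\varepsilon_j^*$, and hence $F$ itself, to be Gaussian.

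\textbf{Your route.} You exhibit one admissible law, the scaled Rademacher distribution. You then show non-normality through a real zero of $\prod_j\cos(a_jt)\,e^{-(1-c)t^2/2}$.

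\textbf{What each buys.} Your argument is fully elementary and is all the stated equivalence needs, since ``only if'' requires just one bad $F$. It also avoids Cram\'er's theorem, whose proof rests on Hadamard factorization of entire functions.

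The paper's argument gives a strictly stronger conclusion: when $c>0$, the limit is Gaussian if and only if $F$ is. So every non-Gaussian error law breaks normality, not just a specially chosen one. That stronger form is what the surrounding text relies on when it says the limit ``depends on the shape of the error distribution.''

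Your fourth-cumulant variant recovers this for every $F$ with $\E\varepsilon^4<\infty$ and nonzero excess kurtosis. It does not cover non-Gaussian laws with zero excess kurtosis or infinite fourth moment. Assumption~\ref{ass:err} allows such laws, and Cram\'er's theorem handles them.

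Your sketch of consistency for the degrees-of-freedom-corrected estimator is a useful addition. The paper asserts that claim without proof.
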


\begin{proof}
Immediate from Proposition~\ref{thm:conv} and Slutsky. If $c = 0$ the limit is $Z$. If $c > 0$, at least one $a_j$ is nonzero. Were the displayed convolution Gaussian, Cram\'er's characterization of the normal law, applied repeatedly to its independent summands, would force every nondegenerate $a_j\varepsilon_j^*$ to be Gaussian and hence force $F$ itself to be Gaussian. Thus a non-Gaussian $F$ gives a non-Gaussian limit.
\end{proof}

The next result is the one this paper leans on. It says that in the concentrated regime the failure of normality cannot be repaired by choosing a better constant: the limiting null law is not merely non-Gaussian but not fixed, so validity must come from adapting to the unknown error distribution rather than from tabulating it.

\begin{proposition}[No well-calibrated fixed critical value under concentration]\label{ex:rade}
Let $\mathcal F_{\mathrm{sym}}(\sigma^2)$ be the class of symmetric distributions on $\R$ with mean zero and variance $\sigma^2$, and consider the fully concentrated case $J = 1$, $a_1 = 1$ of Proposition~\ref{thm:conv} under the hypotheses of Corollary~\ref{cor:stud} (so $\widehat\sigma_n \to_p \sigma$), giving $S_n/\widehat\sigma_n \Rightarrow \varepsilon^*/\sigma$ with $\varepsilon^* \sim F$. For every fixed critical value $\varkappa > 0$, as $F$ ranges over the family $\{F_p\}$ constructed in the proof, which is contained in $\mathcal F_{\mathrm{sym}}(\sigma^2)$, the limiting rejection probability exists and satisfies
\[
R(F_p,\varkappa) \;:=\; \lim_n \Pp\big(|S_n/\widehat\sigma_n| > \varkappa\big) \;=\; p,
\]
thereby taking every value in the open interval $\big(0, \min\{1, \varkappa^{-2}\}\big)$. Consequently no fixed $\varkappa$ matches the nominal level $\alpha$ for more than a knife-edge subset of $\mathcal F_{\mathrm{sym}}(\sigma^2)$: a procedure whose rejection probability equals $\alpha$ at every $F$ in the class must depend on $F$. This does not mean no fixed $\varkappa$ controls size in the usual worst-case sense: by Chebyshev's inequality $\Pp(|\varepsilon^*/\sigma|>\varkappa)\le \varkappa^{-2}$ for every $F\in\mathcal F_{\mathrm{sym}}(\sigma^2)$, so $\varkappa=\alpha^{-1/2}$ controls $\sup_F R(F,\varkappa)$ at exactly $\alpha$ --- and the family $\{F_p\}$ above shows this bound is sharp, attained in the limit as $p\uparrow\varkappa^{-2}$. The bite is elsewhere: this distribution-free $\varkappa$ is far above the Gaussian $z_{1-\alpha/2}$ used in practice (e.g.\ $\alpha^{-1/2}\approx4.47$ against $z_{0.975}\approx1.96$ at $\alpha=0.05$), so it rejects almost never against realistic alternatives. No fixed critical value is simultaneously valid across $\mathcal F_{\mathrm{sym}}(\sigma^2)$ and non-trivially powered: the conventional, well-powered $z_{1-\alpha/2}$ is exactly right only for a knife-edge subset of the class and can be badly liberal elsewhere (Remark~\ref{rem:rade}), while the only fixed value valid everywhere is powerless.
\end{proposition}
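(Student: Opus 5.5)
The plan is to build an explicit three-point family and read off the limit from Corollary~\ref{cor:stud}. Fix $\varkappa>0$ and $\sigma^2$. For $p\in(0,\min\{1,\varkappa^{-2}\})$, let $F_p$ put mass $p/2$ at each of $\pm\sigma/\sqrt p$ and mass $1-p$ at $0$. This law is symmetric with mean zero, and its variance is $p\cdot\sigma^2/p=\sigma^2$. Hence $F_p\in\mathcal F_{\mathrm{sym}}(\sigma^2)$, and Assumption~\ref{ass:err} holds. If Corollary~\ref{cor:stud} is invoked through its parenthetical example (df-corrected residual standard deviation), that example requires $\E\varepsilon^4<\infty$. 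This is satisfied trivially, since $F_p$ has bounded support.

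Under $F_p$, Corollary~\ref{cor:stud} with $J=1$, $a_1=1$, $c=1$ gives $S_n/\widehat\sigma_n\Rightarrow\varepsilon^*/\sigma$. This limit takes the values $\pm1/\sqrt p$ with total probability $p$ and $0$ otherwise. To pass from weak convergence to convergence of $\Pp(|S_n/\widehat\sigma_n|>\varkappa)$, I would use the portmanteau theorem, which requires $\varkappa$ to be a continuity point of the limit law of $|\varepsilon^*/\sigma|$. The atoms of that law are at $0$ and $1/\sqrt p$. Since $p<\varkappa^{-2}$, we have $1/\sqrt p>\varkappa$ strictly, and $\varkappa>0$, so $\varkappa$ is not an atom. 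The limit therefore exists and equals
\[
\Pp(|\varepsilon^*|/\sigma>\varkappa)=\Pp(|\varepsilon^*|=\sigma/\sqrt p)=p .
\]
Letting $p$ range over the open interval shows that $R(F_p,\varkappa)$ sweeps out all of $(0,\min\{1,\varkappa^{-2}\})$.

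The remaining claims are short deductions.

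\emph{Knife-edge.} Fix $\varkappa$ and a level $\alpha$. Within $\{F_p\}$, at most one member, namely $p=\alpha$, attains $R=\alpha$. So no fixed critical value is calibrated at level $\alpha$ across the class, and any procedure with exact size $\alpha$ for every $F$ must depend on $F$.

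\emph{Worst-case bound.} For any $F\in\mathcal F_{\mathrm{sym}}(\sigma^2)$, Chebyshev gives $\Pp(|\varepsilon^*/\sigma|>\varkappa)\le\varkappa^{-2}$. When $\varkappa$ is not an atom, the limit from the first step gives $R(F,\varkappa)\le\varkappa^{-2}$. When $\varkappa$ is an atom, the same bound applies to $\limsup_n$ by the closed-set half of the portmanteau theorem. Sharpness follows because $R(F_p,\varkappa)=p\uparrow\varkappa^{-2}$. Taking $\varkappa=\alpha^{-1/2}$ gives size control at exactly $\alpha$ in the supremum sense.

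\emph{Comparison with the Gaussian value.} $\alpha^{-1/2}\approx4.47$ against $z_{0.975}\approx1.96$ is arithmetic. The liberality of $z_{1-\alpha/2}$ follows by taking $p\uparrow z_{1-\alpha/2}^{-2}\approx0.26>\alpha$.

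The main obstacle is the continuity-point issue in the first step. The construction deliberately keeps the atom $1/\sqrt p$ strictly above $\varkappa$ and the atom $0$ strictly below it, which is why the interval is open. With this bookkeeping everything reduces to Corollary~\ref{cor:stud}. The phrase ``non-trivially powered'' is interpretive rather than a theorem, so I would support it only by the comparison of critical values above.
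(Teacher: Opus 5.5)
Your proposal is correct and follows essentially the same route as the paper's proof. It uses the same three-point family $F_p$ with atoms at $0$ and $\pm\sigma/\sqrt p$, the same observation that $p<\varkappa^{-2}$ keeps $\pm\sigma\varkappa$ off the atoms so that weak convergence yields $R(F_p,\varkappa)=p$, and the same ``only $p=\alpha$ can match'' argument for the knife-edge claim. Two of your steps are modest refinements of points the paper treats only in passing: you bound the $\limsup$ via the closed-set half of the portmanteau theorem when $\pm\sigma\varkappa$ is an atom of a general $F$, and you exhibit the liberality of $z_{1-\alpha/2}$ by letting $p\uparrow z_{1-\alpha/2}^{-2}\approx0.26$.
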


\begin{proof}
See Appendix~\ref{app:proofs}.
\end{proof}

\begin{remark}[Rademacher boundary]\label{rem:rade}
The family degenerates at $p=1$ to the Rademacher law on $\{\pm\sigma\}$, for
which $|\varepsilon|\equiv\sigma$ and hence $R = 1$ for $\varkappa<1$ and $R=0$
for $\varkappa\ge1$. It is not needed for Proposition~\ref{ex:rade} --- the open
family already exhausts the argument --- but it records how extreme the failure
can be: for any critical value below $1$ the limiting rejection probability of a
true null is $1$, not merely different from $\alpha$.
\end{remark}

The randomization tests of Section~\ref{sec:contrasts} adapt to the realized error distribution automatically and exactly, supplying the kind of non-tabulated validity that Proposition~\ref{ex:rade} motivates.

\begin{remark}[Diagnostics]\label{rem:diag}
Both $\lambda_n$ and the Herfindahl index $H_n := \sum_i (\tx_i^2/V_n)^2$, with
effective sample size $N_{\mathrm{eff}} := 1/H_n$, are observable from the design
and should be reported alongside saturated FE estimates; \citet{Young2022}
computes $\lambda_n$-type shares across published work and finds them large.
Under conditional heteroskedasticity, write $\sigma_i^2 :=
\Var(\varepsilon_i\mid x,D)$ and define the population score-variance shares
and their concentration by
\[
 w_{ni}^{\Omega}:=\frac{\tx_i^2\sigma_i^2}{\sum_j\tx_j^2\sigma_j^2},
 \qquad
 \lambda_n^{\Omega}:=\max_i w_{ni}^{\Omega},
 \qquad
 H_n^{\Omega}:=\sum_i (w_{ni}^{\Omega})^2.
\]
For independent errors with uniformly integrable standardized squares,
$\lambda_n^{\Omega}\to0$ is the score-level negligibility condition behind the
heteroskedastic CLT. The design condition $\lambda_n\to0$ implies it when the
conditional variances are uniformly bounded above and away from zero, but need
not do so under unrestricted heteroskedasticity. A directly reportable realized
analogue is
\[
 \widehat w_{ni}^{\mathrm{score}}
 :=\frac{\tx_i^2\widehat u_i^2}{\sum_j\tx_j^2\widehat u_j^2},\qquad
 \widehat\lambda_n^{\mathrm{score}}:=\max_i\widehat w_{ni}^{\mathrm{score}},
 \quad
 \widehat N_{\mathrm{eff}}^{\mathrm{score}}
 :=\left\{\sum_i(\widehat w_{ni}^{\mathrm{score}})^2\right\}^{-1}.
\]
Because it uses one realization and fitted residuals, this is a warning
diagnostic rather than, without further conditions, a consistent estimator of
$\lambda_n^{\Omega}$. Neither the design nor score diagnostic originates here,
but the partition they induce is the one that matters: when the relevant
effective sample size is large, conventional leverage-corrected
inference applies \citep{MacKinnonWhite1985, CJN2018, KSS2020, Jochmans2022};
when it is small, Gaussian approximations can fail, and Proposition~\ref{ex:rade}
shows that at the fully concentrated boundary the only fixed critical value valid
uniformly over symmetric laws is powerless, so a well-calibrated fixed table
does not exist; Section~\ref{sec:contrasts} applies. The diagnostic and the boundary between the
two remedies are computed from the same design objects. In particular,
$\lambda_n$, $H_n$, $N_{\mathrm{eff}}$, $\rho_n=d_n/n$, and the capture
$\kap$ of any specified packing are functions of $(x,D)$ alone and require no
outcome; only the realized score diagnostic uses $Y$.
\end{remark}

\begin{remark}[Levels, logs, and partial prevalence evidence]\label{rem:levelslogs}
We screened 49 prespecified positive regressor pairs from ten public panel
distributions, always comparing levels and logs on the same observations and
without selecting variables after seeing the diagnostics. Logs lower
$\lambda_n$ in 37 of 49 pairs ($75.5\%$): the paired mean falls from $0.170$
to $0.097$. Capture generally moves the other way because diffuse variation is
easier to cover with disjoint supports. Representative design-only results are:
\begin{center}
\begin{tabular}{lrrrr}
\toprule
& \multicolumn{2}{c}{$\lambda_n$} & \multicolumn{2}{c}{$\kap$}\\
Regressor & levels & logs & levels & logs\\
\midrule
Grunfeld investment & 0.348 & 0.078 & 0.571 & 0.857\\
Grunfeld capital & 0.254 & 0.188 & 0.668 & 0.750\\
Munnell GSP & 0.116 & 0.027 & 0.664 & 0.925\\
Munnell private capital & 0.081 & 0.025 & 0.720 & 0.913\\
Munnell highways & 0.032 & 0.062 & 0.860 & 0.911\\
\bottomrule
\end{tabular}
\end{center}
This is partial prevalence evidence, not a claim that logging is uniformly
regularizing: highways are a counterexample, and realized score concentration
falls under logs in only about half of the outcome-bearing pairs. The full
paired table, including null results and variables with no within variation, is
in the replication archive.
\end{remark}

\section{Exact inference via nuisance-annihilating contrasts}\label{sec:contrasts}

We now drop Assumption~\ref{ass:err} entirely. The model is \eqref{eq:model}, conditional on $(x, D)$, and the target is $H_0 : \beta = \beta_0$ with $\gamma$ an unrestricted nuisance.

\begin{definition}[Annihilating contrast system]\label{def:contrast}
A collection $q_1, \ldots, q_C \in \R^n$, each $q_c \ne 0$, is an \emph{annihilating contrast system} for the design $(x, D)$ if (i) each $q_c$ is a measurable function of $(x, D)$ only; (ii) $q_c' D = 0$ for every $c$; and (iii) the supports $S_c := \supp(q_c)$ are pairwise disjoint. Write $v_c := q_c / \|q_c\|$ and $b_c := v_c' x$.
\end{definition}

\begin{assumption}[Blockwise symmetric errors]\label{ass:sym}
There is a partition of $\{1, \ldots, n\}$ into blocks $B_1, \ldots, B_K$, measurable with respect to $(x, D)$, such that, conditional on $(x, D)$: (i) the subvectors $\varepsilon_{B_1}, \ldots, \varepsilon_{B_K}$ are mutually independent; (ii) each $\varepsilon_{B_k}$ is centrally symmetric, $\varepsilon_{B_k} \overset{d}{=} -\varepsilon_{B_k}$; and (iii) each contrast support $S_c$ is a union of blocks.
\end{assumption}

No moment or identical-distribution condition appears in Assumption~\ref{ass:sym}: variances may differ arbitrarily across blocks and need not exist. In the leading cases: for one-way designs, blocks are groups, contrasts are within-group score vectors $q_g = $ within-group demeaned $x$ restricted to group $g$, and (ii) requires central symmetry of each group's error vector, permitting arbitrary within-group dependence; for two-way designs with observation-level independence, blocks are singletons and (ii) requires each $\varepsilon_i$ to be symmetric about zero. The essential role of symmetry is not an artifact of our construction: \citet{DutzZhang2026} show that for wide classes of one-sample nulls, invariance assumptions of exactly this type are necessary for the existence of finite-sample exact randomization tests.

\begin{lemma}[Sign-flip invariance]\label{lem:inv}
Let Assumption~\ref{ass:sym} hold, let $q_1, \ldots, q_C$ be an annihilating contrast system, and define the contrast scores $U_c := v_c'(Y - x \beta_0)$. Under $H_0$, for every $s \in \{-1, +1\}^C$,
\[
(U_1, \ldots, U_C) \;\overset{d}{=}\; (s_1 U_1, \ldots, s_C U_C), \qquad \text{conditional on } (x, D).
\]
\end{lemma}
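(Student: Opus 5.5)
The plan is to reduce the contrast scores to linear functions of disjoint groups of independent, centrally symmetric error blocks, and then flip those groups one at a time.

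\emph{Step 1: the nuisance disappears.} Under $H_0$ we have $Y - x\beta_0 = D\gamma + \varepsilon$. Condition (ii) of Definition~\ref{def:contrast} gives $v_c'D = 0$, so
\[
U_c = v_c'(D\gamma + \varepsilon) = v_c'\varepsilon .
\]
Because $v_c$ is supported on $S_c$, this equals $U_c = v_{c,S_c}'\varepsilon_{S_c}$. By condition (i), $v_c$ is a fixed vector once we condition on $(x,D)$. Hence each $U_c$ is a deterministic linear function of the error subvector $\varepsilon_{S_c}$ alone.

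\emph{Step 2: group the blocks by support.} By Assumption~\ref{ass:sym}(iii), each $S_c$ is a union of blocks, say $S_c = \bigcup_{k\in K_c} B_k$. Since the supports are pairwise disjoint (Definition~\ref{def:contrast}(iii)) and the blocks partition the index set, the index sets $K_1,\dots,K_C$ are disjoint. Assumption~\ref{ass:sym}(i) makes the block subvectors mutually independent, so $\varepsilon_{S_1},\dots,\varepsilon_{S_C}$ are mutually independent conditional on $(x,D)$.

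Each $\varepsilon_{S_c}$ is also centrally symmetric. It is a concatenation of independent, individually centrally symmetric block vectors, so its joint law is the product of those laws. Negating every component maps each factor to itself, so the product law is unchanged.

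\emph{Step 3: flip the signs.} Fix $s\in\{-1,+1\}^C$. Independence and central symmetry together give
\[
(\varepsilon_{S_1},\dots,\varepsilon_{S_C}) \overset{d}{=} (s_1\varepsilon_{S_1},\dots,s_C\varepsilon_{S_C}).
\]
Both sides are product laws whose factors agree, since $s_c\varepsilon_{S_c}\overset{d}{=}\varepsilon_{S_c}$ for either sign.

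Now apply the fixed measurable map
\[
(e_1,\dots,e_C)\mapsto (v_{1,S_1}'e_1,\dots,v_{C,S_C}'e_C).
\]
This map is linear in each $e_c$, so it sends $s_c e_c$ to $s_c\,v_{c,S_c}'e_c$. Applying it to both sides of the displayed equality yields $(U_1,\dots,U_C)\overset{d}{=}(s_1U_1,\dots,s_CU_C)$, conditional on $(x,D)$.

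The argument requires no moments: it uses only equality of laws under measurable maps. The one point needing care is Step 2, namely that the sign flips act on independent units. This is exactly where the requirement that each $S_c$ be a union of blocks is essential. If a block straddled two supports, flipping one contrast's sign would act on only part of a dependent block, and that flip need not preserve the joint law.
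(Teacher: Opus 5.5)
Your proof is correct and follows essentially the same route as the paper: the nuisance is annihilated, so $U_c = v_c'\varepsilon$, and then whole blocks inside the negatively signed supports are flipped and the equality in law is pushed through the linear contrast maps. The only difference is packaging: the paper flips those blocks directly in a single whole-sample vector $\tilde\varepsilon$ with $\tilde\varepsilon \overset{d}{=} \varepsilon$, whereas you first group the blocks into independent, centrally symmetric support-level subvectors $\varepsilon_{S_c}$.
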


\begin{proof}
See Appendix~\ref{app:proofs}.
\end{proof}

In words: under the null and the stated symmetry, each contrast score is exactly as likely to have come out with any one pattern of signs as with any other. Treating the observed signs as one draw from a known, fully enumerable reference distribution is therefore not an approximation but an exact statement, which is what the next theorem uses.

\begin{theorem}[Finite-sample exactness]\label{thm:exact}
Let the conditions of Lemma~\ref{lem:inv} hold, let $T : \R^C \to \R$ be any measurable test statistic, and let $\mathcal G = \{-1, +1\}^C$. Define the randomization $p$-value
\[
p \;:=\; \frac{1}{|\mathcal G|} \sum_{s \in \mathcal G} \one\big\{ T(s_1 U_1, \ldots, s_C U_C) \ge T(U_1, \ldots, U_C) \big\} .
\]
Then $\Pp(p \le \alpha \mid x, D) \le \alpha$ under $H_0$, for every $n$, every $\alpha \in (0, 1)$, and every error distribution satisfying Assumption~\ref{ass:sym}. The same holds for the Monte Carlo version based on $B$ i.i.d.\ uniform draws $s^{(1)}, \ldots, s^{(B)}$ from $\mathcal G$, drawn independently of $U$ conditional on $(x,D)$, with $p = \big(1 + \#\{b : T(s^{(b)} \circ U) \ge T(U)\}\big) / (1 + B)$.
\end{theorem}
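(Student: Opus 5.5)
The plan is to reduce the theorem to the standard group-invariance argument for randomization tests, with Lemma~\ref{lem:inv} supplying the invariance. Write $U=(U_1,\ldots,U_C)$ and $\mathcal G=\{-1,+1\}^C$, acting on $\R^C$ by coordinatewise multiplication $s\circ u$. This is a finite abelian group. For each $s\in\mathcal G$ the map $g_s(u)=s\circ u$ is a bijection of $\R^C$, and $g_s\circ g_{s'}=g_{s\circ s'}$. By Lemma~\ref{lem:inv}, under $H_0$ and conditional on $(x,D)$, we have $g_s(U)\overset{d}{=}U$ for every $s$. Everything below is conditional on $(x,D)$; the contrasts are design-measurable, so they are fixed under this conditioning.

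\textbf{Exact enumeration.} Let $s^*$ be uniform on $\mathcal G$ and independent of $U$. Set $\widetilde U:=s^*\circ U$. The invariance gives $\widetilde U\overset{d}{=}U$, and also $(\widetilde U,\widetilde U)$ has the law of $(U,U)$. The key identity is that $p(u)$ depends on $u$ only through the orbit $\{s\circ u:s\in\mathcal G\}$, with $u$ marking which orbit point is observed. Since $\mathcal G$ is a group, $\{s\circ(s^*\circ u)\}_s=\{s\circ u\}_s$ as multisets.

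Fix an orbit and list its values $T_{(1)}\ge\cdots\ge T_{(|\mathcal G|)}$ over the $|\mathcal G|$ group elements, counting multiplicity. The $p$-value at the point $s^*\circ u$ is the fraction of group elements whose $T$ is at least $T(s^*\circ u)$. Because $s^*$ is uniform, $T(s^*\circ u)$ is a uniform draw from this list. A standard counting argument then gives
\[
\Pp\bigl(p(s^*\circ u)\le\alpha\bigr)\le\frac{\lfloor\alpha|\mathcal G|\rfloor}{|\mathcal G|}\le\alpha .
\]
Ties only make $p$ larger, since $p$ counts $\ge$. Concretely, the number of elements $s$ with $p(s\circ u)\le\alpha$ is at most $\lfloor \alpha|\mathcal G|\rfloor$: if $k$ elements have $p\le\alpha$, then the one among them with the smallest $T$ has $p\ge k/|\mathcal G|$.

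Integrating over $U$ and using $p(U)\overset{d}{=}p(s^*\circ U)$ yields $\Pp(p\le\alpha\mid x,D)\le\alpha$. This holds for every $n$ and every error law satisfying Assumption~\ref{ass:sym}, because that assumption entered only through Lemma~\ref{lem:inv}.

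\textbf{Monte Carlo version.} Here I would use the exchangeability argument. Let $s^{(0)}:=s^*$ be uniform and independent of everything. Replace $U$ by $s^{(0)}\circ U$, which has the same law. Set $W_b:=T(s^{(b)}\circ s^{(0)}\circ U)$ for $b=0,\ldots,B$, where the $b=0$ term uses the identity. The elements $s^{(b)}\circ s^{(0)}$ for $b=1,\ldots,B$, together with $s^{(0)}$, are $B+1$ i.i.d.\ uniform draws on $\mathcal G$ conditional on $U$, because right-translation by $s^{(0)}$ preserves the uniform law. Rewriting, the values $W_0,\ldots,W_B$ are $T(t_b\circ U)$ with $t_0,\ldots,t_B$ i.i.d.\ uniform, so they are exchangeable given $U$.

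The Monte Carlo $p$-value equals $(1+\#\{b\ge1:W_b\ge W_0\})/(B+1)$. This is the normalized rank of $W_0$ among exchangeable variables, counting ties against rejection. The usual rank argument, again bounding by the smallest rejected element, gives $\Pp(p\le\alpha)\le\lfloor\alpha(B+1)\rfloor/(B+1)\le\alpha$.

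The main obstacle is bookkeeping rather than depth. I need to verify carefully that the distributional identity from Lemma~\ref{lem:inv} extends from fixed $s$ to a random independent $s^*$, which follows by conditioning on $s^*$. I also need to handle ties without randomization, which the counting bound above absorbs.
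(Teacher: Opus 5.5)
Your proposal is correct and follows the paper's route. Lemma~\ref{lem:inv} supplies the $\mathcal G$-invariance of the null law of $U$, and exactness then follows from the standard randomization-test guarantee. The only difference is that you prove that guarantee inline: the orbit-counting bound $\lfloor\alpha|\mathcal G|\rfloor/|\mathcal G|$ for full enumeration, and the argument that pre-composing with a uniform $s^{(0)}$ makes the plus-one Monte Carlo draws exchangeable. The paper instead cites Lehmann and Romano (Theorem 15.2.1) and Hemerik and Goeman (2018) for these two steps.
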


\begin{proof}
See Appendix~\ref{app:proofs}.
\end{proof}

In words: compare the observed pattern of contrast signs against every possible sign pattern consistent with the null (or a random sample of them), using any test statistic at all. The fraction of patterns that score at least as extreme as the one actually observed is a valid $p$-value at every sample size, with no asymptotics, no estimated nuisance parameters, and no restriction on how concentrated the design is. This is the paper's central result; everything that follows asks how much power it costs and how to spend that cost well.

\begin{remark}[Statistic and confidence intervals]\label{rem:stat}
Our default statistic is the matched-weight score $T(u) = |\sum_c b_c u_c|$, motivated by the alternative: for general $\beta$, $U_c = b_c (\beta - \beta_0) + v_c'\varepsilon$, so $b_c$ is the signal loading of contrast $c$. A studentized variant $|\sum_c b_c u_c| / (\sum_c b_c^2 u_c^2)^{1/2}$ is equally exact by Theorem~\ref{thm:exact}. Because each $U_c(\beta_0) = v_c'Y - b_c \beta_0$ is affine in $\beta_0$, the test inverts to a confidence set by a one-dimensional search; with the unstudentized statistic the acceptance region is an interval. The affineness is worth exploiting: writing $T(u) = |\sum_c b_c u_c|$, the entire randomization orbit is affine in $\beta_0$,
\[
T\big(s \circ U(\beta_0)\big) \;=\; \Big| \sum_c s_c b_c v_c'Y \;-\; \beta_0 \sum_c s_c b_c^2 \Big| ,
\]
so the two inner products are computed once per sign pattern and the $p$-value curve over an arbitrarily fine grid of $\beta_0$ costs nothing further. Appendix~H inverts the test on a grid of $24{,}001$ points with $B = 99{,}999$ flips in seconds.
\end{remark}

\subsection{Which dependence structures are covered}\label{sec:dep}

Assumption~\ref{ass:sym} is stated at a level of generality that makes it easy
to lose track of what it permits, and since the answer decides whether the
method is usable on worker--firm panels---where errors are routinely clustered
at the worker, firm, match or period level---it is worth being explicit. The
assumption has an independence-and-symmetry part, (i)--(ii), which is
substantive, and a combinatorial part, (iii), which is a property of the design
and the chosen supports alone and can be checked before any data are seen. We
take them in turn.

\paragraph{The combinatorial part.} Let $\mathcal P$ be a partition of the
observations into dependence cells---the level at which a practitioner would
cluster. Condition (iii) says each contrast support must be a union of cells.
Since the supports are ours to choose, this is a constraint on granularity, and
it binds differently for each $\mathcal P$.

In one sentence: a support is compatible with a chosen clustering level exactly when it never splits a dependence cell, so whether a given packing tolerates worker-, firm-, or period-level dependence can be read off directly from which cells its supports respect, before any data are seen.

\begin{lemma}[Compatible supports in two-way designs]\label{lem:compat}
Let the observations be the edges of the bipartite multigraph $G$ of
Section~\ref{sec:cycles}. A support $A_j$ satisfies Assumption~\ref{ass:sym}(iii)
for a partition $\mathcal P$ iff it is $\mathcal P$-measurable. Hence:
\emph{(a)} at observation level ($\mathcal P$ = singletons) every support system
is compatible; \emph{(b)} at match level ($\mathcal P$ = parallel-edge classes)
$A_j$ must contain all or none of each match's edges, so every cycle through
multiplicity-one matches qualifies, and a match of multiplicity $k$ taken whole
carries a local cycle space of dimension $k-1$; \emph{(c)} at $\mathcal A$-vertex
(worker) level $A_j$ must use each worker's edges all or none, so \emph{every
four-cycle built from two-period movers is automatically compatible}, as is every
cycle produced by the contraction of Proposition~\ref{prop:contract}, since such
cycles traverse each mover via both of her edges; \emph{(d)} at
$\mathcal B$-vertex (firm) level no single cycle is compatible unless every firm
it touches has degree two, so compatibility forces supports at the level of
groups of whole firms; \emph{(e)} at period level the statement is (d) with the
roles of the vertex classes exchanged.
\end{lemma}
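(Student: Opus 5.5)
The plan is to prove the general equivalence first and then read off (a)--(e) as instances. Assumption~\ref{ass:sym}(iii) asks that each support be a union of blocks, and the blocks are the cells of $\mathcal P$. A set is a union of cells of a partition exactly when it is measurable with respect to the $\sigma$-algebra that partition generates. For a finite partition that $\sigma$-algebra consists precisely of the unions of cells, so the ``iff'' is immediate from the definitions. I would also note that blocks containing no support edges impose no constraint. They can be attached to no contrast, because (iii) only restricts the $S_c$. So the compatibility question concerns only the supports themselves.

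Next I would go through the cases in order.

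\emph{(a)} Every subset of the edge set is a union of singletons.

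\emph{(b)} Measurability with respect to parallel-edge classes means all-or-none on each match. A cycle that passes only through multiplicity-one matches uses each such match's single edge, so it qualifies automatically. For a match of multiplicity $k$ taken whole, the $k$ parallel edges share both endpoints. The incidence constraints $q'D=0$ restricted to these edges reduce to the single condition that the coefficients sum to zero at the common worker vertex, and the firm vertex gives the same condition. The local cycle space is therefore $\{q\in\R^k:\sum q_e=0\}$, which has dimension $k-1$ and is spanned by the digons. I would invoke Proposition~\ref{prop:cyclespace} to identify the annihilating contrasts on a support with its cycle space.

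\emph{(c)} A two-period mover has degree two in $G$. In a firm-pair four-cycle formed by two movers between firms $f,f'$, each mover vertex is entered and left along her two edges, so the support contains all of each worker's edges. For cycles from the contraction of Proposition~\ref{prop:contract}, I would use the property recorded there: movers are contracted into firm--firm edges, and expanding such an edge reinstates both of the mover's edges. I would argue by induction on the contraction depth that every worker vertex a lifted cycle touches is traversed through both of her edges. Since contraction only absorbs degree-two workers, these are all of her edges.

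\emph{(d)} A simple cycle, or a closed walk contrast that uses each edge at most once, visits a firm vertex using exactly two of its incident edges per pass. Whole-firm measurability requires the support to contain all edges at each touched firm. For a single cycle this holds only if every touched firm has degree two, or more generally if the cycle exhausts its edges. I would state this carefully as ``uses all edges at each firm it touches,'' which for a simple cycle means degree two. Consequently, any compatible support must be a union of whole firm stars. The support is then a union of full stars of a firm set, and its local cycle space is what remains available, so supports live at the level of groups of whole firms.

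\emph{(e)} The argument for (d) applies verbatim with the two vertex classes exchanged.

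The main obstacle is (c) for contraction cycles. Its proof depends on the exact statement of Proposition~\ref{prop:contract}, which I cannot see. I would first try to extract from it the invariant that contracted edges correspond bijectively to degree-two mover vertices. If recursion can contract vertices of higher degree, I would restrict the claim to the degree-two case. A secondary subtlety is making (d) precise for non-simple closed walks. There I would reduce to edge sets and the degree condition at each firm.
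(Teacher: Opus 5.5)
Your proposal is correct and follows the paper's proof essentially step for step: you read (iii) as $\mathcal P$-measurability, (b) uses the sum-zero local cycle space of $k$ parallel edges, (c) uses that a two-period mover has exactly two edges and both are traversed by the four-cycle or by the lifted $G_F$-cycle, and (d)--(e) use that a cycle uses exactly two edges at each firm it visits. The obstacle you anticipate in (c) does not arise, because Proposition~\ref{prop:contract} contracts only two-period movers in a single step, with no deeper recursion, so the lifted cycles traverse each mover via both of her edges by construction; and restricting (d) to simple cycles makes your argument slightly more careful than the paper's one-line version.
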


\begin{proof}
See Appendix~\ref{app:proofs}.
\end{proof}

Part (c) is the useful one: the packing algorithm of Section~\ref{sec:packing} was designed for capture, but it delivers worker-clustering robustness for free, because its contrasts are built from whole movers rather than individual spells. Part (d) is the binding restriction, and Appendix~H prices it: moving from observation-level to firm-level blocks on a real design takes the contrast count from $45$ to $8$ while leaving capture essentially unchanged, and moving to country-level blocks reaches $\kap=1$ with only $C=3$ supports. The group has order $2^3$, so $1/8$ is the generic orbit bound; for the default two-sided absolute statistic, global sign reversal duplicates every statistic and the smallest full-enumeration $p$-value is actually $2^{1-3}=0.25$. Either way, no level-$5\%$ full-enumeration test exists. This is the capture--granularity trade-off appearing as a clustering question rather than a power question.

Before pricing that trade-off it is worth noticing how much of the clustering
problem never arises, because the leading empirical dependence structure is
annihilated rather than tolerated.

\begin{remark}[Random-effects dependence at a fixed-effect level is free]\label{prop:reann}
Let $B \subseteq \{1,\ldots,n\}$ be a group whose indicator satisfies
$\one_B\in\mathrm{col}(D)$, and suppose the errors contain an additive component
$a\one_B$, where $a$ is an arbitrary scalar random variable. Then $q'\varepsilon$
does not depend on $a$, for every annihilating contrast $q$: because
$q\perp\mathrm{col}(D)$ and $\one_B\in\mathrm{col}(D)$, $q'\one_B = 0$ and
$q'\varepsilon = a\,q'\one_B + q'e = q'e$, an identity rather than an
approximation since $q$ is a function of $(x,D)$ alone. The same argument
applies term by term to any random vector in $\mathrm{col}(D)$: every such
component is annihilated identically.
Consequently, if the remaining idiosyncratic errors $e$ have independent symmetric
coordinates across the whole sample, the observation-level contrast system is exact
--- no coarsening, no loss of capture, and no distributional or magnitude
restriction on the annihilated random effects. This does not require
Assumption~\ref{ass:sym} to hold for the raw errors $\varepsilon = a\one_B + e$
themselves: a shared, nondegenerate $a$ makes the singleton coordinates of
$\varepsilon$ within $B$ mutually dependent, so Assumption~\ref{ass:sym}(i) can
fail for $\varepsilon$. What licenses exactness is that every contrast score
equals $q'\varepsilon = q'e$ identically, and Lemma~\ref{lem:inv}'s proof goes
through verbatim with $\varepsilon$ replaced by $e$, whose singleton coordinates
are independent and symmetric by hypothesis --- $a$ never has to satisfy any
symmetry, independence, or moment condition, because it never appears in the
statistic at all.
\end{remark}

This covers the random-effects (equicorrelated, ``cluster'') structure that
motivates most applied clustering at a level that is itself a fixed effect. In
the two-way design of Section~\ref{sec:cycles} this includes both vertex classes;
it includes matches only if match indicators are explicitly in
$\mathrm{col}(D)$. What
Lemma~\ref{lem:compat}(b)--(e) is really about, then, is the \emph{non-equicorrelated}
residue---serial correlation of a firm's deviations from its own mean, not the
mean itself---and the figures above price exactly that residue.

Conversely, dependence whose joint law is not invariant under the support-level
sign flips lies outside Assumption~\ref{ass:sym}, and no choice of granularity
within that support system repairs it. Interactive fixed effects
$\varepsilon_{it} = \gamma_i f_t + e_{it}$ are the leading example for the
worker-, firm-, and period-level partitions used here: the common factor couples
their blocks, so the granular systems generally lack the required independent
block structure. A single whole-sample block may be centrally symmetric in
special factor models, but it supplies only one sign and therefore no useful
two-sided test at conventional levels. This is a boundary of the method rather
than of the packing, and we report it as such.

\paragraph{The symmetry part.} Central symmetry of each block is the price of
exactness and is close to unavoidable \citep{DutzZhang2026}, but it is weaker
than it looks once the contrasts are antithetic, and in the leading designs it
can be replaced by an exchangeability condition that does not restrict the
shape of the error distribution at all.

\begin{proposition}[Symmetry from antithetic pairing]\label{prop:pairsym}
Let $q$ be a contrast whose support admits a partition into pairs
$\{i_1, j_1\}, \ldots, \{i_m, j_m\}$ with $q_{i_k} = -q_{j_k}$ for every $k$.
Suppose the pair vectors $(\varepsilon_{i_k}, \varepsilon_{j_k})$, $k \le m$,
are mutually independent and each is exchangeable,
$(\varepsilon_{i_k}, \varepsilon_{j_k}) \overset{d}{=} (\varepsilon_{j_k}, \varepsilon_{i_k})$.
Then $q'\varepsilon$ is symmetric about zero. If an annihilating contrast system
(Definition~\ref{def:contrast}) consists of such contrasts with disjoint supports
and the pairs are mutually independent across contrasts as well, the conclusion
of Lemma~\ref{lem:inv}---and hence Theorem~\ref{thm:exact}---holds with
Assumption~\ref{ass:sym}(ii) replaced by this pairwise exchangeability. No
symmetry, and indeed no moment, is required of the marginal law of $\varepsilon$.
\end{proposition}

\begin{proof}
See Appendix~\ref{app:proofs}.
\end{proof}

\begin{corollary}[Four-cycles need only within-period exchangeability]\label{cor:pairsym}
Consider the four-cycle on firms $\{f, f'\}$ and periods $\{t, t'\}$, with
contrast entries $+\tfrac12, -\tfrac12$ on $(f,t), (f',t)$ and
$+\tfrac12, -\tfrac12$ on $(f',t'), (f,t')$. Pair the two observations of period
$t$ and the two of period $t'$. Then the contrast score is exactly symmetric
provided the errors of the two firms are exchangeable within each period
and independent across the two periods. Skewness, excess kurtosis and
heteroskedasticity of arbitrary magnitude are permitted, so long as within a
period the two paired units are not distinguishable in distribution. The digon
analogue pairs the two observations of a repeated match, and requires only that
the match's two error draws be exchangeable over time. Both statements are
Proposition~\ref{prop:pairsym} applied to the stated pairing, the contrast
entries being equal and opposite within each pair by construction.
\end{corollary}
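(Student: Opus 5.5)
The plan is to reduce the corollary to Proposition~\ref{prop:pairsym} by checking its hypotheses for the stated pairing, and then record the digon case as a degenerate instance. First, I will write the four-cycle contrast explicitly as $q=\tfrac12(e_{(f,t)}-e_{(f',t)}+e_{(f',t')}-e_{(f,t')})$. I will confirm $q'D=0$ in the two-way design by direct checks:
\begin{itemize}
\item each firm column meets the support in one $+\tfrac12$ and one $-\tfrac12$;
\item each period column likewise meets it in one $+\tfrac12$ and one $-\tfrac12$.
\end{itemize}
So $q$ is a legitimate annihilating contrast in the sense of Definition~\ref{def:contrast}. This check is not strictly needed for the symmetry claim, but it ties the statement to Lemma~\ref{lem:inv}.

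Next I take the pairing $\{(f,t),(f',t)\}$ and $\{(f',t'),(f,t')\}$. Within each pair the entries are $+\tfrac12$ and $-\tfrac12$, so $q_{i_k}=-q_{j_k}$ holds as Proposition~\ref{prop:pairsym} requires. The hypotheses translate directly:
\begin{itemize}
\item \emph{exchangeability within each period} is exactly $(\varepsilon_{f,t},\varepsilon_{f',t})\overset{d}{=}(\varepsilon_{f',t},\varepsilon_{f,t})$, and the same for $t'$;
\item \emph{independence across the two periods} is the mutual independence of the two pair vectors.
\end{itemize}
Proposition~\ref{prop:pairsym} then gives $q'\varepsilon\overset{d}{=}-q'\varepsilon$. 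For intuition, each summand $\tfrac12(\varepsilon_{i_k}-\varepsilon_{j_k})$ changes sign under the swap, and independent symmetric summands sum to a symmetric variable.

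Since only exchangeability is used, I will remark that no symmetry, moment, or homoskedasticity condition is imposed on the marginals. Skewness and heavy tails are therefore allowed. Heteroskedasticity is allowed across periods, and across firms only insofar as it does not break within-period exchangeability. I state this caveat so that the clause ``not distinguishable in distribution'' is read correctly.

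For the digon, the contrast for a repeated match with two observations $i,j$ is $q=\tfrac1{\sqrt2}(e_i-e_j)$. It annihilates both fixed effects because both edges share the same worker and firm. There is a single pair $\{i,j\}$ with opposite entries, so the cross-pair independence condition is vacuous. Exchangeability $(\varepsilon_i,\varepsilon_j)\overset{d}{=}(\varepsilon_j,\varepsilon_i)$ over time then gives symmetry by Proposition~\ref{prop:pairsym} with $m=1$.

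The only step needing care is the correct reading of the hypotheses: pairs must be formed within periods, not within firms. Pairing within firms would give entries $+\tfrac12$ and $-\tfrac12$ too, but would require a different exchangeability condition. Beyond that, the argument is a direct specialization.
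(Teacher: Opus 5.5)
Your proposal is correct and takes the same route as the paper, whose argument is just Proposition~\ref{prop:pairsym} applied to the within-period pairing, with the digon as the single-pair case $m=1$. Your extra check that $q'D=0$ and your caveat are accurate refinements: within-period exchangeability forces equal marginals for the two paired firms, so heteroskedasticity is permitted across periods but not between the paired units.
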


This matters in practice because the outcomes to which saturated two-way models
are applied---wages, returns, firm growth---have conspicuously skewed residuals,
and a literal reading of Assumption~\ref{ass:sym}(ii) would disqualify them.
Corollary~\ref{cor:pairsym} says the relevant object is not marginal skewness but
the comparability of the units being differenced. That condition is substantive,
not automatic: in the diffuse return-panel check of Appendix~H,
the current deterministic packing produces contrast-score skewness $+0.68$ for
log returns and $-1.08$ for raw returns (standard error $0.37$ at $C=43$).
We therefore report score moments rather than treating cycle differencing itself
as evidence for exchangeability.

\subsection{Local power and Pitman efficiency}\label{sec:power}

The validity in Theorem~\ref{thm:exact} is unconditional on the design's
leverage structure: nothing in it references $\lambda_n$, sample size,
error variances, or moments. The cost of this robustness is efficiency,
which we now quantify. Two separate regularity requirements enter, and it
is worth keeping them apart: one governs the contrast test's own limiting
behaviour, the other is needed only to make the comparison with a
Gaussian oracle meaningful, since in the concentrated regime the oracle is
not a valid test at all.

\begin{theorem}[Local power and Pitman efficiency]\label{thm:power}
Let blocks be singletons: conditional on $(x,D)$ the errors are
independent, symmetric, $\E[\varepsilon_i^2] = \sigma^2$ for all $i$, and
$\sup_i \E[\varepsilon_i^4] \le \bar\kappa < \infty$. Let
$(q_c)_{c \le C_n}$ be annihilating contrast systems with
$C_n \to \infty$, write $\Sigma_n := \sum_{c \le C_n} b_c^2$, and suppose
$\Sigma_n\to\infty$. Consider
local alternatives
$\beta_n = \beta_0 + h \sigma / \Sigma_n^{1/2}$, $h \in \R$ fixed, tested
at level $\alpha$ by the full-enumeration randomization test of
Theorem~\ref{thm:exact}, or by its Monte Carlo version with the number of
sampled flips $B_n\to\infty$, using statistic $T(u)=|\sum_c b_cu_c|$.
\begin{enumerate}[label=(\roman*)]
\item If
\begin{equation*}\tag{P1}\label{eq:P1}
\max_{c \le C_n} b_c^2 \Big/ \Sigma_n \;\longrightarrow\; 0 ,
\end{equation*}
then the power of the test converges to
$\Phi\big(h - z_{1 - \alpha/2}\big) + \Phi\big(-h - z_{1 - \alpha/2}\big)$.
\item If in addition
\begin{equation*}\tag{P2}\label{eq:P2}
\lambda_n \;=\; \max_{i \le n} \tx_i^2 / V_n \;\longrightarrow\; 0 ,
\end{equation*}
then the infeasible oracle test that rejects when
$|\tx'(Y - x\beta_0)| / (\sigma \sqrt{V_n}) > z_{1-\alpha/2}$ has
asymptotic level $\alpha$ and attains the same limiting power along
$\beta_n = \beta_0 + h\sigma/\sqrt{V_n}$. Consequently, if
$\kap := \Sigma_n / V_n$ converges, the Pitman asymptotic relative
efficiency of the contrast test with respect to that oracle equals
$\lim \kap$.
\end{enumerate}
\end{theorem}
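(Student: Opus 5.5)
Under $\beta_n$, $U_c = b_c(\beta_n-\beta_0) + e_c$ with $e_c := v_c'\varepsilon$. Because the supports are disjoint and the blocks are singletons, the $e_c$ are independent and symmetric, and $\E e_c^2=\sigma^2$ since $\|v_c\|=1$. Write $W:=\sum_c b_c U_c/(\sigma\Sigma_n^{1/2})$, so that $T(U)=\sigma\Sigma_n^{1/2}|W|$ and
\[
W = h + \sum_c a_c\, e_c/\sigma,\qquad a_c := b_c/\Sigma_n^{1/2},\qquad \textstyle\sum_c a_c^2=1 .
\]
The proof of (i) has two parts. First, the observed statistic: $\max_c a_c^2\to0$ is exactly (P1). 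The fourth moments are bounded, since $\E e_c^4\le 3\bar\kappa$ by expanding $\E(\sum_i v_{ci}\varepsilon_i)^4$ with $\sum_i v_{ci}^2=1$. So Lyapunov's condition holds, $\sum_c a_c^4\E e_c^4\le 3\bar\kappa\max_c a_c^2\to0$, and $W\Rightarrow N(h,1)$.

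Second, the randomization distribution. Conditional on $U$, $\sum_c s_c b_cU_c/(\sigma\Sigma_n^{1/2})$ is a Rademacher-weighted sum with conditional variance $Q_n:=\sum_c a_c^2 U_c^2/\sigma^2$. The goal is $Q_n\to_p1$ together with a conditional Lindeberg condition $\max_c a_c^2U_c^2/Q_n\to_p0$. Both then follow from the Lindeberg--Feller CLT applied conditionally, which gives that the randomization law of the statistic converges in probability (in Kolmogorov distance) to the law of $|N(0,1)|$. Hence the randomization critical value converges in probability to $z_{1-\alpha/2}$. In the Monte Carlo version, Glivenko--Cantelli over the $B_n$ draws gives the same conclusion. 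Continuity of the limiting law then yields the power limit $\Pp(|N(h,1)|>z_{1-\alpha/2})=\Phi(h-z_{1-\alpha/2})+\Phi(-h-z_{1-\alpha/2})$ by Slutsky. The $1/|\mathcal G|$ ties and discreteness vanish because $C_n\to\infty$.

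The main obstacle is $Q_n\to_p1$ under the alternative. Expand
\[
U_c^2=b_c^2(\beta_n-\beta_0)^2+2b_c(\beta_n-\beta_0)e_c+e_c^2 .
\]
\emph{Signal term.} It contributes $h^2\sum_c a_c^4\le h^2\max_c a_c^2\to0$.
\emph{Cross term.} It has mean zero and variance of order $h^2\sum_c a_c^6\to0$.
\emph{Noise term.} $\sum_c a_c^2 e_c^2/\sigma^2$ has mean $1$ and variance at most $3\bar\kappa\sigma^{-4}\sum_c a_c^4\to0$, so Chebyshev gives the limit $1$.
For the conditional Lindeberg condition, $\max_c a_c^2U_c^2\le \sum_c a_c^2U_c^2\one\{a_c^2U_c^2>\eta\}$. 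The right side tends to zero in probability by a fourth-moment Markov bound, using $\sum_c a_c^4\to0$ and the bounded fourth moments of $U_c$ under the local alternative.

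For (ii), the oracle statistic under $\beta_n=\beta_0+h\sigma/\sqrt{V_n}$ equals
\[
\tx'(Y-x\beta_0)/(\sigma\sqrt{V_n}) = h+\sum_i a_{ni}\varepsilon_i/\sigma ,
\]
using $\tx'x=V_n$ and $\tx'D\gamma=0$. Under (P2), $\max_i a_{ni}^2=\lambda_n\to0$, so the same Lyapunov argument with bounded fourth moments gives $N(h,1)$. This is Proposition~\ref{thm:conv} with $J=0$, heteroskedastic-free since all variances equal $\sigma^2$. So the oracle has level $\alpha$ at $h=0$ and the same power function.

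For the efficiency claim: matching power at a fixed $\beta_n-\beta_0=\delta$ requires $\delta\Sigma_n^{1/2}=\delta' V_n^{1/2}$ for the two tests. The sample-size ratio needed for equal local power is therefore $V_n/\Sigma_n=\kap^{-1}$. Equivalently, the contrast test's effective noncentrality is $h\sqrt{\kap}$ when both are indexed by the oracle's scale, so the Pitman ARE is $\lim\kap$.
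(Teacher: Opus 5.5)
Your proposal is correct and follows the paper's proof essentially step for step: Lyapunov for the observed score, $Q_n\to_p 1$ plus a conditional Lindeberg condition for the Rademacher sum, quantile convergence (with Glivenko--Cantelli for the Monte Carlo version), Slutsky, and the efficacy ratio $\Sigma_n/V_n$; the only differences are cosmetic, such as splitting $Q_n$ into signal, cross and noise terms, and using truncation plus Markov where the paper uses $\max_c w_c^2\le(\sum_c w_c^4)^{1/2}$. Two small fixes: your displayed bound should read $\max_c a_c^2U_c^2\le\eta+\sum_c a_c^2U_c^2\one\{a_c^2U_c^2>\eta\}$ (or use the union bound $\Pp(\max_c a_c^2U_c^2>\eta)\le\eta^{-2}\sum_c a_c^4\E U_c^4$), and Proposition~\ref{thm:conv} assumes i.i.d.\ errors, so what justifies the oracle limit is the direct Lyapunov argument you also give, not that citation.
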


\begin{proof}
See Appendix~\ref{app:proofs}.
\end{proof}

\begin{remark}[Role of the two conditions]\label{rem:twocond}
\eqref{eq:P1} is a condition on the contrast loadings $b_c = v_c'x$;
\eqref{eq:P2} is a condition on the observation-level leverage of $\tx$.
Neither implies the other. In particular a dominant observation may sit
inside a long cycle without dominating any single loading, so
\eqref{eq:P1} can hold while \eqref{eq:P2} fails; in that case part~(i)
still describes the contrast test's power, but the Gaussian oracle
over-rejects by Proposition~\ref{ex:rade} and the ratio $\kap$ has no
efficiency interpretation---there is nothing legitimate to be efficient
relative to. This is why $\kap$ is a \emph{diffuse-regime} benchmark even
though the test it describes is exact everywhere.
\end{remark}

\begin{remark}[When (P1) fails]\label{rem:p1}
Condition~\eqref{eq:P1} can fail for the same reason \eqref{eq:lambda}
does: a dominant observation induces a dominant contrast.
Theorem~\ref{thm:exact} is unaffected---finite-sample validity of both its
full-enumeration and plus-one Monte Carlo versions is exact regardless---and
the randomization critical value automatically adapts to
the non-Gaussian law of the dominant contrast score. The Gaussian power
formula and the $\kap$ efficiency interpretation then cease to apply; neither
Theorem~\ref{thm:power} nor Proposition~\ref{ex:rade} supplies a universal
ordering of power in that regime. Section~\ref{sec:numerics} illustrates both
regimes.
\end{remark}

\subsection{Efficiency under heteroskedasticity}\label{sec:het}

Theorem~\ref{thm:power} imposes a common error variance, whereas Theorem~\ref{thm:exact} requires none, and the gap matters for how $\kap$ should be read. Appendix~\ref{app:het} of the supplement develops the general case: with $\Omega = \mathrm{diag}(\sigma_1^2,\ldots,\sigma_n^2)$ and $\omega_c^2 := v_c'\Omega v_c$, the efficacy of a weighted test $|\sum_c a_cu_c|$ is maximized at $a_c \propto b_c/\omega_c^2$, with maximum $\mathcal E_n := \sum_c b_c^2/\omega_c^2$; the benchmark becomes the GLS oracle with efficacy $\mathcal V_\Omega$, and $\kap^{\mathrm{het}} := \mathcal E_n/\mathcal V_\Omega$ is the corresponding Pitman efficiency, equal to $\kap$ under homoskedasticity. Two consequences are used below. The empirical $\kap$ of Section~\ref{sec:kappa} is the homoskedastic benchmark, not an efficiency claim valid under the heteroskedasticity Theorem~\ref{thm:exact} tolerates, since $\kap$ and $\kap^{\mathrm{het}}$ are not ordered in general. And the packing objective changes: under heteroskedasticity the right criterion is $\mathcal E_n$, which favours cycles through low-variance observations. Our algorithm targets $\kap$; the weighted problem is open. Validity is untouched throughout, since Theorem~\ref{thm:exact} holds for every fixed statistic.

\section{Two-way designs and the cycle space}\label{sec:cycles}

This section makes precise the claim from the introduction that drives everything after it: in a
two-way design the contrasts that annihilate the fixed effects are exactly the cycles of the
observations' bipartite mobility graph, so the packing problem of the next two sections is a graph
problem, not an abstract linear-algebra one. Let the two-way design have $A$-units (workers) $\mathcal A$ and $B$-units (firms) $\mathcal B$, with each observation $i$ an edge $e_i = (a_i, b_i)$ of the bipartite observation multigraph $G = (\mathcal A \cup \mathcal B, E)$, $|E| = n$; parallel edges are repeated matches. The FE design matrix $D$ has a column per vertex, with $D_{e,v} = \one\{v \in e\}$.

In words, parts (a)--(b) below say that the contrasts annihilating two-way fixed effects are exactly the graph's cycles: alternating $\pm1$ vectors around closed walks, together with digons for repeated matches.

\begin{proposition}[Annihilating contrasts are the cycle space]\label{prop:cyclespace}
Let $\mathcal Z := \{q \in \R^n : q'D = 0\}$.
\begin{enumerate}[label=(\alph*)]
\item $\mathcal Z$ equals the circulation (cycle) space of $G$ under any orientation of its edges from $\mathcal A$ to $\mathcal B$; in particular $\dim \mathcal Z = n - |\mathcal A| - |\mathcal B| + \#\{\text{components of } G\}$.
\item For any closed walk $e^{(1)}, e^{(2)}, \ldots, e^{(2L)}$ in $G$ that traverses distinct edges, the alternating vector $q$ with $q_{e^{(k)}} = (-1)^{k+1}$ and zeros elsewhere lies in $\mathcal Z$; for a pair of parallel edges $e, e'$, the digon vector $q_e = 1, q_{e'} = -1$ lies in $\mathcal Z$; and vectors of these two types span $\mathcal Z$.
\item $V_n = x' M_D x = \| \Pi_{\mathcal Z}\, x \|^2$, where $\Pi_{\mathcal Z}$ is the orthogonal projection onto $\mathcal Z$.
\end{enumerate}
\end{proposition}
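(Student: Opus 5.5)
The plan is to treat $D$ as the unsigned incidence matrix of $G$ and reduce everything to the signed incidence matrix through a diagonal change of sign on the columns. Orient every edge from its $\mathcal A$-endpoint to its $\mathcal B$-endpoint and let $N\in\R^{n\times(|\mathcal A|+|\mathcal B|)}$ be the signed edge--vertex incidence matrix, with $N_{e,a}=-1$ and $N_{e,b}=+1$ for $e=(a,b)$. Since $G$ is bipartite, $N = D S$ with $S=\mathrm{diag}(-1 \text{ on } \mathcal A,\,+1 \text{ on } \mathcal B)$. Hence $q'D=0$ if and only if $q'N=0$. The condition $q'N=0$ is precisely flow conservation, $\sum_{e\ni v}\pm q_e=0$ at every vertex $v$. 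So $\mathcal Z$ is the circulation space, which gives the first claim of (a).

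For the dimension, I would use $\dim\mathcal Z = n - \mathrm{rank}(N)$ together with the classical fact that the signed incidence matrix of a graph has rank $|V|-\#\{\text{components}\}$. This holds because $Nw=0$ forces $w$ to be constant on each component, so the null space of $N$ acting on vertex vectors has dimension equal to the number of components. Equivalently, $\mathrm{rank}(D)=|V|-\#\text{components}$ for a bipartite graph, which is also the familiar count of identified two-way fixed effects.

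For (b), I would first verify membership. Take a closed walk $e^{(1)},\dots,e^{(2L)}$ with distinct edges. It has even length because $G$ is bipartite, and consecutive edges share a vertex. Each visit of the walk to a vertex $v$ uses one edge entering with sign $(-1)^{k+1}$ and the next leaving with sign $(-1)^{k}$, both incident to $v$. Their contributions to $(q'D)_v$ therefore cancel. The wrap-around visit from $e^{(2L)}$ to $e^{(1)}$ also cancels, since $2L$ is even. Summing over visits gives $q'D=0$. The digon is the case $L=1$ with two parallel edges.

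For spanning, I would take a spanning forest $F$ of $G$. For each non-forest edge $e$, the fundamental cycle in $F\cup\{e\}$ is either a digon (when $e$ is parallel to a forest edge) or a simple even cycle, which is a walk of the stated type. These fundamental vectors are linearly independent, because each contains its own non-forest edge and no other non-forest edge. There are $n-(|V|-\#\text{components})$ of them, which equals $\dim\mathcal Z$ by (a), so they form a basis. The only bookkeeping point is to keep track of the signs under the $S$ rescaling: the alternating $\pm1$ pattern in $D$-coordinates is exactly the orientation-consistent circulation in $N$-coordinates.

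Part (c) is immediate from the column-space identity. By definition $\mathcal Z=\mathrm{col}(D)^\perp$, so $\Pi_{\mathcal Z}=I-P_D=M_D$, where $P_D = D(D'D)^-D'$ is the orthogonal projector onto $\mathrm{col}(D)$; this holds for any generalized inverse. Then $V_n=x'M_Dx=x'M_D'M_Dx=\|M_Dx\|^2=\|\Pi_{\mathcal Z}x\|^2$ by idempotence and symmetry of $M_D$.

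The main obstacle is only the careful rank and dimension count in (a) and making the fundamental-cycle basis rigorous in a multigraph. Both are standard algebraic graph theory, so I would cite the rank-of-incidence-matrix fact rather than reprove it at length.
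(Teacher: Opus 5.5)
Your proposal is correct and follows essentially the same route as the paper: identify $\ker(D')$ with the kernel of the signed incidence matrix through a constant sign change per column, check the alternating vectors vertex by vertex, obtain spanning from the fundamental cycles of a spanning forest, and read (c) off $M_D=\Pi_{\mathcal Z}$. The only difference is that you prove the rank count and the linear independence of the fundamental cycles, which the paper instead cites as standard graph theory.
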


\begin{proof}
See Appendix~\ref{app:proofs}.
\end{proof}

Proposition~\ref{prop:cyclespace}(c) is worth pausing on: the entire within variation of the treatment lives in the cycle space, so cycle contrasts do not merely access the identifying variation---they exhaust it. What an exact test cannot use is only what edge-disjointness (needed for Lemma~\ref{lem:inv} with singleton blocks) forces us to leave unpacked. This motivates:

\begin{definition}[Cycle capture ratio]\label{def:kappa}
For an edge-disjoint family $\mathcal C$ of cycles (including digons) with unit-normalized contrasts $v_c$, the \emph{cycle capture ratio} is $\kap := \sum_{c \in \mathcal C} (v_c'x)^2 / V_n \in [0, 1]$. By Theorem~\ref{thm:power}, $\kap$ is the diffuse-benchmark efficiency of the associated exact test, and $\kap^{-1/2}$ its standard-error price.
\end{definition}

Since edge-disjoint contrasts are orthonormal and lie in $\mathcal Z$, $\sum_c (v_c'x)^2 = \sum_c (v_c'\Pi_{\mathcal Z} x)^2 \le \|\Pi_{\mathcal Z} x\|^2 = V_n$ by Bessel, so $\kap \le 1$ always, with equality iff the packed contrasts span the part of $\mathcal Z$ carrying $\Pi_{\mathcal Z}x$.

\subsection{Support systems}\label{sec:packing}\label{sec:supports}

Before packing anything it is worth being clear about what the exactness
theorem actually requires, because the answer is weaker than the
construction that follows and the gap between the two is where the
design problem lives. Theorem~\ref{thm:exact} asks of a contrast system
only that each $q_c$ annihilate $D$ and that the supports be pairwise
disjoint. It does not ask that the contrasts be cycle-shaped, nor that
their entries be $\pm 1$. The general object is therefore the following.

\begin{definition}[Support systems and generalized capture]\label{def:supp}
A \emph{support system} is a family of pairwise disjoint edge sets
$A_1, \ldots, A_C \subseteq E$ with $\mathcal Z_{A_j} := \mathcal Z \cap
\R^{A_j} \ne \{0\}$. Its \emph{capture} is
\[
\kap(A_1, \ldots, A_C) \;:=\; \frac{1}{V_n} \sum_{j=1}^{C}
\big\| \Pi_{\mathcal Z_{A_j}} x \big\|^2 ,
\]
attained on every positive-capture support by the unit \emph{projection contrast}
$v_j = \Pi_{\mathcal Z_{A_j}} x \,/\,\|\Pi_{\mathcal Z_{A_j}} x\|$.
Supports for which $\Pi_{\mathcal Z_{A_j}}x=0$ contribute zero and are omitted
when the contrast system is formed.
\end{definition}

An edge-disjoint cycle family $\mathcal C$ is the special case in which each $A_c$ is the support of a single cycle and the projection contrast is replaced by the alternating $\pm1$ vector; Definition~\ref{def:kappa} is then Definition~\ref{def:supp} evaluated at that restricted family, and Proposition~\ref{prop:dom} below shows both restrictions can only lose capture.

Two facts proved in Section~\ref{sec:overlap} fix the shape of the problem in advance. Coarser is better for capture: merging two supports weakly increases it (Proposition~\ref{prop:dom}(c)), and one support per biconnected block attains $\kap = 1$ exactly (Theorem~\ref{thm:blockproj}), so capture alone is never the binding constraint. Finer is better for granularity: $C$ supports give a group of size $2^C$, but for the default two-sided statistic global sign reversal duplicates every orbit value, so the smallest full-enumeration $p$-value is $2^{1-C}$ and level $\alpha$ requires $2^{C-1}\ge1/\alpha$. The negligibility condition \eqref{eq:P1} also demands that no single support dominate $\sum_j\|\Pi_{\mathcal Z_{A_j}}x\|^2$, which merging pushes against directly. The design problem is thus a \emph{capture--granularity trade-off} rather than a maximization. Proposition~\ref{prop:orbitmax}(c) shows that enlarging an adapted diagonal flip group beyond one effective flip per block requires restricting its orbit span and therefore weakly lowers its capture ceiling; the loss is strict for generic $x$, not for every realized treatment. Where to sit on that trade-off is the splitting problem of Section~\ref{sec:residual}, which we do not solve.

What we do offer is a good, interpretable and cheap point on it. Cycle packing generates support systems that are structurally fine --- many small supports, each a closed mobility loop with a transparent difference-in-differences reading --- so \eqref{eq:P1} is easy to compute and often holds, though a treatment can still concentrate its loading on one small support. The construction uses $(x, G)$ alone and runs in near-linear time for the steps carrying most of the capture. Its output can be upgraded for free by replacing each $\pm1$ cycle vector with the projection contrast on the same support (Proposition~\ref{prop:dom}(b)), and we report both.

\subsection{The packing algorithm}\label{sec:packalg}

The efficiency of the exact test is now an optimization problem: choose an edge-disjoint cycle family maximizing $\sum_c (v_c'x)^2$, using $(x, G)$ only. Weighted edge-disjoint cycle packing is NP-hard in general, but two-way panels have structure that a generic greedy ignores at great cost (Section~\ref{sec:numerics} quantifies the cost at a factor $2$--$3.3$ in $\kap$).

\paragraph{Digons first.} Every pair of parallel edges (repeated match) yields a digon contrast $(x_e - x_{e'})/\sqrt 2$; $k$ parallel edges yield $\lfloor k/2 \rfloor$ disjoint digons. For treatments varying within match, digons alone can capture most of $V_n$; for match-level treatments they capture exactly zero and everything rides on genuine cycles.

\paragraph{Firm-pair four-cycles.} Call a worker a \emph{two-period mover} if she contributes exactly two edges, to distinct firms $\{f, f'\}$; her \emph{worker contrast} is $w := \tx_{e_1} - \tx_{e_2}$ (orientation fixed by an ordering of firms). Two movers sharing the same unordered firm pair form a four-cycle with contrast value $(w_i - w_j)/2$.

\begin{remark}[Automatic disjointness]\label{prop:auto}
In any family of four-cycles formed by pairing two-period movers within their own firm pairs, with each mover used in at most one four-cycle, all cycles are edge-disjoint---across firm pairs as well as within them. This is immediate: a four-cycle of firm pair $\{f, f'\}$ uses exactly the four edges of its two movers, each two-period mover's edges belong to her unique firm pair, and she is used at most once, so no edge appears in two cycles. Cycles of distinct firm pairs may share firm vertices, but edge-disjointness---which is what Lemma~\ref{lem:inv} requires---is automatic.
\end{remark}

Within a firm pair with worker contrasts $w_1 \le w_2 \le \cdots \le w_{2m}$ (an odd worker is set aside), the packing chooses a perfect matching maximizing $\sum (\text{paired differences})^2$: a standard rearrangement problem.

\begin{remark}[Nested pairing is optimal]\label{lem:nested}
Among all perfect matchings of $w_1 \le \cdots \le w_{2m}$, the nested (extreme) matching $\{(w_i, w_{2m+1-i})\}_{i \le m}$ maximizes $\sum_{\text{pairs}} (w_{\text{hi}} - w_{\text{lo}})^2$; this is the rearrangement inequality in its extreme-pairing form. Take any matching and any two of its pairs with values $p \le q \le r \le s$ in sorted order. The three possible pairings of these four values compare as
\[
\begin{aligned}
\underbrace{(s-p)^2 + (r-q)^2}_{\text{nested}} - \underbrace{(r-p)^2 + (s-q)^2}_{\text{crossing}}
  &= 2(s-r)(q-p) \ge 0, \\[2pt]
\text{nested} - \underbrace{(q-p)^2 + (s-r)^2}_{\text{sequential}}
  &= 2(s-q)(r-p) \ge 0,
\end{aligned}
\]
by direct expansion, so nesting dominates both alternatives. Applying this to the two pairs containing $w_1$ and $w_{2m}$ shows some optimum contains $(w_1,w_{2m})$; removing that pair and repeating the argument on $w_2,\ldots,w_{2m-1}$ yields the extreme matching.
\end{remark}

\begin{proposition}[Contraction principle]\label{prop:contract}
Remove stayers' edges via digons. Contract each remaining two-period mover $w$ with firms $\{f, f'\}$ into a single edge $\{f, f'\}$ of a \emph{firm multigraph} $G_F$, carrying the value $w$'s worker contrast. Then: (a) edge-disjoint cycle families of $G_F$ correspond bijectively to edge-disjoint bipartite cycle families through those movers, a cycle through $L$ firms corresponding to a bipartite $2L$-cycle with contrast value $\big(\sum_{k} \pm\, w_k\big)/\sqrt{2L}$ for the appropriate alternating signs; (b) parallel edges of $G_F$ are precisely firm-pair mover pairs, whose digons in $G_F$ are the four-cycles of Remark~\ref{prop:auto}. Consequently, the digon-first algorithm applies recursively: digons in $G$, then digons in $G_F$ (four-cycles in $G$), then longer cycles of $G_F$.
\end{proposition}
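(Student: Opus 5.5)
The plan is to make the correspondence explicit through the local structure of a two-period mover in $G$. Such a mover $w$ is a worker vertex of degree exactly two, with edges $e_1=(w,f)$ and $e_2=(w,f')$ where $f\neq f'$. Every simple closed walk in $G$ that enters $w$ must leave through the other edge. So a cycle of $G$ that uses only mover edges (stayers' edges having been consumed by digons) either contains both $e_1,e_2$ or neither. The map $\Phi$ sending the path $f\!-\!w\!-\!f'$ to the edge $\{f,f'\}$ of $G_F$ is therefore well defined on such cycles. A bipartite cycle through movers $w_1,\dots,w_L$ and firms $f_1,\dots,f_L$ (in cyclic order $f_1,w_1,f_2,w_2,\dots$) is sent to the closed walk $f_1,f_2,\dots,f_L,f_1$ in $G_F$. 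The inverse map expands each $G_F$ edge back into its two-edge path; this works because each $G_F$ edge is a distinct mover, so parallel $G_F$ edges give distinct workers.

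First I will check that $\Phi$ and its inverse preserve the two properties that matter. Distinct edges along the cycle are preserved because the mover edges of $G$ partition into the pairs $\{e_1,e_2\}$, one pair per $G_F$ edge. Edge-disjointness of families is preserved for the same reason: two bipartite cycles share an edge iff they share a mover iff their images share a $G_F$ edge. Together these give the bijection in (a). A cycle through $L$ firms has length $L$ in $G_F$ and length $2L$ in $G$.

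Next I will compute the contrast. The alternating vector of Proposition~\ref{prop:cyclespace}(b) on the $2L$-cycle assigns $+1,-1$ to the two edges of each mover in traversal order. Mover $k$'s pair therefore contributes $\pm(\tx_{e_1}-\tx_{e_2})=\pm w_k$. The sign is $+$ if the traversal direction through $w_k$ agrees with the firm ordering used to orient $w_k$, and $-$ otherwise; this is the "appropriate alternating signs", equivalently the signs of an oriented cycle in $G_F$. Dividing by $\|q\|=\sqrt{2L}$ gives $(\sum_k\pm w_k)/\sqrt{2L}$. I also need $x$ versus $\tx$: $q\in\mathcal Z$ is orthogonal to $\mathrm{col}(D)$, so $q'x=q'\tx$, and the worker contrast may be taken with either.

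For (b), a digon of $G_F$ is a pair of parallel edges, which is two movers with the same unordered firm pair. The case $L=2$ gives the four-cycle of Remark~\ref{prop:auto} with value $(w_i-w_j)/2$, matching $\sqrt{2L}=2$.

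The recursive statement then follows. Packing digons in $G_F$ (nested pairing, Remark~\ref{lem:nested}) and afterwards longer cycles on the remaining $G_F$ edges yields, by (a), an edge-disjoint family in $G$. This family is also disjoint from the stage-one digons, which use only stayer or repeated-match edges removed beforehand.

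The main obstacle is the sign bookkeeping when a $G_F$ cycle is not simple, i.e. revisits a firm vertex. I would restrict to closed trails with distinct edges, as in Proposition~\ref{prop:cyclespace}(b). The alternation is determined locally at each mover, so it is unaffected by firm repetitions.
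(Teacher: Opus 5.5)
Your proposal is correct and follows essentially the same route as the paper's proof: you lift each $G_F$ cycle through the two-edge mover paths, project back using the fact that a two-period mover has degree two, and transfer edge-disjointness through the partition of mover edges by mover. You then read off the contrast as a signed sum of worker contrasts normalized by $\sqrt{2L}$, and obtain (b) as the case $L=2$. Your extra care with the sign convention, the identity $q'x=q'\tx$, and closed trails that revisit a firm makes explicit what the paper leaves implicit.
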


\begin{proof}
See Appendix~\ref{app:proofs}.
\end{proof}

In words: once digons and four-cycles are peeled off, what remains of the mobility graph can be redrawn as a smaller graph on firms alone, with each surviving two-period mover collapsed into a single edge carrying her worker contrast. Finding longer cycles in the original data is then exactly finding cycles in this smaller firm-level graph, so the same digon-first packing routine applies again, one level up, rather than requiring a new algorithm at each stage.

The full algorithm is: (1) pair parallel edges into digons; (2) build $G_F$ from two-period movers; (3) within each firm pair, sort worker contrasts and apply the nested matching of Remark~\ref{lem:nested}; (4) on the residual simple part of $G_F$ (and any workers with three or more edges, kept in bipartite form), extract remaining cycles greedily after peeling degree-one vertices, prioritizing short cycles. All steps use $(x, G)$ only, preserving Theorem~\ref{thm:exact}; steps (1)--(3) run in $O(n \log n)$.

\begin{remark}[Reproducible packing]\label{rem:repropack}
Every admissible $v_c$ lies in $\ker(D')$, hence $M_Dv_c=v_c$ and
$v_c'\tx=v_c'M_Dx=v_c'x$ exactly. Candidate loadings and their ranking can
therefore be computed from the raw design rather than from the last digits of
an iterative fixed-effect solve. The implementations use canonical label order
and deterministic tie rules. A cross-language harness feeds fixed CSV designs
to  \texttt{PanelAdequacy.jl} and \texttt{panelcert}, checks all numerical
diagnostics to $10^{-12}$, and checks packed labelled supports exactly; this
harness runs in both packages' continuous integration.
\end{remark}

\section{The flip-group limit and full capture without overlap}\label{sec:overlap}

Edge-disjointness leaves the gap between $\kap = 0.51$ and $1$ on the table, and the natural question is whether an exact test can use \emph{overlapping} cycles. The problem divides into a part that is provably closed within the class of diagonal sign-flip randomizations and a part that dissolves on inspection. First, the group of edgewise sign patterns preserving the whole cycle space is exactly one effective flip per \emph{block} (biconnected component) of the observation multigraph (Lemma~\ref{lem:maxgroup}; Appendix~\ref{app:maxgroup} delimits what that does and does not rule out). A group adapted to a smaller contrast system can be larger only by spanning fewer cycle directions, which weakly lowers its capture ceiling and lowers realized capture for generic $x$ (Proposition~\ref{prop:orbitmax}). Second --- the substantive point --- the exactness theorem never required cycle-shaped contrasts, only disjoint supports. Replacing each packed $\pm1$ cycle vector by the projection of $x$ onto the local cycle space of its support weakly increases every term of $\kap$ (Proposition~\ref{prop:dom}), and one projection contrast per block attains $\kap=1$ exactly (Theorem~\ref{thm:blockproj}). The binding constraint is therefore not overlap but \emph{granularity}, through condition \eqref{eq:P1} and the size of the randomization group.

\subsection{The flip group is maximal}\label{sec:flipmax}

Appendix~\ref{app:maxgroup} gives the full argument. Every cycle lies in one biconnected block,
so $\mathcal Z=\bigoplus_b\mathcal Z_b$ and
$V_n=\sum_b\|\Pi_{\mathcal Z_b}x\|^2$. Lemma~\ref{lem:maxgroup} shows that
$\mathrm{diag}(\eta)\mathcal Z\subseteq\mathcal Z$ iff $\eta$ is constant on
each nontrivial block: preserving all of $\mathcal Z$ permits one flip per
block. For a smaller orbit span $W$, Proposition~\ref{prop:orbitmax} permits a finer action but
bounds capture by $\|\Pi_Wx\|^2/V_n$. Thus more diagonal flips require a smaller
capture ceiling, although a proper $W$ can still capture exceptional treatments
fully. These claims concern edgewise diagonal flips only, not non-diagonal,
conditional, or nongroup procedures.

\subsection{Projection contrasts dominate}

Section~\ref{sec:supports} introduced support systems and the projection
contrasts $v_j \propto \Pi_{\mathcal Z_{A_j}} x$ of
Definition~\ref{def:supp}, on the grounds that Theorem~\ref{thm:exact}
requires only annihilation and disjointness and never asked for $\pm 1$
entries. We now prove the two claims made there: that the enlargement is
free, and that it is monotone under merging.

\begin{proposition}[Dominance]\label{prop:dom}
\emph{(a)} For any support system, the contrasts $v_j$ of
Definition~\ref{def:supp} satisfy the hypotheses of
Theorem~\ref{thm:exact}: each $v_j \in \mathcal Z$, supports are
disjoint, and each score $v_j'\varepsilon$ is symmetric under independent
symmetric errors with arbitrary heteroskedasticity. The associated
sign-flip test is exact at every sample size.
\emph{(b)} If $\mathcal C$ is an edge-disjoint cycle family and
$A_c = \mathrm{supp}(z_c)$, then for each $c$,
$(z_c'x)^2/\|z_c\|^2 \le \|\Pi_{\mathcal Z_{A_c}} x\|^2$, so
$\kap(\mathcal C) \le \kap(A_1, \ldots, A_C)$: replacing each packed
cycle vector by the local projection weakly increases capture, term by
term, at no cost in validity or in the number of contrasts.
\emph{(c)} Merging two supports, $A' = A_1 \cup A_2$, weakly increases
total capture:
$\|\Pi_{\mathcal Z_{A'}} x\|^2 \ge \|\Pi_{\mathcal Z_{A_1}} x\|^2 +
\|\Pi_{\mathcal Z_{A_2}} x\|^2$, at the cost of one contrast.
\end{proposition}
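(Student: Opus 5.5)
The plan is to treat all three parts as consequences of elementary facts about orthogonal projection onto nested subspaces, combined with Lemma~\ref{lem:inv} and Theorem~\ref{thm:exact}.

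\emph{Part (a).} The local space $\mathcal Z_{A_j}=\mathcal Z\cap\R^{A_j}$ is a linear subspace of $\mathcal Z$. Hence $\Pi_{\mathcal Z_{A_j}}x\in\mathcal Z$, and so $v_j'D=0$. Its support lies in $A_j$, so the supports of the $v_j$ are pairwise disjoint because the $A_j$ are. Each $v_j$ is a function of $(x,D)$ and of the support system, which is chosen from the design alone, so Definition~\ref{def:contrast}(i) holds. I will note one subtlety: the support of $v_j$ may be a strict subset of $A_j$, but disjointness is inherited, which is all that is needed. With singleton blocks, Assumption~\ref{ass:sym}(iii) holds trivially. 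Lemma~\ref{lem:inv} then gives joint sign-flip invariance, which contains symmetry of each $v_j'\varepsilon$, and Theorem~\ref{thm:exact} gives exactness. No variance condition enters anywhere.

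\emph{Part (b).} The cycle vector $z_c$ lies in $\mathcal Z$ by Proposition~\ref{prop:cyclespace}(b) and is supported on $A_c$, so $z_c\in\mathcal Z_{A_c}$. The unit vector $u=z_c/\|z_c\|$ therefore lies in $\mathcal Z_{A_c}$, and
\[
u'x=u'\Pi_{\mathcal Z_{A_c}}x,
\]
so Cauchy--Schwarz gives $(u'x)^2\le\|\Pi_{\mathcal Z_{A_c}}x\|^2$. The $A_c$ are disjoint because the cycles are edge-disjoint. Summing over $c$ and dividing by $V_n$ yields the capture inequality, and the number of contrasts is unchanged. (Supports with zero projection also have zero cycle loading, so dropping them loses nothing.)

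\emph{Part (c).} Let $A_1$ and $A_2$ be disjoint. The subspaces $\mathcal Z_{A_1}$ and $\mathcal Z_{A_2}$ have disjoint coordinate supports, so they are orthogonal. Both are contained in $\mathcal Z_{A'}$, so their orthogonal direct sum $W:=\mathcal Z_{A_1}\oplus\mathcal Z_{A_2}$ is a subspace of $\mathcal Z_{A'}$. Projection onto a larger subspace has weakly larger norm, and projection onto an orthogonal sum splits as a sum of projections. This gives
\[
\|\Pi_{\mathcal Z_{A'}}x\|^2\ge\|\Pi_W x\|^2=\|\Pi_{\mathcal Z_{A_1}}x\|^2+\|\Pi_{\mathcal Z_{A_2}}x\|^2 .
\]
The merged support replaces two contrasts with one, which is the stated cost.

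I expect no step to be a real obstacle. The main point needing care is part (a): checking that measurability and disjointness survive the passage from $A_j$ to $\mathrm{supp}(v_j)$, and that the symmetry of each score is already implied by Lemma~\ref{lem:inv} under independent symmetric singleton errors.
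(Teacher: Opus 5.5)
Your proposal is correct and follows essentially the same route as the paper: for (a), membership of $v_j$ in $\mathcal Z$ plus disjoint supports; for (b), the Cauchy--Schwarz (maximum-over-unit-vectors) characterization of $\|\Pi_{\mathcal Z_{A_c}}x\|$; and for (c), monotonicity of projection norms over the orthogonal sum $\mathcal Z_{A_1}\oplus\mathcal Z_{A_2}\subseteq\mathcal Z_{A'}$. The only cosmetic difference is in (a), where you get symmetry of each score from Lemma~\ref{lem:inv} with singleton blocks rather than directly from the fact that a linear combination of independent symmetric variables is symmetric, which amounts to the same argument.
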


\begin{proof}
See Appendix~\ref{app:proofs}.
\end{proof}

Part (b) upgrades any packing for free. Part (c) absorbs overlap by merging a
conflicting cycle cluster into one support; on the theta graph this captures its
full two-dimensional local variation, where a cycle packing reaches one
dimension.

\subsection{Full capture at block granularity}

Taking the merge logic to its natural endpoint:

\begin{theorem}[Block-projection test]\label{thm:blockproj}
Let $A_b = E_b$, $b = 1, \ldots, B$, be the nontrivial blocks of $G$, define
$\mathcal I_x:=\{b:\Pi_{\mathcal Z_b}x\ne0\}$ and $B_x:=|\mathcal I_x|$, and
for $b\in\mathcal I_x$ let
$v_b = \Pi_{\mathcal Z_b} x / \|\Pi_{\mathcal Z_b} x\|$. Then:
\emph{(a)} the sign-flip test with contrasts $(v_b)$ is exact at every
sample size under Assumption~\ref{ass:sym} with singleton blocks (independent,
symmetric, arbitrarily heteroskedastic errors);
\emph{(b)} its capture is $\kap = 1$: by the orthogonal block decomposition of Appendix~\ref{app:maxgroup},
$\sum_{b\in\mathcal I_x} (v_b'x)^2 = \sum_{b=1}^B \|\Pi_{\mathcal Z_b} x\|^2 = V_n$;
\emph{(c)} its effective score-flip group is $\{\pm 1\}^{B_x}$, induced by the
block-constant group $\{\pm1\}^B$; inactive flips act trivially. Lemma~\ref{lem:maxgroup} makes
$\{\pm1\}^B$ the largest diagonal edgewise group preserving all of $\mathcal Z$,
and Bessel gives $\kap\le1$ for every orthonormal system in $\mathcal Z$.
Larger exact diagonal groups can arise only for a smaller orbit span, with the
capture qualification of Proposition~\ref{prop:orbitmax}; no claim is made beyond the diagonal
class.
\end{theorem}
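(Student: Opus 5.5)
The plan is to prove the three parts in order. Each is a short reduction to earlier results plus one graph-theoretic fact, the orthogonal block decomposition $\mathcal Z=\bigoplus_b\mathcal Z_b$, which I will establish first.

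\emph{Block decomposition.} Every cycle of $G$ lies inside a single biconnected block. This includes digons, since two parallel edges are biconnected together. By Proposition~\ref{prop:cyclespace}(b), alternating cycle vectors and digon vectors span $\mathcal Z$, so $\mathcal Z=\sum_b\mathcal Z_{E_b}$, where $\mathcal Z_{E_b}=\mathcal Z\cap\R^{E_b}$. The blocks partition the edge set, so the coordinate subspaces $\R^{E_b}$ are mutually orthogonal and the sum is orthogonal and direct. I write $\mathcal Z_b:=\mathcal Z_{E_b}$. Then $\Pi_{\mathcal Z}=\sum_b\Pi_{\mathcal Z_b}$, and Proposition~\ref{prop:cyclespace}(c) gives $V_n=\|\Pi_{\mathcal Z}x\|^2=\sum_b\|\Pi_{\mathcal Z_b}x\|^2$. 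Trivial blocks, such as bridges, have $\mathcal Z_b=\{0\}$ and contribute nothing.

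\emph{Part (a).} The supports $E_b$, $b\in\mathcal I_x$, are pairwise disjoint, and each $\mathcal Z_b\ne\{0\}$, so they form a support system in the sense of Definition~\ref{def:supp}. Each $v_b$ is a function of $(x,D)$ only, because the blocks are determined by $G$ and hence by $D$. Each $v_b$ lies in $\mathcal Z_b\subseteq\mathcal Z$, so $v_b'D=0$, and with singleton blocks condition (iii) of Assumption~\ref{ass:sym} holds trivially. Proposition~\ref{prop:dom}(a), or directly Lemma~\ref{lem:inv} together with Theorem~\ref{thm:exact}, then gives exactness.

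\emph{Part (b).} By definition of $v_b$, $v_b'x=\|\Pi_{\mathcal Z_b}x\|$. Hence $\sum_{b\in\mathcal I_x}(v_b'x)^2=\sum_b\|\Pi_{\mathcal Z_b}x\|^2$, because the omitted blocks contribute zero, and this equals $V_n$ by the decomposition.

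\emph{Part (c).} A block-constant sign vector $\eta\in\{\pm1\}^B$ acts on the edges by $\mathrm{diag}(\eta)$. This multiplies $v_b$, and hence $U_b$, by $\eta_b$, so the induced action on the scores is $\{\pm1\}^{B_x}$. Flips on blocks outside $\mathcal I_x$, or on trivial blocks, leave every score unchanged. Maximality is then quoted from Lemma~\ref{lem:maxgroup}. The bound $\kap\le1$ for any orthonormal system in $\mathcal Z$ is Bessel's inequality applied to $\Pi_{\mathcal Z}x$, as after Definition~\ref{def:kappa}. The statement about larger groups is a pointer to Proposition~\ref{prop:orbitmax}, and I will only cite it.

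The main obstacle is the block fact: that every element of $\mathcal Z$ splits blockwise, and not just that the spanning cycles do. The spanning argument above handles this cleanly once I check that a closed walk with distinct edges decomposes into simple cycles, each lying in one block. That holds because a simple cycle is 2-connected, and every edge of a multigraph belongs to exactly one block.
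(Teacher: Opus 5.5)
Your proposal is correct and follows the paper's proof essentially step for step: (a) is Proposition~\ref{prop:dom}(a) applied to the supports $E_b$, $b\in\mathcal I_x$; (b) is the orthogonal decomposition \eqref{eq:blockdecomp} with the zero-capture blocks dropped; and (c) reads off the block-constant action on the retained scores and then cites Lemma~\ref{lem:maxgroup}, Bessel, and Proposition~\ref{prop:orbitmax}(c). The only difference is that you derive the block decomposition from the cycle-spanning property of Proposition~\ref{prop:cyclespace}(b), splitting closed trails into simple cycles that each lie in one block, whereas the paper simply asserts it in Appendix~\ref{app:maxgroup}; this is a harmless and welcome elaboration.
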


\begin{proof}
See Appendix~\ref{app:proofs}.
\end{proof}

In words: collapsing each connected piece of the mobility graph into a single contrast captures every last bit of the identifying variation exactly, but at the cost of nearly all of the randomization group: only as many independent signs survive as there are connected pieces. This is the coarsest point on the capture--granularity trade-off: full capture, minimal granularity.

\begin{remark}[Scope of the maximality claims]\label{rem:scopeimposs}
The maximality claims cover edgewise diagonal sign flips only. They do not cover
actions mixing edges, conditional procedures, or exact tests not generated by a
group; Appendix~\ref{app:maxgroup} and Remark~\ref{rem:openimposs} give the
precise boundary.
\end{remark}

Full capture does not imply the power of Theorem~\ref{thm:power}: it also needs
$\max_b\|\Pi_{\mathcal Z_b}x\|^2/V_n\to0$. Full enumeration at level $\alpha$
needs $2^{B_x-1}\ge1/\alpha$ ($B_x\ge6$ at $5\%$), and only active blocks
count. Stayer digons are therefore inactive for match-level treatments. A
dominant mover block violates balance but not exact validity; only the Gaussian
power comparison then disappears.

\subsection{Binary-treatment granularity floor}\label{sec:binaryfloor}

\begin{proposition}[Binary-treatment granularity floor]\label{prop:binaryfloor}
Let $x\in\{0,1\}^n$, write $n_1=\#\{i:x_i=1\}$ and
$n_0=n-n_1$, and let $q_1,\ldots,q_C$ be an annihilating contrast system.
If $m_x:=\#\{c:b_c\ne0\}$, then
\[
 m_x\le \min(n_1,n_0).
\]
Consequently any statistic depending on the data only through the active
coordinates $\{b_cU_c:b_c\ne0\}$ has at most $2^{m_x}$ values on its sign-flip
orbit. For a globally sign-invariant two-sided statistic, including both the
default statistic and the studentized statistic of Remark~\ref{rem:stat}, the
attainable randomization $p$-value is bounded below by $\min\{1,2^{1-m_x}\}$;
if $m_x=0$ (no active contrast, the statistic constant on the orbit) no
rejection is possible at any level, and for $m_x\ge1$ this is
$2^{1-m_x}\ge 2^{1-\min(n_1,n_0)}$. Such a nonrandomized test can reject at
level $\alpha$ only if
\[
 \min(n_1,n_0)\ge 1+\log_2(1/\alpha),
\]
so at least six observations of each binary category are necessary at $5\%$.
In the empirically common rare-treatment case $n_1\le n_0$, this reduces to
$n_1\ge6$, irrespective of $n$, $d_n$, or the packing.
\end{proposition}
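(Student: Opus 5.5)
The plan is to exploit the fact that every annihilating contrast is orthogonal to the constant vector, and then count treated and control observations across disjoint supports. First, I will record that $\one_n\in\mathrm{col}(D)$ in the designs the paper considers. In a one-way design the group dummies sum to $\one_n$. In the two-way bipartite design of Section~\ref{sec:cycles}, summing the $\mathcal A$-vertex columns gives $\one_n$, because each edge has exactly one $\mathcal A$-endpoint. I will state this as the standing hypothesis under which the bound holds, since it is the only structural input. It then follows that $q_c'D=0$ implies $q_c'\one_n=0$, and hence $v_c'\one_n=0$, for every $c$.

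Second, I will show that each active contrast uses at least one treated and at least one control observation. Since $x\in\{0,1\}^n$, we have $b_c=v_c'x=\sum_{i\in S_c,\,x_i=1}v_{ci}$. So $b_c\ne0$ forces $S_c$ to contain some $i$ with $x_i=1$. Using $v_c'\one_n=0$, we also have $b_c=-v_c'(\one_n-x)=-\sum_{i\in S_c,\,x_i=0}v_{ci}$, so $S_c$ contains some $i$ with $x_i=0$. The supports are pairwise disjoint by Definition~\ref{def:contrast}(iii). Choosing one treated observation from each active support therefore gives an injection of the active contrasts into the treated set, so $m_x\le n_1$. The same argument with control observations gives $m_x\le n_0$, and hence $m_x\le\min(n_1,n_0)$.

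Third, I will do the orbit count. A statistic that depends only on $\{b_cU_c:b_c\ne0\}$ is constant on cosets of the flips acting on inactive coordinates, so it takes at most $2^{m_x}$ values on the orbit. If $m_x=0$, the statistic is constant on the orbit, so $p=1$ and no rejection is possible. If $m_x\ge1$, consider the $2^{C-m_x}$ sign patterns that equal $+1$ on all active coordinates; each reproduces $T(U)$ exactly. For a globally sign-invariant $T$ (one with $T(-u)=T(u)$, which holds for both the default statistic and the studentized statistic), the $2^{C-m_x}$ patterns equal to $-1$ on all active coordinates also reproduce $T(U)$. These two sets are disjoint when $m_x\ge1$. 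Every such pattern satisfies the indicator in the definition of $p$, so
\[
p\ \ge\ 2\cdot 2^{C-m_x}/2^C\ =\ 2^{1-m_x}\ \ge\ 2^{1-\min(n_1,n_0)}.
\]
A nonrandomized rejection $p\le\alpha$ then requires $2^{1-\min(n_1,n_0)}\le\alpha$, which is equivalent to $\min(n_1,n_0)\ge1+\log_2(1/\alpha)$. At $\alpha=0.05$ this gives $\min(n_1,n_0)\ge 1+4.32$, hence $\ge6$ by integrality. In the rare-treatment case $n_1\le n_0$ this reads $n_1\ge6$.

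The main obstacle is the first step. The bound is false without an intercept-like vector in $\mathrm{col}(D)$: with no fixed effects, a contrast supported on a single treated observation would be active. So the proof must make explicit that $\one_n\in\mathrm{col}(D)$ holds in every fixed-effect design the paper has in mind, and state it as the operative condition if it is not automatic. The remaining steps are routine bookkeeping.
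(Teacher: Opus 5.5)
Your proof is correct and follows the paper's argument. Orthogonality to the constant vector forces each active support to contain both a treated and an untreated observation, disjointness then gives injections into both categories, and the pairing $s \leftrightarrow -s$ (which you carry out explicitly in the full group $\{\pm1\}^C$) yields the $2^{1-m_x}$ floor. Stating $\one_n\in\mathrm{col}(D)$ as the operative hypothesis, with a counterexample when it fails, usefully sharpens what the paper simply asserts.
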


\begin{proof}
Because $v_c\in\ker(D')$, $M_Dv_c=v_c$ and therefore
$b_c=v_c'\tx=v_c'x$. If $S_c$ misses the treated set, this is zero. Moreover
the fixed-effect span contains the constant vector, so $v_c'\mathbf1=0$ and
$b_c=-v_c'(\mathbf1-x)$; hence $b_c=0$ if $S_c$ misses the untreated set.
Every active support thus contains at least one observation of each category.
Pairwise disjointness gives injections from the active supports into both
categories and proves $m_x\le\min(n_1,n_0)$. Only the $m_x$ active signs can
alter the statistic, giving at most $2^{m_x}$ orbit values. For either absolute
statistic, $s$ and $-s$ give the same value. In the studentized case its
denominator $(\sum_cb_c^2U_c^2)^{1/2}$ is itself sign-invariant, so the same
pairing applies. If $m_x=0$ the statistic is constant on the orbit and $p=1$
identically. Otherwise every attained two-sided value has multiplicity at
least two and the smallest possible full-enumeration $p$-value is at least
$2/2^{m_x}=2^{1-m_x}$.
\end{proof}

In words: with a $0/1$ treatment, every contrast that actually contributes must contain at least one treated and one untreated observation, so the number of usable independent signs can never exceed the size of the smaller treatment group. This is a ceiling that better packing cannot lift, unlike the capture--granularity trade-off for continuous treatments, which packing does help.

\begin{example}[Rare binary blocks]\label{cor:binaryfloor}
In a $39$-unit, $21$-period rectangular panel, among treatment blocks with the
minimum testable count $n_1=6$, the most concentrated is three treated units
for two post periods: $\lambda_n=0.139$, $N_{\mathrm{eff}}=8.59$, and the
structured packing has $\kap=0.299$. Better packing cannot make a block with
$n_1\le5$ testable. The rarity restriction is essential: for example, 38
treated units in one period have $\lambda_n=0.928$ and are testable.
\end{example}

A 39-unit, 20-pre-period calibration modeled on the California Proposition~99
application of \citet{ADH2010} makes the boundary visible without an outcome.
With one treated unit and $p$ appended post-periods, the package gives:
\begin{center}
\begin{tabular}{rrrrrr}
\toprule
$p=n_1$ & $T$ & $\lambda_n$ & $N_{\mathrm{eff}}$ & $\kap$ & verdict\\
\midrule
1  & 21 & 0.928 & 1.16 & 0.269 & inconclusive\\
2  & 22 & 0.443 & 2.55 & 0.282 & inconclusive\\
3  & 23 & 0.282 & 4.16 & 0.295 & inconclusive\\
4  & 24 & 0.203 & 6.02 & 0.308 & inconclusive\\
6  & 26 & 0.125 & 10.40 & 0.334 & flagged\\
12 & 32 & 0.051 & 26.61 & 0.411 & point pass\\
\bottomrule
\end{tabular}
\end{center}
The first four truncations are structurally out of scope, not poorly packed.
For continuous treatments, by contrast, granularity is a support choice traded
against capture; binary treatments impose an additional category-count ceiling
that repacking cannot undo. This is why the method's natural empirical target
is a concentrated continuous regressor, with Proposition~\ref{prop:binaryfloor}
providing a one-line pre-flight check for binary applications.

\subsection{The splitting problem}\label{sec:residual}

Cycle packing is the special case of Definition~\ref{def:supp} in which each $A_j$ is a single cycle and the projection is replaced by the $\pm1$ vector; Proposition~\ref{prop:dom}(b--c) shows both restrictions only lose capture. What remains is a balance-constrained partition problem --- split a biconnected block $\mathcal B$ into supports $A_1,\ldots,A_k$ maximizing $\sum_j\|\Pi_{\mathcal Z_{A_j}}x\|^2$ subject to $\max_j\|\Pi_{\mathcal Z_{A_j}}x\|^2 \le \tau\sum_j\|\Pi_{\mathcal Z_{A_j}}x\|^2$ --- whose interpolation endpoints are $k=1$ (full capture, no granularity) and $k=\dim\mathcal Z_{\mathcal B}$ (cycle-basis granularity). Ear decompositions are the natural tool; we do not know sharp approximation guarantees and leave them open, together with the heteroskedastic version.

That this is the binding constraint rather than overlap is an empirical claim, and Appendix~E verifies it on both real designs. On the \citet{KSS2020} network the multigraph has $27{,}694$ nontrivial blocks, exactly $130$ active for the match-level treatment, and the identity $\sum_b\|\Pi_{\mathcal Z_b}x\|^2 = V_n$ holds to six digits, so the $\kap=1$ of Theorem~\ref{thm:blockproj} is attained --- but a single giant block carries $93.7\%$ of $V_n$, and on the F-score panel Poland's block carries $69.6\%$. The balance condition therefore fails in every case: full capture is achievable but concentrated, and the splitting problem above, not overlap, is what binds.

\section{Numerical results}\label{sec:numerics}

The logical bridge from Proposition~\ref{ex:rade} to the applications is not
an extrapolated impossibility theorem. Proposition~\ref{ex:rade} establishes
that no fixed table is uniformly valid at the fully concentrated boundary; it
does not say that every intermediate value of $\lambda_n$ invalidates a
conventional test. The calibrated experiment below supplies the finite-sample
evidence at an intermediate value: at $\lambda_n=0.286$ the df-$t$ rule rejects
a true null $58.5\%$ of the time, while the exact rule has size $5.0\%$.
Grunfeld has design concentration $\lambda_n=0.2065$, not the boundary. We use
the exact procedure there because its validity holds at every concentration
level and the design buys that insurance at a measured $1.263\times$ SE price,
not because Proposition~\ref{ex:rade} alone proves failure at $0.2065$.

\subsection{A concentrated investment regression}\label{sec:grunfeld}

Our lead application is the canonical Grunfeld regression \citep{Grunfeld1958}
\begin{equation}\label{eq:grunfeld}
 I_{ft}=\beta K_{ft}+\theta V_{ft}+\alpha_f+\tau_t+\varepsilon_{ft},
\end{equation}
where investment $I$, plant-and-equipment capital $K$, and market value $V$
are measured in 1947 dollars. We use the complete corrected 11-firm,
1935--1954 data distributed by \texttt{statsmodels}; \citet{KleiberZeileis2010}
document the incomplete and erroneous variants in circulation. The estimand is
the conditional within-firm association of capital with investment, not a
causal effect.

\emph{Diagnostics.} The sample has $n=220$, $d_K=30$ firm-and-year fixed-effect
dimensions, and $\rho=0.1364$. Partialling out market value as well as those
effects gives $V_n=5{,}515{,}520$, $\lambda_n=0.2065$,
$H_n=0.06039$, and $N_{\mathrm{eff}}=16.56$. The realized score is much more
concentrated: $\widehat\lambda_n^{\mathrm{score}}=0.7388$,
$\widehat H_n^{\mathrm{score}}=0.5545$, and
$\widehat N_{\mathrm{eff}}^{\mathrm{score}}=1.80$. The four largest design
shares are General Motors in 1954, 1953, 1952, and 1937
($0.2065,0.0606,0.0472,0.0421$); mean capital ranges from $648.4$ for General
Motors to $5.94$ for Diamond Match. The cyclic permutation test of
Remark~\ref{prop:cpt} is unavailable: including market value, the model
has $p=32$ columns and $n/p=6.88<19$ at $5\%$.
The application does not stand or fall on the realized score, which
Remark~\ref{rem:diag} treats as a warning rather than a generally consistent
population diagnostic. Design-only $N_{\mathrm{eff}}=16.56$, $\rho=0.1364$,
and Lei--Bickel infeasibility already put the specification in scope; stability
from $0.7090$ to $0.7388$ across five data versions is corroboration.

\emph{Valid capture.} Ordinary four-cycles annihilate firm and year effects but
not the continuous nuisance regressor $V$. The package therefore uses complete
$2\times3$ firm-by-period rectangles and projects their two-dimensional local
cycle spaces off $V$ before disjoint selection. It returns $C=32$ active
supports, $\kap=0.6270$, maximum captured-loading share $0.352$, and an SE price
$\kap^{-1/2}=1.263$. Greedy and sparse return $C=25$ and $\kap=0.0136$ because
they first select one-dimensional FE-annihilating cycles; pairing them afterward
to impose $q'V=0$ leaves loadings nearly orthogonal to residualized capital.
Structured packing projects each two-dimensional rectangle space off $V$
before ranking it, explaining the 46-fold gap. Because
the valid construction uses period triples, the two fixed period-pairing
rules used by an earlier ad hoc calculation are not applicable. Across 200 row
permutations, all three public methods reproduce their labelled supports and
capture exactly.

\begin{table}[ht]
\centering
\caption{Grunfeld investment on capital and market value, with firm and year effects.}
\label{tab:grunfeld}
\small
\begin{tabular}{lrrr}
\toprule
Method & estimate & standard error & 95\% interval\\
\midrule
df-$t$ & 0.3514 & 0.0210 & $[0.3099,\ 0.3930]$\\
HC1 & 0.3514 & 0.0529 & $[0.2478,\ 0.4551]$\\
HC2 & 0.3514 & 0.0610 & $[0.2318,\ 0.4710]$\\
HC3 & 0.3514 & 0.0767 & $[0.2011,\ 0.5018]$\\
firm-clustered CR1 & 0.3514 & 0.0491 & $[0.2419,\ 0.4609]$\\
\textbf{exact sign-flip} & \textbf{0.3544} & --- &
  $\mathbf{[0.1498,\ 0.4504]}$\\
\bottomrule
\end{tabular}
\par\smallskip
\begin{minipage}{0.94\textwidth}\footnotesize
Notes: The exact interval uses the valid 32-support controlled packing and
$99{,}999$ common-random-number flips; its Monte Carlo $p$-value at zero is
$0.00139$. HC intervals use normal critical values; df-$t$ uses 188 residual
degrees of freedom; firm CR1 uses a $t_{10}$ critical value. Exactness is
conditional on independent, centrally symmetric observation errors. The
maximum loading share of $0.352$ is a warning against applying the Gaussian
power approximation of Theorem~\ref{thm:power}; it does not affect
finite-sample exactness.
\end{minipage}
\end{table}

The exact interval is shifted, not widened: its width is $0.3006$ versus HC3's
$0.3007$, but both endpoints are about $0.051$ lower. Exact inversion is not
Wald inversion: at each $\beta_0$ it recomputes the observed statistic and its
sign-flip reference from $U_c(\beta_0)$, so asymmetric realized scores need not
produce a set centered on OLS. Since $\widetilde\beta=0.3544$ is close to OLS
$0.3514$, the shift comes from the null-imposed reference (see Appendix~G).

The contrast scores at $\widetilde\beta$ have skewness $+0.705$ and excess
kurtosis $+4.43$. We therefore do not treat symmetry as empirically established:
the interval is an exact conditional procedure under Assumption~\ref{ass:sym},
not a robustness claim under arbitrary asymmetry. The point estimate says that,
holding market value and the two fixed effects constant, one additional unit of
capital is associated with $0.351$ units of investment.

\emph{Version provenance.} The complete \texttt{statsmodels} and \texttt{AER}
files agree exactly. The correct 10-firm Baltagi/\texttt{plm} version---also the
Boot--de Wit subset, not a five-firm subset---omits American Steel and gives
$\lambda_n=0.2134$, $\widehat\lambda_n^{\mathrm{score}}=0.7387$, and
$\kap=0.6300$. The corrected and transcription-error Greene five-firm files
give $\lambda_n=0.1959$ and $0.1968$, respectively, with score concentration
$0.7118$ and $0.7090$. Thus the finding is not driven by the data version,
although the coefficient moves from $0.3514$ in the full data to $0.3642$ in
the corrected five-firm subset. We use the complete file because
\citet{KleiberZeileis2010} identify it as the original corrected benchmark.

\subsection{Cycle capture on a real mobility network}\label{sec:kappa}

We compute $\kap$ on the public matched employer--employee extract distributed with the replication package of \citet{KSS2020}: $n = 71{,}614$ observations on $35{,}807$ workers and $5{,}301$ firms over two years, with $23.0\%$ movers; the cycle space has dimension $32{,}406$, of which $27{,}564$ dimensions are stayer digons and $4{,}842$ are mover cycles. As treatments we take (i) the time-varying age-profile covariate included in the data, and (ii) a synthetic \emph{match-level} treatment (constant within worker--firm match), the hard case in which digons capture exactly zero and all capture must come from genuine mover cycles.

At the dense extreme we use a real F-score panel ($217$ firm-years, $19$ firms,
three countries, 2010--2024, firm and country--year effects) and a fixed-seed
synthetic replica for the Monte Carlo, where a known DGP is required. Table~\ref{tab:kappa}
labels them separately. In every design $V_n$ agrees with an exact sparse
least-squares solve to relative error below $10^{-15}$.

\begin{table}[ht]
\centering
\caption{Capture ratios in real and calibrated designs.}
\label{tab:kappa}
\begin{tabular}{llccc}
\toprule
Design & Treatment & Greedy $\kap$ & Designed $\kap$ & SE price $\kap^{-1/2}$ \\
\midrule
Worker--firm network & match-level & 0.264 & \textbf{0.511} & $1.40\times$ \\
Worker--firm network & time-varying & 0.8845 & \textbf{0.9110} & $1.05\times$ \\
Worker--firm wage sample & age group $\times$ 2001 & --- & \textbf{0.611} & $1.28\times$ \\
Grunfeld (real) & capital $\mid$ value & 0.0136\textsuperscript{$\dagger$} & \textbf{0.627} & $1.26\times$ \\
F-score panel (real) & firm F-score & 0.374 & \textbf{0.827} & $1.10\times$ \\
Dense panel (calibrated) & firm-persistent & 0.439 & \textbf{0.834} & $1.09\times$ \\
\bottomrule
\end{tabular}
\par\smallskip
\begin{minipage}{0.92\textwidth}\footnotesize
Notes: ``Greedy'' extracts digons then arbitrary depth-first cycles; ``Designed'' applies the disjoint pairing of Remark~\ref{prop:auto}, the nested matching of Remark~\ref{lem:nested}, and the contraction principle of Proposition~\ref{prop:contract}, except that the wage row uses paired stayer histories. These are deterministic v0.5.1 values; earlier constructions are lower bounds. \textsuperscript{$\dagger$}As Section~\ref{sec:grunfeld} explains, the Grunfeld greedy and sparse routes first select one-dimensional supports that cannot separately annihilate the continuous nuisance regressor $V$; controlled pairing then returns $C=25$ with $\kap=0.0136$. The structured route instead projects the local cycle spaces of $2\times3$ rectangles off $V$ before ranking them. The 46-fold gap is a property of designs with continuous nuisance regressors, not a numerical artefact. The F-score and calibrated panels select 43 and 45 four-cycles. The F-score and wage designs are diffuse at their analysis-block levels; the other rows are concentrated.
\end{minipage}
\end{table}

Even the hardest case retains half the identifying variation, a $1.40\times$ SE
price. Naive greedy understates capture by about half when digons do not supply
it; Section~\ref{sec:overlap} localizes the remaining gap to one dominant block.

\subsection{A worker--firm outcome application}\label{sec:kss_application}

We next use the same public extract for a descriptive outcome specification.
Retaining workers aged 20--29 or 50--59 in 1999, we estimate
\begin{equation}\label{eq:kss_application}
 \log w_{it}=\beta\,\mathbf 1\{50\!\leq\!\mathrm{age}_{i,1999}\!\leq\!59\}
 \mathbf 1\{t=2001\}+\alpha_i+\psi_{j(i,t)}+\tau_t+\varepsilon_{it}.
\end{equation}
Thus $\beta$ is the late-minus-early-career difference in two-year log-wage
growth, not a causal return to age. The sample has $32{,}014$ observations on
$16{,}007$ workers and $4{,}037$ firms, including $4{,}302$ movers. OLS gives
$\widehat\beta=-0.0281$ with worker-HC2 interval
$[-0.0355,-0.0206]$.

The design-only contrast system pairs every complete two-period history of the
$2{,}883$ late-career stayers with a uniformly sampled early-career stayer
history. Each four-observation support is a disjoint union of complete worker
blocks and annihilates worker, firm, and year effects. It has
$\kap=0.611$, SE price $1.28\times$, and maximum loading share $0.000347$.
With $99{,}999$ sign flips, $\widetilde\beta=-0.0267$, the exact $95\%$
confidence set is $[-0.0361,-0.0172]$, and the exact $p$-value at zero is
$0.00001$. Across $200$ design-only redraws of the larger early-career pool,
the exact estimate averages $-0.02835$ (SD $0.00311$) and ranges from
$-0.03535$ to $-0.02010$.

These calculations establish that the construction runs end to end on the
motivating worker--firm design and that its verdict agrees with worker-HC2;
they do not validate the symmetry assumption or support population claims.
The contrast-score positive share is $0.487$ (two-sided sign diagnostic
$p=0.157$), while its skewness is $+0.284$ and excess kurtosis is $+3.20$.
Exactness is therefore conditional on independent, centrally symmetric worker
error vectors, as Theorem~\ref{thm:exact} requires. The repository describes
the public file as testing data rather than a representative sample, so we
treat this exercise as a mechanism illustration. The replication archive
contains the fixed-seed construction, all redraws, and an independent sparse
least-squares verification.

\subsection{Monte Carlo size and power}\label{sec:mc}

The calibrated panel is diffuse ($\lambda_n=0.045$, $N_{\mathrm{eff}}=80.8$)
or spiked to be concentrated ($0.286$, $6.0$). In the latter, one packed
contrast carries $55.5\%$ of $\sum_cb_c^2$, deliberately violating (P1).
We use i.i.d.\ Laplace errors (H), a fourfold variance spike (Ht), and (Ht)
with a scaled Rademacher error at the spike (R), comparing df-$t$, HC2, and the
cycle test at $5\%$. Table~\ref{tab:mc} uses $100{,}000$ independent null draws
to calibrate each conventional cutoff and separate samples of $30{,}000$ null
and $30{,}000$ alternative draws for evaluation; the cycle test uses $2{,}000$
flips per draw.

\begin{table}[ht]
\centering
\caption{Empirical size and size-corrected power, concentrated design ($\lambda_n=0.286$).}
\label{tab:mc}
\scriptsize
\setlength{\tabcolsep}{3.2pt}
\begin{tabular}{lcccccccc}
\toprule
& \multicolumn{3}{c}{Size of nominal rule} &
  \multicolumn{2}{c}{Calibrated $5\%$ cutoff} &
  \multicolumn{3}{c}{Power at calibrated size} \\
\cmidrule(lr){2-4}\cmidrule(lr){5-6}\cmidrule(lr){7-9}
DGP & df-$t$ & HC2 & cycle & $|t_{\rm df}|$ & $|t_{\rm HC2}|$ & df-$t$ & HC2 & cycle \\
\midrule
H  & 0.054 & 0.093 & \textbf{0.049} & 2.009 & 2.383 & 0.934 & 0.848 & 0.576 \\
Ht & 0.315 & 0.205 & \textbf{0.049} & 4.651 & 3.158 & 0.275 & 0.511 & \textbf{0.520} \\
R  & 0.585 & 0.334 & \textbf{0.050} & 3.582 & 4.109 & \textbf{0.496} & 0.235 & 0.495 \\
\bottomrule
\end{tabular}
\par\smallskip
\begin{minipage}{0.96\textwidth}\footnotesize
Notes: H is homoskedastic Laplace; Ht adds a fourfold standard-deviation spike;
R replaces the error at that spike by a scaled Rademacher draw. The signal is
$\delta\sqrt{V_n}=3.5$. Conventional cutoffs are infeasible, DGP-specific
oracle calibrations, included to compare power at a common size rather than as a
proposed procedure. On the independent null evaluation sample, their rejection
rates after calibration range from $0.048$ to $0.051$. The exact cycle rule uses
its randomization $p$-value in both panels. Simulation code and the full-precision
output are included in the replication package.
\end{minipage}
\end{table}

The cycle test's size is within simulation error of $5\%$ in every configuration, including (R), where the df-$t$ rejects a true null $58.5\%$ of the time. Once the conventional tests are calibrated to the same size, the cycle test slightly leads HC2 under (Ht), essentially ties df-$t$ under (R), and pays for exactness under (H). Its power in the concentrated design is below the $\Phi(h\sqrt{\kap} - z)$ benchmark; when (P1) fails Remark~\ref{rem:p1} says that benchmark no longer applies, rather than imposing a universal power bound. In the \emph{diffuse} design, where (P1) holds ($\max_c b_c^2 / \sum_c b_c^2 = 0.093$, $\kap = 0.828$), at $\delta \sqrt{V_n} = 3.5$ the df-$t$ rejects at $0.935$ against a theoretical $0.938$, and the cycle test at $0.862$ against the Theorem~\ref{thm:power} benchmark $\Phi(3.5\sqrt{0.828} - 1.96) = 0.890$. The $2.8$ percentage-point cycle gap is a finite-$C$ deviation from the asymptotic benchmark, not Monte Carlo uncertainty. These simulations retain the preregistered 44-support construction used to generate their draws; the stronger deterministic v0.5.1 packing, reported separately in Table~\ref{tab:kappa}, raises capture to $0.834$ and is not substituted after seeing the simulation outcomes.

\subsection{A diffuse diagnostic check}\label{sec:diffusecheck}

The real F-score panel provides a useful negative control. Its design
diagnostics, $\lambda_n=0.0359$ and $N_{\mathrm{eff}}=83.1$, place it in the
diffuse regime; the log-return score diagnostic is also much smaller than in
Grunfeld ($\widehat\lambda_n^{\mathrm{score}}=0.133$). Conventional and exact
intervals both include zero. A diagnostic that always fires would not be useful;
this case shows the intended descriptive pass. Appendix~H
preserves the full workflow, raw-return comparison, and dependence analysis.

\section{Discussion}\label{sec:discussion}

\paragraph{Summary.} The paper's question was operational: when identifying variation is
concentrated, is there still a test valid at every sample size, and how much does it cost? The
answer is constructive rather than merely negative. A nuisance-annihilating contrast turns
sign-flipping into an exact symmetry regardless of leverage (Theorem~\ref{thm:exact}); in the
two-way case these contrasts are exactly the cycles of the mobility graph
(Proposition~\ref{prop:cyclespace}), so the same movement that identifies the treatment effect is
what makes exact inference possible. The cost is power, and it is observable before any outcome is
seen: the capture ratio $\kap$ (Theorem~\ref{thm:power}) and the packing algorithm of
Section~\ref{sec:packing} turn an otherwise abstract trade-off into a number a practitioner can
compute from the design alone. On the canonical corpus this distinction has teeth --- the exact
test matches or beats the calibrated conventional tests under heteroskedasticity while holding size
exactly (Section~\ref{sec:mc}), and structured packing recovers roughly twice the identifying
variation that naive packing would (Table~\ref{tab:kappa}).

\paragraph{Relation to existing methods.}\label{sec:related}

\begin{remark}[Infeasibility of the cyclic permutation test under saturation]\label{prop:cpt}
The cyclic permutation test of \citet{LeiBickel2021} for a linear hypothesis in the fixed-design model with $p$ regressors requires $n / p \ge 1/\alpha - 1$. In model \eqref{eq:model} with $p = d_n + 1$, this immediately requires $\rho_n := d_n / n \le \alpha/(1 - \alpha) - 1/n$; at $\alpha = 0.05$, $\rho_n \lesssim 0.053$. Hence the test is unavailable in the saturated regime $\rho_n$ bounded away from $\alpha/(1-\alpha)$, precisely where fixed-effect leverage concerns are most acute.
\end{remark}

Appendix~D gives full comparisons. Few-cluster sign-change methods
\citep{CRS2017,CSS2021,Toulis2022} estimate nuisance parameters and obtain
asymptotic invariance under homogeneity restrictions; exact annihilation instead
gives finite-sample validity under explicit symmetry. Existing fixed-design
randomization tests require exchangeability, $p<n/2$, or independence between
tested and nuisance regressors \citep{LeiBickel2021,WenWangWang2025,
LiZhouZhang2026,DHaultfoeuilleTuvaandorj2024}, conditions saturation defeats.
Network FE work develops leave-out inference under diffuse scores
\citep{JochmansWeidner2019,KSS2020,Jochmans2022}; our method covers the
concentrated regime. The concentration boundary itself has a counterpart in the
weak-identification literature, where inference must remain valid uniformly as
identification strength drifts toward a degenerate limit
\citep{AndrewsCheng2014,AndrewsGuggenberger2017}; the mechanism there is a
vanishing Jacobian rather than a vanishing effective sample size, but the
diagnosis is the same: validity that holds pointwise at every fixed parameter
value can still fail uniformly at the boundary. \citet{Crippa2025} tests
additivity with cycles; we use additivity to conduct exact inference on $\beta$.
The same preservation issue appears in dynamic hypergraphs. A companion
specification study 
\citep{HypergraphSpecification2026}\ shows that permuting overlapping-state
residuals destroys structural zeros implied by the Markov null and makes a
specification test reject a correct model with probability tending to one;
conditioning on the shared state and bootstrapping from the fitted null repairs
the reference distribution. Here the corresponding safeguard is exact
annihilation plus a sign-flip group that preserves the null symmetry.

\paragraph{Open problems.}

Four open problems stand out. First, because block projections remove overlap as an obstacle, splitting a dominant block becomes a balance-constrained support-partition problem whose approximation guarantees remain open. Second, studentized flips suggest a hybrid that is exact under symmetry and asymptotically valid under asymmetry. Third, the value of optimal weighted packing on realistic graph sequences, and its connection to spectral connectivity, is unknown. Fourth, interactive fixed effects lie outside every current support system; conditional or non-diagonal randomization may be needed. Multiway designs and vector-valued $\beta$ follow mechanically from Lemma~\ref{lem:inv}.

\section*{Declarations}

\noindent\textbf{Competing interests.} The author declares no competing interests.

\smallskip\noindent\textbf{Funding.} This research did not receive any specific
grant from funding agencies in the public, commercial, or not-for-profit sectors.

\smallskip\noindent\textbf{Generative AI and AI-assisted technologies.}
During the preparation of this work the author used an AI proof-checking tool
based on Opus 5 (Anthropic) and GPT-5.5 (OpenAI) to perform supplementary
consistency checks on the mathematical arguments. The author verified all
content, reviewed and edited it as needed, and takes full responsibility for
the content of the published article.

\smallskip\noindent\textbf{Data availability.}
The Grunfeld application uses the public-domain canonical 11-firm file at a
pinned \texttt{statsmodels} revision; the matched employer--employee network is
the public testing extract of \citet{KSS2020}, downloaded by the replication
code; and the F-score panel analyzed in Appendix~H ships with that
code. The Grunfeld and F-score panels are also bundled in both software
packages. Code reproducing every
computation, including the cycle-packing implementation and the verification
suite, is distributed in \texttt{PanelAdequacy.jl}
(\url{https://github.com/profsms/PanelAdequacy.jl}) and the companion
\textsf{R} package \texttt{panelcert}
(\url{https://github.com/profsms/panelcert}).

\appendix

\section{Proofs}\label{app:proofs}

\subsection{Proof of Proposition~\ref{thm:conv} (Convolution limit; classical)}

\emph{Restatement.} Let Assumption~\ref{ass:err} hold, and suppose there exist an integer $J \ge 0$ and constants $a_1, \ldots, a_J$ such that $a_{nj} \to a_j$ for each $j \le J$ (after relabeling observations) while $\max_{i > J} |a_{ni}| \to 0$. Write $c := \sum_{j \le J} a_j^2 \in [0, 1]$. Then
\[
S_n := \sum_{i=1}^n a_{ni}\, \varepsilon_i \;\Longrightarrow\; \sum_{j=1}^{J} a_j\, \varepsilon_j^{*} \;+\; \sqrt{(1-c)}\;\sigma Z,
\]
where $\varepsilon_1^*, \ldots, \varepsilon_J^*$ are i.i.d.\ draws from $F$ and $Z \sim N(0,1)$ is independent of them.

\begin{proof}
Split $S_n = H_n + T_n$ with $H_n := \sum_{j \le J} a_{nj} \varepsilon_j$ and $T_n := \sum_{i > J} a_{ni} \varepsilon_i$; the two are independent for every $n$. Since the $a_{nj}$ are deterministic and converge, $H_n \to \sum_{j \le J} a_j \varepsilon_j$ almost surely, and the limit has the law of $\sum_j a_j \varepsilon_j^*$. For the tail, $\Var(T_n) = \sigma^2 (1 - \sum_{j \le J} a_{nj}^2) \to \sigma^2 (1 - c)$. If $c = 1$, $T_n \to 0$ in $L^2$ and the result follows by Slutsky. If $c < 1$, set $m_n := \max_{i > J} |a_{ni}| \to 0$ and verify Lindeberg's condition for the triangular array $\{a_{ni} \varepsilon_i\}_{i > J}$: for any $\delta > 0$,
\[
\sum_{i > J} \E\big[a_{ni}^2 \varepsilon_i^2\, \one\{|a_{ni} \varepsilon_i| > \delta\}\big]
\;\le\; \Big(\sum_{i > J} a_{ni}^2\Big)\, \E\big[\varepsilon^2\, \one\{|\varepsilon| > \delta / m_n\}\big] \;\to\; 0
\]
by dominated convergence, since $\E \varepsilon^2 < \infty$ and $\delta/m_n \to \infty$. Hence $T_n \Rightarrow N(0, \sigma^2(1-c))$ by Lindeberg--Feller. Independence of $H_n$ and $T_n$ gives joint convergence of the pair, and the continuous-mapping theorem applied to the sum yields the stated convolution.
\end{proof}

\subsection{Proof of Proposition~\ref{ex:rade} (No well-calibrated fixed critical value under concentration)}

\emph{Restatement.} Let $\mathcal F_{\mathrm{sym}}(\sigma^2)$ be the class of symmetric distributions on $\R$ with mean zero and variance $\sigma^2$, and consider the fully concentrated case $J = 1$, $a_1 = 1$ of Proposition~\ref{thm:conv} under the hypotheses of Corollary~\ref{cor:stud} (so $\widehat\sigma_n \to_p \sigma$), giving $S_n/\widehat\sigma_n \Rightarrow \varepsilon^*/\sigma$ with $\varepsilon^* \sim F$. For every fixed critical value $\varkappa > 0$, as $F$ ranges over the family $\{F_p\}$ constructed in the proof, which is contained in $\mathcal F_{\mathrm{sym}}(\sigma^2)$, the limiting rejection probability exists and satisfies
\[
R(F_p,\varkappa) \;:=\; \lim_n \Pp\big(|S_n/\widehat\sigma_n| > \varkappa\big) \;=\; p,
\]
thereby taking every value in the open interval $\big(0, \min\{1, \varkappa^{-2}\}\big)$. Consequently no fixed $\varkappa$ matches the nominal level $\alpha$ for more than a knife-edge subset of $\mathcal F_{\mathrm{sym}}(\sigma^2)$: a procedure whose rejection probability equals $\alpha$ at every $F$ in the class must depend on $F$. This does not mean no fixed $\varkappa$ controls size in the usual worst-case sense: by Chebyshev's inequality $\Pp(|\varepsilon^*/\sigma|>\varkappa)\le \varkappa^{-2}$ for every $F\in\mathcal F_{\mathrm{sym}}(\sigma^2)$, so $\varkappa=\alpha^{-1/2}$ controls $\sup_F R(F,\varkappa)$ at exactly $\alpha$, sharply (attained in the limit as $p\uparrow\varkappa^{-2}$ in the family below); it is simply far above $z_{1-\alpha/2}$ and hence powerless against realistic alternatives.

\begin{proof}
Fix $\varkappa > 0$ and let $p \in (0, \min\{1, \varkappa^{-2}\})$; the interval is non-empty because $\min\{1,\varkappa^{-2}\} > 0$. Let $F_p$ place mass $p/2$ at each of $\pm\sigma/\sqrt p$ and mass $1-p$ at $0$; then $F_p$ is symmetric with mean zero and variance $p \cdot \sigma^2/p = \sigma^2$, so $F_p \in \mathcal F_{\mathrm{sym}}(\sigma^2)$. Under $F_p$ the variable $|\varepsilon|$ takes the value $\sigma/\sqrt p$ with probability $p$ and the value $0$ with probability $1-p$. Since $p < \varkappa^{-2}$ gives $\sigma/\sqrt p > \sigma\varkappa$, neither boundary point $\pm\sigma\varkappa$ is an atom of $F_p$. Weak convergence therefore implies convergence of the strict-tail probability; since $0 \le \sigma\varkappa$, we get $R(F_p, \varkappa) = \Pp_{F_p}(|\varepsilon| = \sigma/\sqrt p) = p$. As $p$ ranges over the stated interval so does $R(F_p,\varkappa)$, giving the first claim; letting $p\uparrow\varkappa^{-2}$ shows $\sup_{F\in\{F_p\}} R(F,\varkappa) \to \varkappa^{-2}$, which is the Chebyshev bound of the restatement, attained by this same family.

For the second, fix $\alpha \in (0,1)$ and suppose some $\varkappa > 0$ had limiting rejection probability $\alpha$ for every $F \in \mathcal F_{\mathrm{sym}}(\sigma^2)$ along this fully concentrated sequence. The interval $(0, \min\{1,\varkappa^{-2}\})$ has positive length, hence contains some $p \ne \alpha$; for that $p$, $R(F_p,\varkappa) = p \ne \alpha$, a contradiction. The single family $\{F_p\}$ therefore suffices, uniformly in $\varkappa$. (This rules out only a $\varkappa$ matching $\alpha$ pointwise for every $F$; it does not rule out a $\varkappa$ controlling $\sup_F R(F,\varkappa)$, which the Chebyshev bound above shows is achievable, at the cost of the power noted in the restatement.)
\end{proof}

\subsection{Proof of Lemma~\ref{lem:inv} (Sign-flip invariance)}

\emph{Restatement.} Let Assumption~\ref{ass:sym} hold, let $q_1, \ldots, q_C$ be an annihilating contrast system, and define the contrast scores $U_c := v_c'(Y - x \beta_0)$. Under $H_0$, for every $s \in \{-1, +1\}^C$,
\[
(U_1, \ldots, U_C) \;\overset{d}{=}\; (s_1 U_1, \ldots, s_C U_C), \qquad \text{conditional on } (x, D).
\]

\begin{proof}
Under $H_0$, $Y - x\beta_0 = D\gamma + \varepsilon$, so $U_c = v_c' D \gamma + v_c' \varepsilon = v_c' \varepsilon$ by Definition~\ref{def:contrast}(ii): the fixed effects are annihilated identically, with no estimation. Fix $s$ and define $\tilde\varepsilon$ by flipping the sign of $\varepsilon_{B_k}$ for every block $B_k$ contained in $\bigcup_{c : s_c = -1} S_c$ and leaving all other blocks unchanged. By Assumption~\ref{ass:sym}(i)--(ii), $\tilde\varepsilon \overset{d}{=} \varepsilon$. By (iii) and support-disjointness, $v_c' \tilde\varepsilon = s_c\, v_c' \varepsilon$ for every $c$. Hence $(s_c U_c)_c = (v_c'\tilde\varepsilon)_c \overset{d}{=} (v_c'\varepsilon)_c = (U_c)_c$.
\end{proof}

\subsection{Proof of Theorem~\ref{thm:exact} (Finite-sample exactness)}

\emph{Restatement.} Let the conditions of Lemma~\ref{lem:inv} hold, let $T : \R^C \to \R$ be any measurable test statistic, and let $\mathcal G = \{-1, +1\}^C$. Define the randomization $p$-value
\[
p \;:=\; \frac{1}{|\mathcal G|} \sum_{s \in \mathcal G} \one\big\{ T(s_1 U_1, \ldots, s_C U_C) \ge T(U_1, \ldots, U_C) \big\} .
\]
Then $\Pp(p \le \alpha \mid x, D) \le \alpha$ under $H_0$, for every $n$, every $\alpha \in (0, 1)$, and every error distribution satisfying Assumption~\ref{ass:sym}. The same holds for the Monte Carlo version based on $B$ i.i.d.\ uniform draws $s^{(1)}, \ldots, s^{(B)}$ from $\mathcal G$, drawn independently of $U$ conditional on $(x,D)$, with $p = \big(1 + \#\{b : T(s^{(b)} \circ U) \ge T(U)\}\big) / (1 + B)$.

\begin{proof}
$\mathcal G$ is a group acting on $\R^C$ by coordinatewise sign changes, and by Lemma~\ref{lem:inv} the null distribution of $U := (U_1, \ldots, U_C)$ is $\mathcal G$-invariant. The claim is then the standard randomization guarantee \citep[Theorem 15.2.1]{LehmannRomano2005}: for $g$ uniform on $\mathcal G$ independent of $U$, the pair $(U, g \circ U)$ is exchangeable in the sense that $g \circ U \overset{d}{=} U$ and the orbit statistics $\{T(s \circ U)\}_{s \in \mathcal G}$ are, conditionally on the orbit of $U$, an exchangeable collection containing $T(U)$; the rank of $T(U)$ among them is therefore stochastically dominated by a uniform rank, which yields $\Pp(p \le \alpha) \le \alpha$. The Monte Carlo statement is the exactness of random-transformation tests with the identity adjoined \citep{HemerikGoeman2018}.
\end{proof}

\subsection{Proof of Lemma~\ref{lem:compat} (Compatible supports in two-way designs)}

\emph{Restatement.} Let the observations be the edges of the bipartite multigraph $G$ of
Section~\ref{sec:cycles}, and let a support system $A_1, \ldots, A_C$ be given.
\begin{enumerate}[label=(\alph*)]
\item \emph{Observation level} ($\mathcal P$ = singletons): every support system
is compatible.
\item \emph{Match level} ($\mathcal P$ = parallel-edge classes, i.e.\ arbitrary
dependence within a worker--firm match): $A_j$ is compatible iff it contains all
or none of the edges of each match. Every cycle through matches of multiplicity
one is compatible; a digon extracted from a match of multiplicity $k \ge 3$ is
not, but the whole match taken as one support is, with local cycle space of
dimension $k - 1$.
\item \emph{$\mathcal A$-vertex level} ($\mathcal P$ = workers, i.e.\ arbitrary
dependence across a worker's spells): $A_j$ is compatible iff it uses each
worker's edges all or none. In particular \emph{every four-cycle built from
two-period movers is automatically compatible}, because such a cycle uses both
edges of each of its two workers and those workers have no others; so is every
$2L$-cycle produced by the contraction of Proposition~\ref{prop:contract}.
\item \emph{$\mathcal B$-vertex level} ($\mathcal P$ = firms): $A_j$ is
compatible iff it uses each firm's edges all or none. Since firms are hubs, no
single cycle is compatible unless every firm it touches has no other edge;
compatibility forces supports at the level of groups of whole firms.
\item \emph{Period level} ($\mathcal P$ = calendar periods, when one vertex class
is time or country--time): symmetrically to (d), $A_j$ must use each period's
edges all or none.
\end{enumerate}

\begin{proof}
Condition (iii) is verbatim the statement that each $A_j$ is
$\mathcal P$-measurable, which gives (a) and the first sentence of each of
(b)--(e). For the second sentence of (b), the local cycle space of a match with
$k$ parallel edges is $\{q \in \R^k : \sum_e q_e = 0\}$, of dimension $k-1$. For
(c), a two-period mover contributes exactly two edges, both used by any
four-cycle containing her (Remark~\ref{prop:auto}), and the cycles of
Proposition~\ref{prop:contract} traverse each mover via both of her edges by
construction. For (d), a cycle visiting firm $f$ uses exactly two of $f$'s
edges, so it is $\mathcal P$-measurable only if $f$ has degree two.
\end{proof}

\subsection{Proof of Proposition~\ref{prop:pairsym} (Symmetry from antithetic pairing)}

\emph{Restatement.} Let $q$ be a contrast whose support admits a partition into pairs
$\{i_1, j_1\}, \ldots, \{i_m, j_m\}$ with $q_{i_k} = -q_{j_k}$ for every $k$.
Suppose the pair vectors $(\varepsilon_{i_k}, \varepsilon_{j_k})$, $k \le m$,
are mutually independent and each is exchangeable,
$(\varepsilon_{i_k}, \varepsilon_{j_k}) \overset{d}{=} (\varepsilon_{j_k}, \varepsilon_{i_k})$.
Then $q'\varepsilon$ is symmetric about zero. If an annihilating contrast system
consists of such contrasts with disjoint supports and the pairs are mutually
independent across contrasts as well, the conclusion of Lemma~\ref{lem:inv}---and
hence Theorem~\ref{thm:exact}---holds with Assumption~\ref{ass:sym}(ii) replaced
by this pairwise exchangeability. No symmetry, and indeed no moment, is required
of the marginal law of $\varepsilon$.

\begin{proof}
Write $q'\varepsilon = \sum_{k \le m} q_{i_k}(\varepsilon_{i_k} - \varepsilon_{j_k})$.
Exchangeability of the $k$th pair gives
$\varepsilon_{i_k} - \varepsilon_{j_k} \overset{d}{=} \varepsilon_{j_k} - \varepsilon_{i_k}
= -(\varepsilon_{i_k} - \varepsilon_{j_k})$, so each summand is symmetric about
zero; the summands are independent, and a sum of independent symmetric variables
is symmetric. For the system statement, disjointness of supports and
independence across pairs make the $C$ scores independent and each symmetric,
so their joint law is invariant under independent sign flips, which is the
conclusion of Lemma~\ref{lem:inv}; Theorem~\ref{thm:exact} used nothing else.
\end{proof}

\subsection{Proof of Theorem~\ref{thm:power} (Local power and Pitman efficiency)}

\emph{Restatement.} Let blocks be singletons: conditional on $(x,D)$ the errors are
independent, symmetric, $\E[\varepsilon_i^2] = \sigma^2$ for all $i$, and
$\sup_i \E[\varepsilon_i^4] \le \bar\kappa < \infty$. Let
$(q_c)_{c \le C_n}$ be annihilating contrast systems with
$C_n \to \infty$, write $\Sigma_n := \sum_{c \le C_n} b_c^2$, and suppose
$\Sigma_n\to\infty$. Consider
local alternatives
$\beta_n = \beta_0 + h \sigma / \Sigma_n^{1/2}$, $h \in \R$ fixed, tested
at level $\alpha$ by the full-enumeration randomization test of
Theorem~\ref{thm:exact}, or by its Monte Carlo version with the number of
sampled flips $B_n\to\infty$, using statistic $T(u)=|\sum_c b_cu_c|$.
\begin{enumerate}[label=(\roman*)]
\item If
\begin{equation*}\tag{P1}
\max_{c \le C_n} b_c^2 \Big/ \Sigma_n \;\longrightarrow\; 0 ,
\end{equation*}
then the power of the test converges to
$\Phi\big(h - z_{1 - \alpha/2}\big) + \Phi\big(-h - z_{1 - \alpha/2}\big)$.
\item If in addition
\begin{equation*}\tag{P2}
\lambda_n \;=\; \max_{i \le n} \tx_i^2 / V_n \;\longrightarrow\; 0 ,
\end{equation*}
then the infeasible oracle test that rejects when
$|\tx'(Y - x\beta_0)| / (\sigma \sqrt{V_n}) > z_{1-\alpha/2}$ has
asymptotic level $\alpha$ and attains the same limiting power along
$\beta_n = \beta_0 + h\sigma/\sqrt{V_n}$. Consequently, if
$\kap := \Sigma_n / V_n$ converges, the Pitman asymptotic relative
efficiency of the contrast test with respect to that oracle equals
$\lim \kap$.
\end{enumerate}

\begin{proof}
Write $\delta_n := \beta_n - \beta_0$, $W_c := v_c'\varepsilon$,
$U_c = b_c\delta_n + W_c$, and $S_n := \sum_c b_c U_c$. The $W_c$ are
independent across $c$ (disjoint supports, independent errors), mean zero,
with $\E[W_c^2] = \sigma^2\|v_c\|^2 = \sigma^2$. Expanding
$W_c = \sum_i v_{ci}\varepsilon_i$ and discarding the terms containing an
odd power of a single $\varepsilon_i$,
\begin{equation}\label{eq:fourth}
\E[W_c^4]
= \sum_i v_{ci}^4 \E[\varepsilon_i^4]
+ 3 \sum_{i \ne j} v_{ci}^2 v_{cj}^2 \sigma^4
\;\le\; \bar\kappa \sum_i v_{ci}^4 + 3\sigma^4 \Big(\sum_i v_{ci}^2\Big)^2
\;\le\; \bar\kappa + 3\sigma^4 \;=:\; \bar\kappa_U ,
\end{equation}
using $\sum_i v_{ci}^4 \le (\sum_i v_{ci}^2)^2 = \|v_c\|^4 = 1$. No
condition beyond $\|v_c\| = 1$ is needed for \eqref{eq:fourth}; in
particular the contrast supports may be arbitrarily large.

\emph{Step 1: sampling distribution of $S_n$.} Since
$\sum_c b_c^4 \le (\max_c b_c^2)\,\Sigma_n$, the Lyapunov ratio for
$\sum_c b_c W_c / (\sigma \Sigma_n^{1/2})$ obeys
\[
\frac{\sum_c b_c^4\, \E[W_c^4]}{\sigma^4 \Sigma_n^{2}}
\;\le\; \frac{\bar\kappa_U}{\sigma^4}\cdot\frac{\max_c b_c^2}{\Sigma_n}
\;\longrightarrow\; 0
\]
by \eqref{eq:P1} and \eqref{eq:fourth}. Hence
$S_n / (\sigma \Sigma_n^{1/2}) \Rightarrow N(h, 1)$, using
$\delta_n \Sigma_n / (\sigma \Sigma_n^{1/2}) = h$.

\emph{Step 2: randomization critical value.} For full enumeration, conditionally
on $U = (U_c)_c$ the randomization statistic is
$S_n^*(s) = \sum_c s_c w_c$ with $w_c := b_c U_c$ and i.i.d.\ Rademacher
$s_c$. Abbreviate $\theta_n := \max_c b_c^2/\Sigma_n$, so
$\theta_n \to 0$, and note $\delta_n^2 = h^2\sigma^2/\Sigma_n$.

\emph{(2a) Conditional scale.} We claim
$\sum_c w_c^2 / (\sigma^2 \Sigma_n) \to_p 1$. For the mean,
$\E[U_c^2] = \sigma^2 + b_c^2\delta_n^2$, so
\[
\E\Big[\sum_c w_c^2\Big]
= \sigma^2 \Sigma_n + \delta_n^2 \sum_c b_c^4
\le \sigma^2\Sigma_n + \delta_n^2 (\max_c b_c^2)\Sigma_n
= \sigma^2\Sigma_n\big(1 + h^2\theta_n\big),
\]
which is $\sigma^2\Sigma_n(1+o(1))$. For the variance, the $U_c$ are
independent across $c$, and
$\E[U_c^4] \le 8(b_c^4\delta_n^4 + \E[W_c^4]) \le 8(b_c^4\delta_n^4 + \bar\kappa_U)$,
so
\[
\frac{\Var\big(\sum_c w_c^2\big)}{\sigma^4\Sigma_n^2}
\;\le\; \frac{\sum_c b_c^4\, \E[U_c^4]}{\sigma^4 \Sigma_n^2}
\;\le\; \frac{8\bar\kappa_U}{\sigma^4}\,\theta_n
\;+\; \frac{8\delta_n^4 \sum_c b_c^8}{\sigma^4\Sigma_n^2}
\;\le\; \frac{8\bar\kappa_U}{\sigma^4}\,\theta_n + 8h^4 \theta_n^3
\;\longrightarrow\; 0 ,
\]
where the last bound uses
$\sum_c b_c^8 \le (\max_c b_c^2)^3\Sigma_n$ and
$\delta_n^4 = h^4\sigma^4/\Sigma_n^2$. Chebyshev gives the claim.

\emph{(2b) Conditional CLT.} The same two bounds give
\[
\E\big[\sum_c w_c^4\big] \le 8\bar\kappa_U (\max_c b_c^2)\Sigma_n
+ 8\delta_n^4 (\max_c b_c^2)^3\Sigma_n
= \Sigma_n^2\, O(\theta_n),
\]
so $\sum_c w_c^4 = o_p(\Sigma_n^2)$ by Markov, while
$(\sum_c w_c^2)^2 = \sigma^4\Sigma_n^2(1+o_p(1))$ by (2a). Since
$\max_c w_c^2 \le (\sum_c w_c^4)^{1/2}$,
\[
\frac{\max_c w_c^2}{\sum_c w_c^2}
\;\le\; \Big(\frac{\sum_c w_c^4}{(\sum_c w_c^2)^2}\Big)^{1/2}
\;\longrightarrow_p\; 0 ,
\]
which is the Lindeberg condition for a weighted Rademacher sum. Hence the
conditional law of $S_n^*/(\sum_c w_c^2)^{1/2}$ converges weakly to
$N(0,1)$ in probability; because $\Phi$ is continuous, P\'olya's theorem
upgrades this to uniform convergence of the conditional distribution
functions, and therefore to convergence of conditional quantiles. Hence
$|S_n^*|/(\sum_c w_c^2)^{1/2}$ converges conditionally to $|N(0,1)|$,
whose $(1-\alpha)$-quantile is $z_{1-\alpha/2}$; with (2a), the
conditional $(1-\alpha)$-quantile of $|S_n^*|$ equals
$\sigma \Sigma_n^{1/2}\,(z_{1-\alpha/2} + o_p(1))$.
For the Monte Carlo version, the same conclusion follows when $B_n\to\infty$
from conditional Glivenko--Cantelli convergence of the empirical sign-flip
distribution. With fixed $B$, Theorem~\ref{thm:exact}'s plus-one validity remains
finite-sample exact, but the displayed Gaussian local-power formula need not be
its limiting power.

\emph{Step 3: parts (i) and (ii).} The test rejects when
$|S_n|$ exceeds that quantile; dividing by $\sigma\Sigma_n^{1/2}$ and
combining Step~1 with Step~2 by Slutsky, the rejection probability
converges to
$\Pp(|N(h,1)| > z_{1-\alpha/2})
= \Phi(h - z_{1-\alpha/2}) + \Phi(-h - z_{1-\alpha/2})$, which is~(i).
For~(ii), the oracle statistic is
$\tx'\varepsilon/(\sigma\sqrt{V_n}) = \sum_i (\tx_i/\sqrt{V_n})\,\varepsilon_i/\sigma$,
a weighted sum with $\max_i \tx_i^2/V_n = \lambda_n$; its Lyapunov ratio is
$\sum_i (\tx_i^2/V_n)^2 \E[\varepsilon_i^4]/\sigma^4 \le (\bar\kappa/\sigma^4)\lambda_n \to 0$
under \eqref{eq:P2}, so it is asymptotically $N(0,1)$ under $H_0$, the
Gaussian critical value is asymptotically correct, and along
$\beta_0 + h\sigma/\sqrt{V_n}$ its power has the same limit. The contrast
test detects drifts of size $h\sigma/\Sigma_n^{1/2}$ and the oracle drifts
of size $h\sigma/V_n^{1/2}$ at equal limiting power. The ratio of
efficacies---$\Sigma_n/\sigma^2$ for the contrast test against
$V_n/\sigma^2$ for the oracle---is therefore $\Sigma_n/V_n = \kap$, which
is the Pitman ARE; equivalently, when $\kap>0$ the contrast test needs
$1/\kap$ times the squared drift to match the oracle's power, and when
$\kap=0$ (captured information asymptotically negligible relative to $V_n$)
no finite multiple suffices.
\end{proof}

\subsection{Proof of Proposition~\ref{prop:het-supp} (Optimal weighting and heteroskedastic efficiency)}

\emph{Restatement.} Let blocks be singletons, the errors independent, symmetric, with
covariance $\Omega$ as above and $\sup_i \E[\varepsilon_i^4] \le \bar\kappa$.
For a weight vector $a = (a_c)_{c \le C_n}$ not identically zero (individual
$a_c$ may be zero) with $\sum_c a_cb_c\ne0$ --- error-independent, but possibly
depending on $\Omega$ as well as the design, hence infeasible unless $\Omega$
is known, as for the optimal weights of part~(a) --- let $T_a(u) = |\sum_c a_c u_c|$,
which is exact by Theorem~\ref{thm:exact} for every choice of $a$, and let
\[
e_n(a) \;:=\; \Big(\sum_c a_c b_c\Big)^2 \Big/ \sum_c a_c^2 \omega_c^2
\]
be its efficacy. Then:
\begin{enumerate}[label=(\alph*)]
\item $e_n(a) \le \mathcal{E}_n$ for every $a$, with equality if and only if
$a_c \propto b_c/\omega_c^2$.
\item Suppose $e_n(a)\to\infty$ and both
\begin{equation*}\tag{P1a}
\max_{c \le C_n} a_c^2\omega_c^2 \Big/ \sum_c a_c^2 \omega_c^2
\;\longrightarrow\; 0
\end{equation*}
and
\begin{equation*}\tag{P1b}
\sum_c a_c^2 b_c^2 \Big/ \Big(\sum_c a_c b_c\Big)^{2}
\;\longrightarrow\; 0 .
\end{equation*}
Then along $\beta_n = \beta_0 + h\, e_n(a)^{-1/2}$ the power of the
level-$\alpha$ test based on $T_a$ converges to
$\Phi(h - z_{1-\alpha/2}) + \Phi(-h - z_{1-\alpha/2})$.
Condition \eqref{eq:P1b} is not implied by \eqref{eq:P1a} and cannot be
dropped: see Remark~\ref{rem:p1b}. For the optimal weights of part~(a),
\eqref{eq:P1a} implies \eqref{eq:P1b}; and when $\Omega = \sigma^2 I$ and
$a_c = b_c$, both reduce to \eqref{eq:P1}, so Theorem~\ref{thm:power}(i)
is the special case it appears to be.
\item $\mathcal{E}_n \le \mathcal{V}_\Omega$, so
\[
\kap^{\mathrm{het}} \;:=\; \mathcal{E}_n / \mathcal{V}_\Omega \;\in\; [0,1] ,
\]
with equality if and only if $M_{\Omega^{-1/2}D}\Omega^{-1/2}x$ lies in
$\mathrm{span}\{\Omega^{1/2}v_c/\omega_c\}_c$. If moreover $\mathcal E_n\to\infty$,
$\max_i (M_{\Omega^{-1/2}D}\Omega^{-1/2}x)_i^2 / \mathcal{V}_\Omega \to 0$ and
$\kap^{\mathrm{het}}$ converges, then $\lim \kap^{\mathrm{het}}$ is the Pitman
asymptotic relative efficiency of the optimally weighted contrast test
with respect to the GLS oracle. Under homoskedasticity
$\kap^{\mathrm{het}} = \kap$.
\end{enumerate}

\begin{proof}
(a) Write $\sum_c a_c b_c = \sum_c (a_c \omega_c)(b_c/\omega_c)$ and apply
Cauchy--Schwarz:
$(\sum_c a_c b_c)^2 \le (\sum_c a_c^2\omega_c^2)(\sum_c b_c^2/\omega_c^2)$,
with equality iff $a_c\omega_c \propto b_c/\omega_c$, i.e.\
$a_c \propto b_c/\omega_c^2$.

(b) The proof of Theorem~\ref{thm:power}(i) goes through with $b_c$
replaced by $a_c$ in the weights, $\sigma^2$ by $\omega_c^2$, and
$\Sigma_n$ by $A_n := \sum_c a_c^2\omega_c^2$, but the roles of weight and
loading, which coincide there, now separate, and the two hypotheses enter
at different points. The fourth-moment bound \eqref{eq:fourth} becomes
$\E[W_c^4] \le \bar\kappa + 3\sigma_{\max}^4 =: \bar\kappa_U'$, again using
$\|v_c\| = 1$.

Write $\delta_n = h\,e_n(a)^{-1/2}$, so
$\delta_n^2 = h^2 A_n / (\sum_c a_c b_c)^2$, and $w_c := a_c U_c$.
\emph{Step 1} is unchanged: since $\omega_c^2 \ge \sigma_{\min}^2 > 0$,
\[
\frac{\sum_c a_c^4\,\E[W_c^4]}{A_n^2}
\;\le\; \frac{\bar\kappa_U'}{\sigma_{\min}^4}\,
\frac{\max_c a_c^2\omega_c^2}{A_n} \;\longrightarrow\; 0
\]
by \eqref{eq:P1a}, using $a_c^4 \le (a_c^2\omega_c^2)^2/\sigma_{\min}^4$;
hence $\sum_c a_c U_c / A_n^{1/2} \Rightarrow N(\pm h, 1)$, the sign being
that of $\sum_c a_c b_c$ and immaterial for the two-sided test.

\emph{Step 2a} is where \eqref{eq:P1b} is needed. Now
$\E[U_c^2] = \omega_c^2 + b_c^2\delta_n^2$, so
\[
\frac{\E\big[\sum_c w_c^2\big]}{A_n}
\;=\; 1 + \frac{\delta_n^2 \sum_c a_c^2 b_c^2}{A_n}
\;=\; 1 + h^2\,\frac{\sum_c a_c^2 b_c^2}{\big(\sum_c a_c b_c\big)^2} ,
\]
which tends to $1$ precisely under \eqref{eq:P1b}---and, absent it, the
conditional scale is inflated by a non-vanishing factor while the observed
statistic is not, which is exactly how the test loses power. For the
variance, $\E[U_c^4] \le 8(b_c^4\delta_n^4 + \bar\kappa_U')$ gives
\[
\frac{\Var\big(\sum_c w_c^2\big)}{A_n^2}
\;\le\; \frac{\sum_c a_c^4 \E[U_c^4]}{A_n^2}
\;\le\; \frac{8\bar\kappa_U'}{\sigma_{\min}^4}\,
\frac{\max_c a_c^2\omega_c^2}{A_n}
\;+\; 8h^4 \left(\frac{\sum_c a_c^2b_c^2}{(\sum_c a_cb_c)^2}\right)^{\!2}
\;\longrightarrow\; 0 ,
\]
using $\sum_c a_c^4 b_c^4 \le (\sum_c a_c^2 b_c^2)^2$ for the second term.
Chebyshev then gives $\sum_c w_c^2 / A_n \to_p 1$. \emph{Step 2b} follows
from the same two bounds, since
$\E[\sum_c w_c^4] = \sum_c a_c^4\E[U_c^4] = A_n^2\,o(1)$ by the display
above, and Step~3 is unchanged.

For the optimal weights $a_c \propto b_c/\omega_c^2$ of part~(a), put
$t_c := b_c^2/\omega_c^2$, so that $a_c^2\omega_c^2 = t_c$ and
$\sum_c a_c b_c = \sum_c t_c = \mathcal{E}_n$; then \eqref{eq:P1a} reads
$\max_c t_c/\mathcal{E}_n \to 0$, and
$\sum_c a_c^2b_c^2/(\sum_c a_cb_c)^2 = \sum_c t_c^2/\mathcal{E}_n^2
\le \max_c t_c/\mathcal{E}_n$, so \eqref{eq:P1b} follows. When
$\Omega = \sigma^2I$ and $a_c = b_c$, \eqref{eq:P1a} is \eqref{eq:P1} and
$\sum_c b_c^4/\Sigma_n^2 \le \max_c b_c^2/\Sigma_n$ gives \eqref{eq:P1b}.

(c) Put $z := \Omega^{-1/2}x$ and $g_c := \Omega^{1/2}v_c/\omega_c$. Then
$\|g_c\| = 1$, and for $c \ne c'$ we have
$g_c'g_{c'} = v_c'\Omega v_{c'}/(\omega_c\omega_{c'}) = 0$ because $\Omega$
is diagonal and the supports $S_c$ are disjoint; so $\{g_c\}$ is
orthonormal. Moreover
$g_c'\Omega^{-1/2}D = \omega_c^{-1} v_c'D = 0$, so
$g_c \in \mathrm{col}(\Omega^{-1/2}D)^\perp$. Since
$g_c'z = v_c'x/\omega_c = b_c/\omega_c$, Bessel's inequality applied to the
orthonormal system $\{g_c\}$ inside that subspace gives
\[
\mathcal{E}_n = \sum_c (g_c'z)^2 = \sum_c \big(g_c'M_{\Omega^{-1/2}D}z\big)^2
\;\le\; \big\|M_{\Omega^{-1/2}D}z\big\|^2 = \mathcal{V}_\Omega ,
\]
with the stated equality condition. The efficacy of the GLS oracle is
$\mathcal{V}_\Omega$ and its Lindeberg condition is the displayed leverage
condition on $M_{\Omega^{-1/2}D}\Omega^{-1/2}x$; the ARE claim then
follows exactly as in Theorem~\ref{thm:power}(ii), comparing efficacies
$\mathcal{E}_n$ and $\mathcal{V}_\Omega$. Setting $\Omega = \sigma^2I$ gives
$\omega_c^2 = \sigma^2$, $\mathcal{E}_n = \Sigma_n/\sigma^2$,
$\mathcal{V}_\Omega = V_n/\sigma^2$.
\end{proof}

\subsection{Proof of Proposition~\ref{prop:cyclespace} (Annihilating contrasts are the cycle space)}

\emph{Restatement.} Let $\mathcal Z := \{q \in \R^n : q'D = 0\}$.
\begin{enumerate}[label=(\alph*)]
\item $\mathcal Z$ equals the circulation (cycle) space of $G$ under any orientation of its edges from $\mathcal A$ to $\mathcal B$; in particular $\dim \mathcal Z = n - |\mathcal A| - |\mathcal B| + \#\{\text{components of } G\}$.
\item For any closed walk $e^{(1)}, e^{(2)}, \ldots, e^{(2L)}$ in $G$ that traverses distinct edges, the alternating vector $q$ with $q_{e^{(k)}} = (-1)^{k+1}$ and zeros elsewhere lies in $\mathcal Z$; for a pair of parallel edges $e, e'$, the digon vector $q_e = 1, q_{e'} = -1$ lies in $\mathcal Z$; and vectors of these two types span $\mathcal Z$.
\item $V_n = x' M_D x = \| \Pi_{\mathcal Z}\, x \|^2$, where $\Pi_{\mathcal Z}$ is the orthogonal projection onto $\mathcal Z$.
\end{enumerate}

\begin{proof}
(a) Orient every edge from its $\mathcal A$-endpoint to its $\mathcal B$-endpoint and let $M$ be the signed incidence matrix, $M_{e,v} = +1$ if $v = a(e)$, $-1$ if $v = b(e)$, $0$ otherwise. The condition $q'D = 0$ says $\sum_{e \ni v} q_e = 0$ at every vertex $v$. Because $G$ is bipartite and the orientation is uniform $\mathcal A \to \mathcal B$, each column of $M$ is either equal to the corresponding column of $D$ (for $v \in \mathcal A$) or its negative (for $v \in \mathcal B$); a constant sign per column does not change the kernel, so $\ker(D') = \ker(M')$, which is by definition the circulation space of the oriented multigraph. Its dimension is the cycle rank $n - |V| + \#\text{comp}$.

(b) Alternating $\pm 1$ along a closed walk assigns, at each visited vertex, values $+1$ and $-1$ to the two incident walk edges (walks in a bipartite graph have even length, and consecutive edges share a vertex with opposite alternation signs), so the unsigned vertex sums vanish; digons are the two-edge case. That such vectors span is the standard fact that fundamental cycles of any spanning forest form a basis of the circulation space \citep[e.g.,][\S 1.9]{Diestel2017}, and every fundamental cycle in a bipartite multigraph is an alternating closed walk or a digon.

(c) $M_D$ is the orthogonal projection onto $\mathrm{col}(D)^\perp = \ker(D') = \mathcal Z$.
\end{proof}

\subsection{Proof of Proposition~\ref{prop:contract} (Contraction principle)}

\emph{Restatement.} Remove stayers' edges via digons. Contract each remaining two-period mover $w$ with firms $\{f, f'\}$ into a single edge $\{f, f'\}$ of a \emph{firm multigraph} $G_F$, carrying the value $w$'s worker contrast. Then: (a) edge-disjoint cycle families of $G_F$ correspond bijectively to edge-disjoint bipartite cycle families through those movers, a cycle through $L$ firms corresponding to a bipartite $2L$-cycle with contrast value $\big(\sum_{k} \pm\, w_k\big)/\sqrt{2L}$ for the appropriate alternating signs; (b) parallel edges of $G_F$ are precisely firm-pair mover pairs, whose digons in $G_F$ are the four-cycles of Remark~\ref{prop:auto}. Consequently, the digon-first algorithm applies recursively: digons in $G$, then digons in $G_F$ (four-cycles in $G$), then longer cycles of $G_F$.

\begin{proof}
(a) A cycle $f_1 - f_2 - \cdots - f_L - f_1$ in $G_F$ traversing movers $w_1, \ldots, w_L$ lifts to the closed walk $f_1, w_1, f_2, w_2, \ldots, f_L, w_L, f_1$ in the bipartite graph, alternating firms and workers, of edge length $2L$ with all edges distinct because each mover contributes her own two edges and movers are distinct. The alternating $\pm 1$ contrast on this bipartite cycle assigns opposite signs to each mover's two edges, so its inner product with $\tx$ telescopes into $\pm$ sums of worker contrasts $w_k$; normalization is by $\sqrt{2L}$. Conversely, a bipartite cycle through two-period movers visits each such worker via both her edges (a two-period mover has no other edges), so it projects to a cycle of $G_F$. Edge-disjointness is preserved in both directions since bipartite edges partition by mover. (b) Immediate from the definitions, matching the contrast values: a $G_F$-digon on movers $i, j$ has value $(w_i - w_j)/\sqrt{2 \cdot 2}$ after lifting, which is the four-cycle value $(w_i - w_j)/2$.
\end{proof}

\subsection{Proof of Lemma~\ref{lem:maxgroup} (Maximality of the block-flip group among diagonal sign transformations)}

\emph{Restatement.} For $\eta \in \{\pm 1\}^E$, the map $\mathrm{diag}(\eta)$ satisfies
$\mathrm{diag}(\eta)\,\mathcal Z \subseteq \mathcal Z$ if and only if
$\eta$ is constant on the edge set of every nontrivial block of $G$
(its values on bridges being unrestricted). Consequently the group
\[
\mathcal G_{\max} := \{\eta : \mathrm{diag}(\eta)\mathcal Z \subseteq \mathcal Z\}
\big/ \{\eta \equiv 1 \text{ on all cycles}\}
\;\cong\; \{\pm 1\}^{B},
\]
one flip per nontrivial block. It is therefore the largest group of
\emph{diagonal, edgewise} $\pm 1$ transformations that leaves every
contrast in $\mathcal Z$ nuisance-free;
Proposition~\ref{prop:orbitmax} gives the corresponding statement for flip
groups required only to respect the particular contrast system in use.
The scope of the claim should be read literally: it concerns the class
$\{\mathrm{diag}(\eta) : \eta \in \{\pm 1\}^E\}$ and nothing else. It says
nothing about randomization groups acting non-diagonally (rotations,
permutations, block-orthogonal maps), about conditional procedures, or
about exact tests not built from a group at all; see
Remark~\ref{rem:openimposs} and Remark~\ref{rem:scopeimposs}.

\begin{proof}
($\Leftarrow$) If $\eta \equiv s_b$ on $E_b$, then for $z = \sum_b z_b$
with $z_b \in \mathcal Z_b$, $\mathrm{diag}(\eta) z = \sum_b s_b z_b \in
\mathcal Z$ by \eqref{eq:blockdecomp}.

($\Rightarrow$) Let $e \ne e'$ lie in the same nontrivial block
$\mathcal B$. We first record that $\mathcal B$ contains a simple cycle
$C$ through both. \citet[Prop.~3.1.1]{Diestel2017} gives this for
$2$-connected \emph{simple} graphs, and the multigraph form follows by a
two-case reduction. If $e$ and $e'$ are parallel, sharing endpoints $u,v$,
then $C = \{e, e'\}$ is itself a digon and there is nothing to prove.
Otherwise $e$ and $e'$ have distinct endpoint pairs, so their images
$\bar e \ne \bar e'$ in the underlying simple graph $\bar{\mathcal B}$ are
distinct edges. Since suppressing parallel copies changes neither the
vertex set nor the cut vertices of $\mathcal B$, and $\mathcal B$ is a
nontrivial block with $|V(\mathcal B)| \ge 3$ in this case,
$\bar{\mathcal B}$ is $2$-connected; the cited proposition supplies a
simple cycle $\bar C \subseteq \bar{\mathcal B}$ through $\bar e$ and
$\bar e'$. Lifting $\bar C$ to $\mathcal B$ by choosing the copy $e$ for
$\bar e$, the copy $e'$ for $\bar e'$, and an arbitrary copy for each
remaining edge yields a simple cycle $C$ of $\mathcal B$ containing $e$
and $e'$. (The two cases are exhaustive: a nontrivial block on two
vertices consists precisely of parallel edges.)

Let $z_C \in \mathcal Z$ be the alternating vector of $C$, supported
exactly on $E(C)$. Since $\mathrm{diag}(\eta) z_C \in \mathcal Z$ and its
support lies in $E(C)$, it belongs to $\mathcal Z \cap \R^{E(C)}$, the
cycle space of the sub-multigraph consisting of the edges of $C$ alone.
For a simple cycle---the digon included---that space has dimension
$|E(C)| - |V(C)| + 1 = 1$, i.e., equals $\mathrm{span}(z_C)$. Hence
$\mathrm{diag}(\eta) z_C = c\, z_C$ with $c \in \{\pm 1\}$, and since
$(z_C)_e \ne 0$ on all of $E(C)$, $\eta_e = c$ for every $e \in E(C)$;
in particular $\eta_e = \eta_{e'}$. As $e, e'$ were arbitrary in the
block, $\eta$ is constant on it.

The final claim is immediate: any $\eta$ with $\mathrm{diag}(\eta)
\mathcal Z \subseteq \mathcal Z$ lies in $\mathcal G_{\max}$ by
definition, and the two implications just proved identify that set with
the block-constant patterns.
\end{proof}

\subsection{Proof of Proposition~\ref{prop:orbitmax} (Flip groups adapted to a contrast system)}

\emph{Restatement.} Let $\mathcal V \subset \mathcal Z$ be a finite contrast system and let
$\mathcal G \subseteq \{\pm 1\}^E$ be a group of sign patterns
\emph{admissible} for $\mathcal V$, meaning $\mathrm{diag}(\eta) v \in
\mathcal Z$ for all $\eta \in \mathcal G$ and $v \in \mathcal V$, so that
every flipped statistic $v'\mathrm{diag}(\eta) r$ is free of the nuisance.
Write $W := \mathrm{span}\{\mathrm{diag}(\eta) v : \eta \in \mathcal G,\,
v \in \mathcal V\}$ for the orbit span. Then:
\emph{(a)} $W \subseteq \mathcal Z$ and $\mathrm{diag}(\eta) W = W$ for
every $\eta \in \mathcal G$;
\emph{(b)} if the alternating vector $z_C$ of a simple cycle $C$ lies in
$W$, then every $\eta \in \mathcal G$ is constant on $E(C)$;
\emph{(c)} let $\sim_W$ be the equivalence relation on $E$ generated by
declaring $e \sim_W e'$ whenever $e, e'$ lie on a common simple cycle $C$
with $z_C \in W$, and let $K$ be the number of $\sim_W$-classes meeting
$\bigcup_{w \in W} \mathrm{supp}(w)$. Then the image of $\mathcal G$
acting on $W$ has order at most $2^{K}$, and the capture of any orthonormal
contrast system contained in $W$ (in particular, any support system) is at most
$\|\Pi_W x\|^2 / V_n$.
Taking $\mathcal V$ to span $\mathcal Z$ recovers Lemma~\ref{lem:maxgroup}:
then $K = B$ and the bound $2^B$ is attained.

\begin{proof}
(a) Each generator lies in $\mathcal Z$ by admissibility, and $\mathcal Z$
is a subspace, so $W \subseteq \mathcal Z$. For $\eta \in \mathcal G$,
$\mathrm{diag}(\eta)$ maps the generating set $\{\mathrm{diag}(\sigma) v\}$
onto $\{\mathrm{diag}(\eta\sigma) v\}$, which is the same set because
$\eta \mathcal G = \mathcal G$ by closure; hence $\mathrm{diag}(\eta) W = W$.
(b) By (a), $\mathrm{diag}(\eta) z_C \in W \subseteq \mathcal Z$ and is
supported in $E(C)$; since $\mathcal Z \cap \R^{E(C)} = \mathrm{span}(z_C)$
as in the proof of Lemma~\ref{lem:maxgroup}, $\mathrm{diag}(\eta) z_C =
\pm z_C$, and non-vanishing of $z_C$ on $E(C)$ forces $\eta$ constant
there.
(c) By (b), $\eta$ is constant on every cycle generating $\sim_W$, hence
constant on each $\sim_W$-class. Two patterns agreeing on all classes that
meet the union of supports of $W$ act identically on $W$, so the action
factors through one sign per such class, giving at most $2^K$ distinct
elements. The capture bound is Bessel's inequality applied to the subspace
$W$. For $\mathcal V$ spanning $\mathcal Z$ we have $W = \mathcal Z$, every
simple cycle contributes its $z_C$, and by the ($\Rightarrow$) direction of
Lemma~\ref{lem:maxgroup} two edges are $\sim_W$-equivalent exactly when
they share a nontrivial block, so $K = B$.
\end{proof}

\subsection{Proof of Proposition~\ref{prop:dom} (Dominance)}

\emph{Restatement.} \emph{(a)} For any support system, the contrasts $v_j$ of
Definition~\ref{def:supp} satisfy the hypotheses of
Theorem~\ref{thm:exact}: each $v_j \in \mathcal Z$, supports are
disjoint, and each score $v_j'\varepsilon$ is symmetric under independent
symmetric errors with arbitrary heteroskedasticity. The associated
sign-flip test is exact at every sample size.
\emph{(b)} If $\mathcal C$ is an edge-disjoint cycle family and
$A_c = \mathrm{supp}(z_c)$, then for each $c$,
$(z_c'x)^2/\|z_c\|^2 \le \|\Pi_{\mathcal Z_{A_c}} x\|^2$, so
$\kap(\mathcal C) \le \kap(A_1, \ldots, A_C)$: replacing each packed
cycle vector by the local projection weakly increases capture, term by
term, at no cost in validity or in the number of contrasts.
\emph{(c)} Merging two supports, $A' = A_1 \cup A_2$, weakly increases
total capture:
$\|\Pi_{\mathcal Z_{A'}} x\|^2 \ge \|\Pi_{\mathcal Z_{A_1}} x\|^2 +
\|\Pi_{\mathcal Z_{A_2}} x\|^2$, at the cost of one contrast.

\begin{proof}
(a) A linear combination of independent symmetric random variables is
symmetric; disjoint supports give independence across $j$; membership in
$\mathcal Z$ is by construction. Theorem~\ref{thm:exact} applies
verbatim.
(b) $z_c/\|z_c\|$ is one unit vector in $\mathcal Z_{A_c}$; the
projection norm is the maximum of $(v'x)^2$ over unit $v \in
\mathcal Z_{A_c}$.
(c) $\mathcal Z_{A_1} \oplus \mathcal Z_{A_2} \subseteq \mathcal Z_{A'}$
(orthogonal, disjoint supports), and projection norms are monotone in the
subspace.
\end{proof}

\subsection{Proof of Theorem~\ref{thm:blockproj} (Block-projection test)}

\emph{Restatement.} Let $A_b = E_b$, $b = 1, \ldots, B$, be the nontrivial blocks of $G$, define
$\mathcal I_x:=\{b:\Pi_{\mathcal Z_b}x\ne0\}$ and $B_x:=|\mathcal I_x|$, and
for $b\in\mathcal I_x$ let
$v_b = \Pi_{\mathcal Z_b} x / \|\Pi_{\mathcal Z_b} x\|$. Then:
\emph{(a)} the sign-flip test with contrasts $(v_b)$ is exact at every
sample size under Assumption~\ref{ass:sym} with singleton blocks (independent,
symmetric, arbitrarily heteroskedastic errors);
\emph{(b)} its capture is $\kap = 1$: by \eqref{eq:blockdecomp},
$\sum_{b\in\mathcal I_x} (v_b'x)^2 = \sum_{b=1}^B \|\Pi_{\mathcal Z_b} x\|^2 = V_n$;
\emph{(c)} its effective score-flip group is $\{\pm 1\}^{B_x}$. It is induced
by the block-constant edgewise group $\{\pm 1\}^{B}$, which by
Lemma~\ref{lem:maxgroup} is the largest group of \emph{diagonal, edgewise}
sign flips preserving all of $\mathcal Z$; flips on omitted zero-capture blocks
act trivially on the statistic. Its capture is maximal in
the sense that $\kap \le 1$ for every orthonormal contrast system inside
$\mathcal Z$ (Bessel), a bound (b) attains. Two disclaimers belong with
(c). First, the full group $\{\pm 1\}^B$ is not maximal among exact
diagonal sign-flip tests as such---splitting a block into disjoint
supports can yield a larger group, with capture bounded by the smaller orbit
span of Proposition~\ref{prop:orbitmax}(c). The loss is strict for generic $x$
but can be zero for a treatment whose cycle-space projection already lies in
that span. Thus (c) separately asserts maximal capture and maximal preservation
of the full cycle space. Second, maximality of the \emph{group} is asserted only within
the diagonal class; see Remark~\ref{rem:scopeimposs}.

\begin{proof}
(a) is Proposition~\ref{prop:dom}(a) with $A_b = E_b$ for $b\in\mathcal I_x$. (b) is
\eqref{eq:blockdecomp}; omitted terms are zero. For (c), independent signs act
on the $B_x$ retained scores, while the other $B-B_x$ block signs lie in the
kernel of that action. The full edgewise statement is Lemma~\ref{lem:maxgroup}, plus
$\kap \le 1$ (Bessel, as after Definition~\ref{def:kappa}); the negative
half is Proposition~\ref{prop:orbitmax}(c). A witness inside the
bipartite class (not claimed minimal; e.g.\ $K_{2,4}$ is smaller) is the three-dimensional cube graph $Q_3$: it is a single
block, so its $\mathcal Z$-preserving flip group is $\{\pm 1\}$ with
$\kap = 1$, whereas the support system formed by its two opposite faces---
edge-disjoint four-cycles---is admissible with a group acting as
$\{\pm 1\}^2$, at capture $\|\Pi_W x\|^2 / V_n < 1$ for generic $x$, since
$\dim W = 2 < 5 = \dim \mathcal Z$.
\end{proof}

\section{The flip group is maximal}\label{app:maxgroup}

This section gives in full the material summarized in Section~\ref{sec:flipmax}.

Throughout, a \emph{block} of the multigraph $G$ is a maximal
$2$-connected subgraph (biconnected component), computable in linear time
by depth-first search; a pair of parallel edges (digon) is a block, and a
bridge is a trivial block containing no cycle. Let $\mathcal B_1, \ldots,
\mathcal B_B$ denote the nontrivial blocks, with edge sets $E_1, \ldots,
E_B$; these are pairwise edge-disjoint, and every cycle of $G$ has its
edges inside a single block, so the cycle space decomposes orthogonally:
\begin{equation}\label{eq:blockdecomp}
\mathcal Z \;=\; \bigoplus_{b=1}^{B} \mathcal Z_b,
\qquad \mathcal Z_b := \mathcal Z \cap \R^{E_b},
\qquad \Pi_{\mathcal Z} = \sum_b \Pi_{\mathcal Z_b},
\qquad V_n = \sum_{b=1}^B \|\Pi_{\mathcal Z_b} x\|^2 ,
\end{equation}
the last equality by Proposition~\ref{prop:cyclespace}(c) and
orthogonality of subspaces with disjoint supports.

\begin{lemma}[Maximality of the block-flip group among diagonal sign transformations]\label{lem:maxgroup}
For $\eta \in \{\pm 1\}^E$, the map $\mathrm{diag}(\eta)$ satisfies
$\mathrm{diag}(\eta)\,\mathcal Z \subseteq \mathcal Z$ if and only if
$\eta$ is constant on the edge set of every nontrivial block of $G$
(its values on bridges being unrestricted). Consequently the group
\[
\mathcal G_{\max} := \{\eta : \mathrm{diag}(\eta)\mathcal Z \subseteq \mathcal Z\}
\big/ \{\eta \equiv 1 \text{ on all cycles}\}
\;\cong\; \{\pm 1\}^{B},
\]
one flip per nontrivial block. It is therefore the largest group of
\emph{diagonal, edgewise} $\pm 1$ transformations that leaves every
contrast in $\mathcal Z$ nuisance-free;
Proposition~\ref{prop:orbitmax} gives the corresponding statement for flip
groups required only to respect the particular contrast system in use.
The scope of the claim should be read literally: it concerns the class
$\{\mathrm{diag}(\eta) : \eta \in \{\pm 1\}^E\}$ and nothing else. It says
nothing about randomization groups acting non-diagonally (rotations,
permutations, block-orthogonal maps), about conditional procedures, or
about exact tests not built from a group at all; see
Remark~\ref{rem:scopeimposs} and Remark~\ref{rem:openimposs}.
\end{lemma}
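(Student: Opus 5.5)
The plan is to prove the two directions of the characterization separately and then read off the group statement; the block decomposition \eqref{eq:blockdecomp} does the work for one direction, and a cycle-through-two-edges argument does the work for the other.

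\emph{Sufficiency.} Suppose $\eta$ is constant, equal to $s_b$, on each nontrivial block edge set $E_b$. Every $z\in\mathcal Z$ vanishes on bridges, since a bridge lies on no cycle and the cycle vectors of Proposition~\ref{prop:cyclespace}(b) span $\mathcal Z$. So \eqref{eq:blockdecomp} writes $z=\sum_b z_b$ with $z_b\in\mathcal Z_b$, and then $\mathrm{diag}(\eta)z=\sum_b s_b z_b\in\mathcal Z$. The values of $\eta$ on bridges never enter.

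\emph{Necessity.} Fix two distinct edges $e,e'$ in the same nontrivial block $\mathcal B$. I will produce a simple cycle $C\subseteq\mathcal B$ through both, in two cases.
\begin{itemize}
\item If $e,e'$ are parallel, the digon $\{e,e'\}$ is such a cycle.
\item Otherwise $\mathcal B$ has at least three vertices. Its underlying simple graph, obtained by deleting parallel copies, has the same vertices and the same cut-vertex structure, so it is $2$-connected. The standard fact that any two edges of a $2$-connected simple graph lie on a common cycle (Diestel, Prop.~3.1.1, via Menger or ear decompositions) gives a cycle there, and I lift it to $\mathcal B$ by choosing $e$ and $e'$ as the copies used.
\end{itemize}
With the alternating vector $z_C$ of $C$ in hand, $\mathrm{diag}(\eta)z_C$ lies in $\mathcal Z$ and is supported in $E(C)$. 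The space $\mathcal Z\cap\R^{E(C)}$ is the cycle space of a single cycle: by the dimension formula of Proposition~\ref{prop:cyclespace}(a) it has dimension $|E(C)|-|V(C)|+1=1$. Hence $\mathrm{diag}(\eta)z_C=\pm z_C$. Because $z_C$ is nonzero on every edge of $C$, $\eta$ is constant on $E(C)$, and in particular $\eta_e=\eta_{e'}$.

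\emph{Group identification.} The admissible $\eta$ are exactly the block-constant patterns with arbitrary bridge values. Quotienting by the patterns equal to $1$ on all cycle edges, that is, on all nontrivial blocks, discards the bridge coordinates. What remains is $\{\pm1\}^B$, one sign per nontrivial block. Maximality within the diagonal edgewise class then follows, since every admissible pattern is in this set by definition.

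The main obstacle is the necessity step, specifically the multigraph version of the ``two edges lie on a common cycle'' lemma: one must handle parallel edges and blocks on two vertices correctly, and check that suppressing parallels preserves $2$-connectivity. The rest is linear algebra from \eqref{eq:blockdecomp} and Proposition~\ref{prop:cyclespace}.
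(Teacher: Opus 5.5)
Your proposal is correct and follows the paper's proof essentially step for step: sufficiency from the orthogonal block decomposition \eqref{eq:blockdecomp}; necessity via the two-case reduction (a digon for parallel edges, otherwise a $2$-connected underlying simple graph) to Diestel's common-cycle fact, followed by the one-dimensionality of $\mathcal Z\cap\R^{E(C)}$; and the group identification read off directly. Your added remark that every $z\in\mathcal Z$ vanishes on bridges only makes explicit what the paper takes from the decomposition.
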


\begin{proof}
See Appendix~\ref{app:proofs}.
\end{proof}

\begin{remark}[What remains genuinely open on the impossibility side]\label{rem:openimposs}
Lemma~\ref{lem:maxgroup} closes the door on richer diagonal flip
groups, and only on those; Remark~\ref{rem:scopeimposs} lists what is left
outside. It does not prove that no exact test whatsoever extracts more
than block granularity: non-diagonal group actions, conditional procedures
(conditioning on cross-support statistics) and non-group constructions are
not covered. We conjecture
that any test with exact level $\alpha$ uniformly over independent
symmetric heteroskedastic errors has local power factoring through the
block-flip orbit statistics---a Lehmann--Stein--type completeness
statement---but we do not have a proof, and the convolution structure of
the observable $\Pi_{\mathcal Z}\varepsilon$ across a block makes the
completeness argument delicate.
\end{remark}

Lemma~\ref{lem:maxgroup} bounds the flip groups that preserve the whole
cycle space. A test, however, need only keep the contrasts it actually
uses free of the nuisance, and a group adapted to a smaller contrast
system can be larger. The following proposition states what survives at
that level of generality; it is the precise form of the informal claim
that sign-flip groups act block-constantly.

\begin{proposition}[Flip groups adapted to a contrast system]\label{prop:orbitmax}
Let $\mathcal V \subset \mathcal Z$ be a finite contrast system and let
$\mathcal G \subseteq \{\pm 1\}^E$ be a group of sign patterns
\emph{admissible} for $\mathcal V$, meaning $\mathrm{diag}(\eta) v \in
\mathcal Z$ for all $\eta \in \mathcal G$ and $v \in \mathcal V$, so that
every flipped statistic $v'\mathrm{diag}(\eta) r$ is free of the nuisance.
Write $W := \mathrm{span}\{\mathrm{diag}(\eta) v : \eta \in \mathcal G,\,
v \in \mathcal V\}$ for the orbit span. Then:
\emph{(a)} $W \subseteq \mathcal Z$ and $\mathrm{diag}(\eta) W = W$ for
every $\eta \in \mathcal G$;
\emph{(b)} if the alternating vector $z_C$ of a simple cycle $C$ lies in
$W$, then every $\eta \in \mathcal G$ is constant on $E(C)$;
\emph{(c)} let $\sim_W$ be the equivalence relation on $E$ generated by
declaring $e \sim_W e'$ whenever $e, e'$ lie on a common simple cycle $C$
with $z_C \in W$, and let $K$ be the number of $\sim_W$-classes meeting
$\bigcup_{w \in W} \mathrm{supp}(w)$. Then the image of $\mathcal G$
acting on $W$ has order at most $2^{K}$, and the capture of any orthonormal
contrast system contained in $W$ (in particular, any support system) is at most
$\|\Pi_W x\|^2 / V_n$.
Taking $\mathcal V$ to span $\mathcal Z$ recovers Lemma~\ref{lem:maxgroup}:
then $K = B$ and the bound $2^B$ is attained.
\end{proposition}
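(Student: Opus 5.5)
The plan is to prove the three parts in order, each resting on the previous one, and to reuse the one-dimensionality fact from the proof of Lemma~\ref{lem:maxgroup}: for a simple cycle $C$ (digons included), $\mathcal Z\cap\R^{E(C)}=\mathrm{span}(z_C)$.

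For (a), each generator $\mathrm{diag}(\eta)v$ lies in $\mathcal Z$ by admissibility. Since $\mathcal Z$ is a subspace, $W\subseteq\mathcal Z$. For invariance, fix $\eta\in\mathcal G$ and note that $\mathrm{diag}(\eta)\mathrm{diag}(\sigma)=\mathrm{diag}(\eta\sigma)$. Left multiplication by $\eta$ permutes the group $\mathcal G$, so $\mathrm{diag}(\eta)$ maps the generating set onto itself. Linearity then gives $\mathrm{diag}(\eta)W=W$.

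For (b), take $z_C\in W$ and $\eta\in\mathcal G$. By (a), $\mathrm{diag}(\eta)z_C\in W\subseteq\mathcal Z$, and its support lies in $E(C)$. By the one-dimensionality fact it equals $c\,z_C$ for some scalar $c$, and comparing coordinates gives $c\in\{\pm1\}$. Because $z_C$ is nonzero on every edge of $C$, we get $\eta_e=c$ for all $e\in E(C)$.

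For (c), the group-order bound comes first. By (b), every $\eta\in\mathcal G$ is constant on each generating cycle of $\sim_W$, hence constant on each $\sim_W$-class, by transitivity along the chains that generate the equivalence relation. The action of $\mathrm{diag}(\eta)$ on any $w\in W$ depends only on $\eta$ restricted to $\mathrm{supp}(w)$. It therefore depends only on the signs $\eta$ takes on the classes that meet $\bigcup_{w\in W}\mathrm{supp}(w)$. So the homomorphism from $\mathcal G$ to the linear maps of $W$ factors through a map into $\{\pm1\}^K$, and its image has at most $2^K$ elements.

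The capture bound follows from Bessel's inequality. Let $\{u_j\}$ be an orthonormal system in $W$. Since $W\subseteq\mathcal Z$, we have $\Pi_W\Pi_{\mathcal Z}=\Pi_W$, and therefore
\[
\sum_j(u_j'x)^2=\sum_j(u_j'\Pi_Wx)^2\le\|\Pi_Wx\|^2 .
\]
Dividing by $V_n$ gives the bound. A support system whose projection contrasts are generated inside $W$ is a special case.

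For the recovery statement, suppose $\mathcal V$ spans $\mathcal Z$. Then $W=\mathcal Z$, so every simple cycle has $z_C\in W$. By the cycle-through-two-edges argument in the ($\Rightarrow$) direction of Lemma~\ref{lem:maxgroup}, two edges are $\sim_W$-equivalent exactly when they lie in a common nontrivial block. Bridges lie in no support of $\mathcal Z$. Hence $K=B$. The bound $2^B$ is attained because, by the ($\Leftarrow$) direction of that lemma, block-constant patterns are admissible. Moreover, distinct block patterns act distinctly on $\mathcal Z$: each block carries a nonzero $z_C$, which a flip on that block negates.

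I expect the main obstacle to be the factoring step in (c). The definition of $K$ counts classes that meet supports of $W$, and one must check that edges outside every support of $W$ impose no constraint and contribute no distinct action. One must also check that two patterns agreeing on the relevant classes really induce the same linear map on all of $W$, not only on its generators. Both points follow from $\mathrm{diag}(\eta)w$ depending only on $\eta|_{\mathrm{supp}(w)}$, and I will state that observation explicitly.
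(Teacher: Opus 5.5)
Your proposal is correct and follows the paper's proof step for step. Part (a) uses closure of $\mathcal G$, which permutes the generating set. Part (b) uses the one-dimensionality of $\mathcal Z\cap\R^{E(C)}$ from Lemma~\ref{lem:maxgroup}. Part (c) factors the action through one sign per $\sim_W$-class and then applies Bessel inside $W$. You also supply two details the paper leaves implicit: bridges carry no circulation, so $K=B$ exactly, and distinct block-constant patterns act distinctly on $\mathcal Z$, so $2^B$ is attained.
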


\begin{proof}
See Appendix~\ref{app:proofs}.
\end{proof}

Part (c) is the granularity trade-off in its sharpest form, and it cuts
both ways. A flip group can be made larger than $2^B$---by
splitting a block into several supports, so that $W$ no longer contains
cycles joining them---but only by shrinking $W$, and with it the capture
ceiling $\|\Pi_W x\|^2 / V_n$. Taking $W=\mathcal Z$ maximizes that ceiling
uniformly over treatments and forces the effective group down to $2^B$.
A proper $W$ can nevertheless capture a particular $x$ fully when
$\Pi_{\mathcal Z}x\in W$, so the realized loss is generic rather than
universal. Where to sit on this trade-off is precisely the splitting problem of
Section~\ref{sec:residual}.

The theta graph makes the lemma concrete. Take workers $a_1, a_2$ and
firms $b_1, b_2, b_3$ with edges forming three internally disjoint
$a_1$--$a_2$ paths, so three cycles pairwise sharing edges and
$\dim \mathcal Z = 2$. The whole theta is one block; by the lemma the
only computable flips are global, and indeed a direct computation shows
$\mathrm{diag}(\eta) z \in \mathcal Z$ for both basis cycles forces
$\eta$ constant on all six edges. Any edge-disjoint packing extracts one
cycle of the three and discards two: with $\pm 1$ cycle contrasts, up to
half the theta's capture can be lost. The next observation is that
nothing forces $\pm 1$ contrasts.

\section{Efficiency under heteroskedasticity}\label{app:het}

Theorem~\ref{thm:power} imposes a common error variance, whereas
Theorem~\ref{thm:exact} requires none. The gap matters for how $\kap$
should be read: it is the efficiency of the \emph{equally weighted}
contrast test against a \emph{homoskedastic} oracle, and neither the
optimal statistic nor the benchmark survives unchanged when the errors are
heteroskedastic. We record the general form, which is no harder.

Let $\Omega := \mathrm{diag}(\sigma_1^2, \ldots, \sigma_n^2)$ with
$0 < \sigma_{\min}^2 \le \sigma_i^2 \le \sigma_{\max}^2 < \infty$, and set
\[
\omega_c^2 \;:=\; \Var(v_c'\varepsilon) \;=\; v_c'\Omega v_c
\;=\; \sum_i v_{ci}^2 \sigma_i^2 ,
\qquad
\mathcal{E}_n \;:=\; \sum_{c \le C_n} b_c^2/\omega_c^2 .
\]
The natural benchmark is now the generalized least squares oracle: the
infeasible test based on the efficient score for $\beta$ when $\Omega$ is
known and $\gamma$ is not, whose efficacy is
\[
\mathcal{V}_\Omega \;:=\; \big\| M_{\Omega^{-1/2}D}\, \Omega^{-1/2} x \big\|^2
\;=\; x'\Omega^{-1}x - x'\Omega^{-1}D(D'\Omega^{-1}D)^{-}D'\Omega^{-1}x .
\]
When $\Omega = \sigma^2 I$ this is $V_n/\sigma^2$ and
$\mathcal{E}_n = \Sigma_n/\sigma^2$.

\begin{proposition}[Optimal weighting and heteroskedastic efficiency]\label{prop:het-supp}
Let blocks be singletons, the errors independent, symmetric, with
covariance $\Omega$ as above and $\sup_i \E[\varepsilon_i^4] \le \bar\kappa$.
For a weight vector $a = (a_c)_{c \le C_n}$ not identically zero (individual
$a_c$ may be zero) with $\sum_c a_cb_c\ne0$ --- error-independent, but possibly
depending on $\Omega$ as well as the design, hence infeasible unless $\Omega$
is known, as for the optimal weights of part~(a) --- let $T_a(u) = |\sum_c a_c u_c|$,
which is exact by Theorem~\ref{thm:exact} for every choice of $a$, and let
\[
e_n(a) \;:=\; \Big(\sum_c a_c b_c\Big)^2 \Big/ \sum_c a_c^2 \omega_c^2
\]
be its efficacy. Then:
\begin{enumerate}[label=(\alph*)]
\item $e_n(a) \le \mathcal{E}_n$ for every $a$, with equality if and only if
$a_c \propto b_c/\omega_c^2$.
\item Suppose $e_n(a)\to\infty$ and both
\begin{equation*}\tag{P1a}\label{eq:P1a}
\max_{c \le C_n} a_c^2\omega_c^2 \Big/ \sum_c a_c^2 \omega_c^2
\;\longrightarrow\; 0
\end{equation*}
and
\begin{equation*}\tag{P1b}\label{eq:P1b}
\sum_c a_c^2 b_c^2 \Big/ \Big(\sum_c a_c b_c\Big)^{2}
\;\longrightarrow\; 0 .
\end{equation*}
Then along $\beta_n = \beta_0 + h\, e_n(a)^{-1/2}$ the power of the
level-$\alpha$ test based on $T_a$ converges to
$\Phi(h - z_{1-\alpha/2}) + \Phi(-h - z_{1-\alpha/2})$.
Condition \eqref{eq:P1b} is not implied by \eqref{eq:P1a} and cannot be
dropped: see Remark~\ref{rem:p1b}. For the optimal weights of part~(a),
\eqref{eq:P1a} implies \eqref{eq:P1b}; and when $\Omega = \sigma^2 I$ and
$a_c = b_c$, both reduce to \eqref{eq:P1}, so Theorem~\ref{thm:power}(i)
is the special case it appears to be.
\item $\mathcal{E}_n \le \mathcal{V}_\Omega$, so
\[
\kap^{\mathrm{het}} \;:=\; \mathcal{E}_n / \mathcal{V}_\Omega \;\in\; [0,1] ,
\]
with equality if and only if $M_{\Omega^{-1/2}D}\Omega^{-1/2}x$ lies in
$\mathrm{span}\{\Omega^{1/2}v_c/\omega_c\}_c$. If moreover $\mathcal E_n\to\infty$,
$\max_i (M_{\Omega^{-1/2}D}\Omega^{-1/2}x)_i^2 / \mathcal{V}_\Omega \to 0$ and
$\kap^{\mathrm{het}}$ converges, then $\lim \kap^{\mathrm{het}}$ is the Pitman
asymptotic relative efficiency of the optimally weighted contrast test
with respect to the GLS oracle. Under homoskedasticity
$\kap^{\mathrm{het}} = \kap$.
\end{enumerate}
\end{proposition}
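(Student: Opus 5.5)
The statement to prove is Proposition~\ref{prop:het-supp}. The plan takes the three parts in order, reusing the machinery of Theorem~\ref{thm:power} wherever possible.

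\emph{Part (a).} This is a pure Cauchy--Schwarz argument. Write $\sum_c a_cb_c=\sum_c (a_c\omega_c)(b_c/\omega_c)$. This gives $(\sum_c a_cb_c)^2\le(\sum_c a_c^2\omega_c^2)\,\mathcal E_n$, with equality iff $a_c\omega_c\propto b_c/\omega_c$. Positivity of $\omega_c^2$ follows from $\sigma_{\min}^2>0$ and $\|v_c\|=1$.

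\emph{Part (b).} I will rerun the three-step proof of Theorem~\ref{thm:power}(i) with the weights $a_c$ in place of $b_c$, $\omega_c^2$ in place of $\sigma^2$, and $A_n:=\sum_c a_c^2\omega_c^2$ in place of $\sigma^2\Sigma_n$.

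\begin{itemize}
\item The fourth-moment bound \eqref{eq:fourth} becomes $\E W_c^4\le\bar\kappa+3\sigma_{\max}^4$.
\item Step~1 (Lyapunov for $\sum_c a_cW_c/A_n^{1/2}$) needs only (P1a), after converting $a_c^4\le (a_c^2\omega_c^2)^2/\sigma_{\min}^4$. The drift term is $\delta_n\sum_c a_cb_c/A_n^{1/2}=\pm h$ by the choice $\delta_n=h\,e_n(a)^{-1/2}$, so the limit is $N(\pm h,1)$. The sign is irrelevant for the two-sided statistic.
\item Step~2a (conditional scale) is the point where the two roles of $b_c$ separate. I compute $\E\sum_c a_c^2U_c^2/A_n=1+h^2\sum_c a_c^2b_c^2/(\sum_c a_cb_c)^2$, which tends to $1$ exactly under (P1b).
\item The variance of $\sum_c a_c^2U_c^2$ is bounded using $\E U_c^4\le 8(b_c^4\delta_n^4+\bar\kappa_U')$, together with (P1a) for the noise part and $\sum_c a_c^4b_c^4\le(\sum_c a_c^2b_c^2)^2$ with (P1b) for the signal part.
\item Step~2b (the Lindeberg ratio $\max_c w_c^2/\sum_c w_c^2\to_p0$) and Step~3 then go through verbatim, as do P\'olya and Slutsky.
\end{itemize}

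The side claims in (b) are short. For $a_c\propto b_c/\omega_c^2$, set $t_c=b_c^2/\omega_c^2$; then $\sum_c t_c^2/\mathcal E_n^2\le\max_c t_c/\mathcal E_n$, so (P1a) implies (P1b). The homoskedastic reduction with $a_c=b_c$ is the same inequality $\sum_c b_c^4/\Sigma_n^2\le\max_c b_c^2/\Sigma_n$. For non-droppability of (P1b) I would cite Remark~\ref{rem:p1b}. Failing that, I would sketch one contrast with large loading but small weight times variance: the inflated conditional scale lowers the limiting power.

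\emph{Part (c).} This is a whitening version of Bessel's inequality. Set $z=\Omega^{-1/2}x$ and $g_c=\Omega^{1/2}v_c/\omega_c$.
\begin{itemize}
\item Because $\Omega$ is diagonal and the supports are disjoint, the $g_c$ are orthonormal.
\item They lie in $\mathrm{col}(\Omega^{-1/2}D)^\perp$, since $g_c'\Omega^{-1/2}D=v_c'D/\omega_c=0$.
\item Their loadings are $g_c'z=b_c/\omega_c$.
\end{itemize}
Hence $\mathcal E_n=\sum_c(g_c'M z)^2\le\|Mz\|^2=\mathcal V_\Omega$, where $M=M_{\Omega^{-1/2}D}$. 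Equality holds iff $Mz$ lies in the span of the $g_c$.

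The ARE claim copies Theorem~\ref{thm:power}(ii). The GLS score is a weighted sum of $\varepsilon_i/\sigma_i$ with weights $(Mz)_i/\mathcal V_\Omega^{1/2}$, and the stated leverage condition gives Lyapunov. The efficacies $\mathcal E_n$ (optimal weights, via part (a)) and $\mathcal V_\Omega$ are then compared along matched drifts. Setting $\Omega=\sigma^2I$ returns $\kap$.

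The main obstacle is Step~2a of part (b): keeping the signal contribution to the randomization scale under control once the weights are no longer the loadings. Everything else transfers mechanically.
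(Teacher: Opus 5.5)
Your proposal is correct and follows the paper's proof essentially step for step. For (a) you apply Cauchy--Schwarz to $\sum_c (a_c\omega_c)(b_c/\omega_c)$. For (b) you rerun Theorem~\ref{thm:power} with $A_n=\sum_c a_c^2\omega_c^2$: (P1a) controls the Lyapunov ratio and the noise part of the scale variance, while (P1b) controls the mean of the conditional scale and, via $\sum_c a_c^4b_c^4\le(\sum_c a_c^2b_c^2)^2$, its signal part. For (c) you apply Bessel's inequality to the whitened orthonormal system $g_c=\Omega^{1/2}v_c/\omega_c$ inside $\mathrm{col}(\Omega^{-1/2}D)^\perp$.

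The only looser spot is your fallback sketch for why (P1b) cannot be dropped. In Remark~\ref{rem:p1b} power actually collapses to $\alpha$: global sign reversal of the single dominant coordinate makes the limiting randomization law reproduce the $|N(h,1)|$ law of the observed statistic. That is a sharper mechanism than an inflated conditional scale alone.
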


\begin{proof}
See Appendix~\ref{app:proofs}.
\end{proof}

\begin{remark}[Why \eqref{eq:P1b} is needed]\label{rem:p1b}
Conditions \eqref{eq:P1a} and \eqref{eq:P1b} constrain different objects:
the first says no contrast dominates the \emph{null} dispersion of the
statistic, the second that none dominates its \emph{signal}. When weights
are matched to loadings the two coincide, which is why a single condition
suffices in Theorem~\ref{thm:power}; for a general $a$ they can come
apart badly. Take $\omega_c \equiv 1$, equal weights $a_c \equiv 1$, and a
loading vector concentrated on one contrast, $b_1 = C_n$ and $b_c = 0$ for
$c \ge 2$---a packing in which a single cycle carries all the treatment
variation and the rest are noise. Then $\max_c a_c^2\omega_c^2/A_n = 1/C_n
\to 0$, so \eqref{eq:P1a} holds, while
$\sum_c a_c^2b_c^2/(\sum_c a_cb_c)^2 = 1$ for every $C_n$, so
\eqref{eq:P1b} fails. Here $e_n(a) = C_n\to\infty$, so the local alternative is
$\delta_n = h C_n^{-1/2}$, and the single score has signal
$b_1\delta_n=hC_n^{1/2}$. Conditional on the data, global sign reversal of
that dominant coordinate makes the limiting absolute randomization law the
same $|N(h,1)|$ law as the observed statistic. The test therefore has no
asymptotic power beyond size: power converges to $\alpha$, not to the limit
in~(b)---at $h = 3$ and
$\alpha = 0.05$, $0.85$ is claimed and $0.05$ obtains. This is the same
one-fat-coordinate degeneracy as Remark~\ref{rem:p1}, arriving through the
signal rather than through the noise, and it is verified numerically in
\texttt{verification/verify\_power\_conditions.py}.
\end{remark}

\begin{remark}[Feasibility and the packing objective]\label{rem:hetfeas}
The optimal weights $b_c/\omega_c^2$ are infeasible, and with one contrast
score per cycle the $\omega_c^2$ are not consistently estimable
individually. This is a pure efficiency question, not a validity one:
Theorem~\ref{thm:exact} holds for every fixed
$T : \R^{C} \to \R$, including statistics that weight the coordinates
using $|u_1|, \ldots, |u_C|$, since such a $T$ is still a fixed function of
its argument and the sign-flip group still acts on $u$. Adaptive
weighting therefore cannot break exactness and can only be judged on
power---the studentized variant of Remark~\ref{rem:stat} is one such
choice.

Two consequences for what follows. First, $\kap$ and
$\kap^{\mathrm{het}}$ are not ordered in general, so the empirical values
reported in Section~\ref{sec:kappa} should be read as the homoskedastic
benchmark rather than as an efficiency claim valid under the
heteroskedasticity that Theorem~\ref{thm:exact} tolerates. Second, the
packing problem of Section~\ref{sec:packing} maximizes $\kap$; by
Proposition~\ref{prop:het-supp}(c) that is the correct objective under
homoskedasticity, and under heteroskedasticity the correct objective is
$\mathcal{E}_n$, which reweights each candidate cycle by its own
$\omega_c^{-2}$ and so favours cycles through low-variance observations.
Our algorithm targets $\kap$; we regard the weighted packing problem as
open.
\end{remark}

\noindent The remaining appendices collect material that supports the paper without being needed to follow it: a detailed comparison with existing methods, the block structure of the two real designs, Monte Carlo evidence on dependence, the non-canonicity of the contrast system, and a diffuse dense-panel application. Every proof, together with the flip-group maximality results and the heteroskedastic efficiency theory, is in Appendices~\ref{app:proofs}--\ref{app:het} above.

\section{Detailed comparison with existing methods}

This section expands the summary given in Section~\ref{sec:related}.

Beyond feasibility, the comparison with the few-cluster literature clarifies what exact annihilation buys. \citet{CRS2017} obtain asymptotic validity of sign-change randomization when a fixed number of group-level statistics are asymptotically normal; \citet{CSS2021} show the wild cluster bootstrap is such a test and require homogeneity restrictions on the covariate distribution across clusters; \citet{Toulis2022} requires an analogous homogeneity condition for cluster sign flips with a fixed number of clusters. All three are asymptotic schemes in which nuisance parameters are estimated and the invariance holds only in the limit. That mode of argument goes back to \citet{Romano1990}, who characterizes when a randomization test retains asymptotic level without an exact group invariance, and it is the mode our construction avoids rather than refines. In our construction nothing is estimated---$q_c'D = 0$ holds exactly, by design---so Theorem~\ref{thm:exact} is finite-sample and homogeneity-free; the trade is the explicit symmetry in Assumption~\ref{ass:sym}, which \citet{DutzZhang2026} indicate cannot be substantially weakened while preserving exactness. The sign-flipped score tests of \citet{HemerikGoemanFinos2020} and \citet{DeSantisEtAl2025} are the closest statistical relatives; there, flipping estimated score contributions yields asymptotically exact tests, with corrections for nuisance estimation, whereas annihilation removes the nuisance before flipping. Finally, \citet{IbragimovMuller2010} offer a different few-group route via $t$-statistics on group estimates, which requires each group to identify $\beta$ on its own; cycle contrasts require nothing of the sort. On the network side, \citet{JochmansWeidner2019} characterize how graph connectivity governs the accuracy of estimated fixed effects, and \citet{KSS2020} and \citet{Jochmans2022} develop leave-out variance estimation under diffuse-score asymptotics; our results are complementary, operating exactly when those asymptotics fail. \citet{Crippa2025} uses matching-network cycles to test the additive TWFE specification itself; we take the specification as given and use cycles for exact inference on $\beta$.

A separate branch of the literature seeks exact tests for a scalar coefficient in fixed-design linear models with nuisance regressors, and it is worth being precise about where each stands relative to Theorem~\ref{thm:exact}. \citet{WenWangWang2025} project residuals onto the orthogonal complement of the union of the original and permuted design spaces, obtaining finite-sample validity under exchangeable noise whenever $p < n/2$---a condition that, like that of \citet{LeiBickel2021}, fails under saturation. \citet{DHaultfoeuilleTuvaandorj2024} and \citet{Tuvaandorj2024} develop permutation tests that are exact under independence between the tested regressor and the remaining regressors (respectively, between instruments and structural errors) and asymptotically valid, allowing heteroskedasticity, when that independence is relaxed; \citet{DiCiccioRomano2017} is the antecedent. In the two-way designs studied here the treatment and the fixed-effect dummies are mechanically dependent---the dummies are functions of the same match structure that generates $x$---so the exactness conditions of these tests are unavailable, and what remains is asymptotic validity of the kind \eqref{eq:lambda} governs. Closest in spirit to our construction is \citet{LiZhouZhang2026}, who place the CPT/PALMRT lineage in an explicit group framework: any finite group of permutation matrices yields, under fully exchangeable noise, Type I control at level $2\alpha$ for grouped permutation-augmented tests, and they exhibit worst-case designs showing the factor two is unimprovable; power is analyzed through a spectral separation between the target regressor and its permuted projections, optimized by a design-adaptive choice of group under sub-Gaussian design assumptions, and their extension beyond exchangeability proceeds by weighted-conformal arguments, with control degrading in total-variation distance from group symmetry. The relationship to the present paper is one of adjacent regimes rather than competing solutions. Their invariance is permutational and requires exchangeability, which heteroskedasticity breaks---indeed heteroskedasticity appears there only as motivation for the total-variation bound; ours is a sign-flip (reflection) invariance requiring symmetry but tolerating arbitrary heteroskedasticity, and delivering level exactly $\alpha$ with no degradation term. Their feasibility regime is $p < n/2$ throughout, inherited from \citet{WenWangWang2025}, and so excludes saturated fixed-effect designs; the present paper lives entirely inside that exclusion. And their group optimization searches subgroups of the symmetric group for a generic design, while our packing problem (Section~\ref{sec:packing}) is posed on a specific object---the cycle space of the bipartite design multigraph---whose solution is dictated by graph structure rather than search: digons, automatically disjoint firm-pair four-cycles (Remark~\ref{prop:auto}), and contraction (Proposition~\ref{prop:contract}). The two programmes coincide in identifying the choice of invariance group as the locus of power---a question their conclusion lists as open for complex dependence structures, and which Section~\ref{sec:overlap} answers for sign-flip groups in fixed-effect designs. The general principle that a well-chosen subgroup can dominate the full group \citep{KoningHemerik2024, RamdasEtAl2023} applies to both. On the network side, beyond \citet{JochmansWeidner2019}, \citet{KSS2020}, and \citet{Jochmans2022}, two recent contributions bear on the same designs from other directions: \citet{Sakamoto2025} studies inference on the fixed effects themselves under node- and edge-level dependence, where least squares can be inconsistent and a Conley--Taber-style procedure restores asymptotic validity; and \citet{ChengHoSchorfheide2025} model limited mobility spectrally, through Laplacian eigenvalues tending to zero, and propose an empirical Bayes estimator. Both target objects we do not---the fixed effects and their moments, rather than $\beta$---and both are asymptotic; we note them because they document, independently, that the concentrated regime is where the applied action is.

\section{Block structure of real designs}

This section gives in full the material summarized in Section~\ref{sec:residual}.

\subsection*{The formal statement of the splitting problem}

\begin{center}
\emph{Given a biconnected block $\mathcal B$ with local cycle space
$\mathcal Z_{\mathcal B}$ and target count $k$, partition $E(\mathcal B)$
into supports $A_1, \ldots, A_k$ maximizing
$\sum_j \|\Pi_{\mathcal Z_{A_j}} x\|^2$, subject to a balance constraint
$\max_j \|\Pi_{\mathcal Z_{A_j}} x\|^2 \le \tau \sum_j \|\Pi_{\mathcal Z_{A_j}} x\|^2$.}
\end{center}

Cycle packing (Section~\ref{sec:packing}) is the special case in which
each $A_j$ is required to be a single cycle and the projection is
replaced by the $\pm 1$ vector; Proposition~\ref{prop:dom}(b--c) shows
both restrictions only lose capture. The interpolation endpoints are
$k = 1$ (full capture, no granularity) and $k = \dim \mathcal Z_{\mathcal B}$
(cycle-basis granularity, capture bounded by the best disjoint packing).
Ear decompositions are the natural splitting tool: a biconnected graph is
an ear sequence, each ear adding one dimension to
$\mathcal Z_{\mathcal B}$, and peeling ears with small marginal capture
produces nested support systems along which
$\sum_j \|\Pi_{\mathcal Z_{A_j}} x\|^2$ can be tracked exactly; we do not
know sharp approximation guarantees and leave them open, together with
the heteroskedastic version in which each support's contrast maximizes
$(v'x)^2 / v'\Omega v$ (Remark~\ref{rem:hetfeas}).

\subsection*{Block structure of real designs}

The block decomposition is a linear-time computation, and on the two real
designs of Section~\ref{sec:numerics} it returns one verdict. On the
\citet{KSS2020} network the observation multigraph has $27{,}694$
nontrivial blocks---$27{,}564$ stayer digons and $130$ larger blocks, the
largest containing $12{,}356$ edges---and the identity
$\sum_b \|\Pi_{\mathcal Z_b} x\|^2 = V_n$ verifies to six digits for both
treatments, so the $\kap = 1$ of Theorem~\ref{thm:blockproj} is attained.
For the match-level treatment all $27{,}564$ stayer digons have zero capture,
so they are omitted from the block test and do not contribute orbit granularity;
all $130$ larger blocks are active, giving $B_x=130$. Moreover,
the giant block carries $93.7\%$ of
$V_n$ (effective number of contrasts $\approx 1$), and for the
time-varying covariate its share is $14.3\%$; on the real F-score panel of
Section~\ref{sec:kappa} the graph splits into exactly three nontrivial
blocks, one per country, the largest---Poland's, with $122$ of the $217$
firm-years---carrying $69.6\%$ of $V_n$ ($C_{\mathrm{eff}} = 1.9$).
The balance condition $\max_b\|\Pi_{\mathcal Z_b}x\|^2/V_n\to0$ therefore fails in
every case: full capture is achievable but concentrated, and the
splitting problem above---not overlap---is the binding constraint on
realistic designs. Consistently, the within-support upgrades of
Proposition~\ref{prop:dom} are nearly exhausted on these networks: on the
KSS match-level packing, firm-pair projection contrasts raise $\kap$ from
$0.509$ to $0.523$ while concentrating loadings (maximum
$b_c^2$-share $0.007 \to 0.168$), and on the real F-score panel the same
move lowers capture ($0.795 \to 0.723$, with the support count
falling from $45$ to $18$) because on a dense graph large supports compete
for edges---dominance holds per support and per merge
(Proposition~\ref{prop:dom}), not across changes of the support family,
so support granularity is itself a choice variable of the splitting
problem. Replication code for all of these computations accompanies the
paper.

\section{Monte Carlo evidence on dependence}

This section gives in full the simulation evidence summarized in Appendix~H.

\begin{table}[ht]
\centering
\caption{Empirical size at nominal $5\%$ under dependence, real F-score design, true null $\beta = 0$.}
\label{tab:depmc}
\small
\setlength{\tabcolsep}{4pt}
\begin{tabular}{@{}lcccc@{}}
\toprule
Error process & HC2 $t$ & clustered $t$ & obs.\ cycles ($C{=}45$) & firm groups ($C{=}8$) \\
\midrule
\multicolumn{5}{l}{\emph{Covered by Assumption~\ref{ass:sym} with singleton blocks}}\\
\quad independent Laplace & 0.056 & 0.047 & 0.053 & 0.051 \\
\quad independent, heteroskedastic & 0.045 & 0.039 & 0.054 & 0.048 \\
\quad additive firm $+$ period shocks & 0.046 & 0.040 & 0.049 & 0.046 \\
\addlinespace
\multicolumn{5}{l}{\emph{Covered only with blocks $=$ firms}}\\
\quad within-firm AR(1), $\rho = 0.6$ & 0.040 & 0.041 & 0.038 & 0.048 \\
\quad within-firm AR(1), $\rho = 0.9$ & 0.046 & 0.041 & 0.041 & 0.047 \\
\addlinespace
\multicolumn{5}{l}{\emph{Covered by no partition}}\\
\quad interactive effects $\gamma_i f_t$ & 0.049 & 0.041 & 0.049 & 0.047 \\
\bottomrule
\end{tabular}
\par\smallskip
\begin{minipage}{0.95\textwidth}\footnotesize
Notes: $R = 3{,}000$ replications, $B = 999$ flips, symmetric (Laplace) innovations throughout; Monte Carlo standard error at $5\%$ is $0.004$, so a $\pm 3$ s.e.\ band is $[0.038, 0.062]$. ``obs.\ cycles'' is the $C = 45$ four-cycle system of Appendix~H; ``firm groups'' is the $C = 8$ firm-measurable system of the clustering-cost figures in that section. Row~3 illustrates Remark~\ref{prop:reann}: additive shocks at either fixed-effect level are annihilated by $q'D = 0$, so they are invisible to every contrast system regardless of magnitude. Rows~4--6 are not covered for the observation-level system.
\end{minipage}
\end{table}

Two readings are warranted. Where the theory applies it delivers: every entry in the first block, and the firm-grouped column throughout, sits inside the Monte Carlo band, as Theorem~\ref{thm:exact} requires. And the design's robustness extends visibly beyond what we prove --- the observation-level system is not measurably harmed by within-firm AR(1) at $\varrho=0.9$, nor by interactive effects, though Assumption~\ref{ass:sym} covers neither. Remark~\ref{prop:reann} explains only the annihilation of an additive component lying in the fixed-effect span; a general AR(1) covariance is not equicorrelated, so the remaining observation-level robustness is a feature of this simulation design rather than a theorem.

What is not warranted is the inference that the block condition is dispensable, and a design in which it bites is easy to build. Take a complete $F\times T$ panel with separable errors $\varepsilon_{ft}=g_fh_t$, $g_f$ symmetric and independent across firms, so firm blocks are independent and centrally symmetric. A four-cycle on firms $\{f,f'\}$ and periods $\{t,t'\}$ then has score $\tfrac12(g_f-g_{f'})(h_t-h_{t'})$: every four-cycle on the same firm pair carries the same factor $g_f-g_{f'}$, so those contrasts are deterministically proportional and flipping their signs independently is plainly wrong. At $F=4$, $T=6$ the four-cycle test has size $0.22$ at nominal $10\%$ while the firm-grouped test remains valid; at $F=12$ the same DGP gives $0.15$, and by the scale of Table~\ref{tab:depmc}'s design the distortion has diluted below detection. Robustness there is thus a property of that design, not of the method, whereas Assumption~\ref{ass:sym}(iii) guarantees validity irrespective of it (\texttt{verification/verify\_new\_results.py}).

The lesson is that the robustness visible in Table~\ref{tab:depmc} is a property
of that design, not of the method, whereas Assumption~\ref{ass:sym}(iii) is what
guarantees validity irrespective of the design. A researcher who wants the
finite-sample guarantee rather than an empirical observation about one panel
should use the firm-grouped system, which the clustering-cost figures of Appendix~H show costs nothing
in capture here ($\kap = 0.804$ against $0.795$) and only reduces the group from
$2^{45}$ to $2^{8}$. We recommend paying that price whenever the clustering level
is in doubt: in this design it is free.

\section{The contrast system is not canonical}

Every object in Appendix~H below the design diagnostics is a property of the
particular contrast system used, not of the design. Weighted edge-disjoint cycle
packing is NP-hard, so the algorithm of Section~\ref{sec:packing} is a heuristic; and when the
treatment is discrete, as the F-score is, many four-cycles carry identical
loadings, so a greedy rule also needs a tie convention. Canonical design-label
ordering makes that convention reproducible, but it does not make the resulting
heuristic a mathematically unique optimum. It is worth recording what does and
does not survive the choice of admissible system.

The current v0.5.1 implementation ranks candidates on the raw treatment, as the
identity $v_c'\tx=v_c'x$ permits, and resolves every tie by canonical design
labels. It returns $C=43$ supports covering $174$ of the $217$ observations,
with $\kappa_C=0.827$. The support/sign signatures are identical in the Julia
and R implementations and unchanged over input-row permutations. An earlier
archived construction returned $C=45$ and $\kappa_C=0.795$. Both systems are
admissible under Theorem~\ref{thm:exact}, so both tests are exact, and the verdict
is unchanged: the exact set fails to exclude zero, at $p=0.72$ under the current
packing and $p=0.40$ under the archived packing.

What moves is the location. The current packing gives
$\widetilde\beta=-0.006$ against the archived $+0.015$, with exact set
$[-0.040,\,0.026]$ against $[-0.022,\,0.051]$. This is not instability in the
test but the estimator being what it is: $\widetilde\beta$ weights the
identifying variation by the supports the system happens to hold, and two
systems capturing about $80\%$ of $V_n$ in different places estimate different
weighted averages. Where the effect is a null, as here, nothing pins the sign of
that average. The practical consequence is that the exact confidence set, not
$\widetilde\beta$, is the reportable object: the set is the collection of
$\beta_0$ the data do not reject, and it is valid for every admissible system,
whereas $\widetilde\beta$ should not be read as a point estimate of $\beta$.

The same applies to the symmetry diagnostic, and there the direction is less
comfortable. Assumption~\ref{ass:sym} restricts the contrast scores, and the scores are
chosen rather than given: the raw-return contrast-score skewness is $-1.08$
under the current packing but $+0.30$ under the archived one, roughly $2.9$
standard errors from zero at $C=43$ for the former. The main paper accordingly
reports the unfavorable current value and does not claim that differencing
establishes symmetry. The check must be recomputed for whichever system is
actually used. Neither reading disturbs
the exactness of the test, which requires only symmetry of the scores that are
used; but a practitioner who selects a packing, finds the symmetry diagnostic
unfavourable, and then selects another has spent a researcher degree of freedom,
and should report the fact.

\section{A diffuse dense panel}

We finally run a second specification end to end on the real F-score panel.

The specification is the two-way saturated regression
\begin{equation}\label{eq:emp}
r_{it} \;=\; \beta\, F_{it} \;+\; \phi_i \;+\; \psi_{c(i)t} \;+\; \varepsilon_{it},
\end{equation}
where $F_{it}$ is the Piotroski-type F-score \citep{Piotroski2000}, with firm and
country--year effects. The design has $217$ observations, $64$ FE parameters,
and $155$ residual degrees of freedom. Log returns are the headline; raw
returns provide a robustness column.

\emph{Step 1: design diagnostics.} Before viewing the outcome, $V_n=461.69$,
$\lambda_n=0.036$, and $N_{\mathrm{eff}}=83.1$ identify a diffuse design in
which exactness pays a visible price rather than rescuing invalid Gaussian inference.

\emph{Step 2: contrasts and capture.} The deterministic v0.5.1 design-only
packing returns $43$ four-cycles covering $172$ observations, $\kap=0.827$,
an SE price of $1.10\times$, and maximum loading share $0.065$, so (P1) holds.

\emph{Step 3: estimates and intervals.} Table~\ref{tab:emp} collects them.

\begin{table}[ht]
\centering
\caption{Complete workflow for specification~\eqref{eq:emp} on the real F-score panel ($n = 217$; $19$ firms; $3$ countries; $2010$--$2024$).}
\label{tab:emp}
\small
\setlength{\tabcolsep}{2pt}
\begin{tabular}{@{}>{\raggedright\arraybackslash}p{3.4cm}cccc@{}}
\toprule
& \multicolumn{2}{c}{$\log(1+r)$} & \multicolumn{2}{c}{raw $r$} \\
\cmidrule(lr){2-3}\cmidrule(lr){4-5}
& estimate & 95\% interval & estimate & 95\% interval \\
\midrule
OLS $\hbeta$ & $0.0030$ & & $0.0122$ & \\
contrast $\widetilde\beta$ & $-0.0057$ & & $-0.0036$ & \\
\addlinespace
homoskedastic $t$ & & $[-0.0347,\, 0.0407]$ & & $[-0.0831,\, 0.1074]$ \\
HC2 $t$ & & $[-0.0309,\, 0.0369]$ & & $[-0.0563,\, 0.0807]$ \\
clustered by firm ($G=19$) & & $[-0.0353,\, 0.0412]$ & & $[-0.0547,\, 0.0790]$ \\
\textbf{exact sign-flip} & & $\mathbf{[-0.0393,\, 0.0264]}$ & & $\mathbf{[-0.0573,\, 0.0473]}$ \\
\midrule
exact $p$-value at $\beta_0 = 0$ & \multicolumn{2}{c}{$0.719$} & \multicolumn{2}{c}{$0.893$} \\
width relative to HC2 & \multicolumn{2}{c}{$0.97\times$} & \multicolumn{2}{c}{$0.76\times$} \\
$\lambda_n$ / $N_{\mathrm{eff}}$ (treatment) & \multicolumn{4}{c}{$0.036$ / $83.1$} \\
$\widehat\lambda_n^{\mathrm{score}}$ / $\widehat N_{\mathrm{eff}}^{\mathrm{score}}$ & \multicolumn{2}{c}{$0.133$ / $26.6$} & \multicolumn{2}{c}{$0.186$ / $9.9$} \\
within-residual skew / contrast-score skew & \multicolumn{2}{c}{$+0.49$ / $+0.68$} & \multicolumn{2}{c}{$+3.71$ / $-1.08$} \\
\bottomrule
\end{tabular}
\par\smallskip
\begin{minipage}{0.97\textwidth}\footnotesize
Notes: $C = 43$ four-cycle contrasts, $\kap = 0.827$, $\kap^{-1/2} = 1.10$, in both columns (the contrast system depends on $(x,D)$ only). Exact intervals invert the test of Theorem~\ref{thm:exact} on a grid of $24{,}001$ values of $\beta_0$ with $B = 99{,}999$ flips and common random numbers; the accepted set is an interval in both columns. The feasible score diagnostics are defined in Remark~\ref{rem:diag}. The standard error of a skewness estimate at $C = 43$ is $\sqrt{6/C} = 0.37$.
\end{minipage}
\end{table}

Both procedures give a null verdict: one F-score point is associated with $0.30$
percentage points of annual log return, and neither interval excludes zero
($p=0.87$ conventionally, $0.719$ exactly).

The exact interval centers on the contrast estimator
$\widetilde\beta=-0.0057$, not OLS $0.0030$, because it weights the captured
$82.7\%$ of $V_n$ differently. Its width is $0.97\times$ the HC2 width.

Raw returns expose what treatment leverage misses: although $\lambda_n=0.036$,
$\widehat N_{\mathrm{eff}}^{\mathrm{score}}=9.9$. The adaptive exact interval is then
$24\%$ narrower than HC2. The symmetry diagnostic is unfavorable, however:
raw residual skewness is $+3.71$ and the selected contrast scores have skewness
$-1.08$ (SE $0.37$). Thus this diffuse case illustrates the mechanism and the
diagnostic; it does not empirically validate Assumption~\ref{ass:sym}.
\subsection*{The cost of clustering}

Section~\ref{sec:dep} argued that the dependence structures a contrast system can
tolerate are decided by a combinatorial condition on its supports. On the same
panel that condition is a table (the figures above), and it exhibits the
capture--granularity trade-off in its most concrete form.

For comparability across dependence partitions, this subsection uses the
archived nested support family rather than the stronger unconstrained v0.5.1
packing used above. Its four levels give, in order of coarseness: observation-
or match-level blocks, $C = 45$ supports with $\kap = 0.795$ and maximum share
$0.068$; period (country--year) blocks, $C = 21$ and $\kap = 0.776$; firm
blocks, $C = 8$ and $\kap = 0.804$; and country blocks, $C = 3$ and
$\kap = 1.000$ with maximum share $0.696$. Each row is the finest support
system whose supports are unions of the stated blocks, built by greedily
merging blocks to maximize local capture (Definition~\ref{def:supp}). For the
default two-sided statistic the smallest full-enumeration $p$-value is
$2^{1-C}$, because $s$ and $-s$ produce the same absolute statistic, so a
level-$5\%$ full-enumeration test requires $C \ge 6$.

The pattern is the one the theory predicts and is worth stating in words.
Coarsening a \emph{nested} support family buys robustness and, by
Proposition~\ref{prop:dom}(c), weakly buys capture as well. The period and firm
partitions here are not nested, so their captures need not be ordered; along the
nested observation--firm--country comparison $\kap$ rises from $0.795$ to
$0.804$ and then to $1$. What coarsening spends is contrasts: $45$, then
$21$, then $8$, then $3$. At the bottom row the exact test attains full capture
and \emph{ceases to exist at the $5\%$ level}, because a randomization group of
order $2^3$ cannot produce a two-sided full-enumeration $p$-value below $0.25$
(the generic, non-even orbit bound is $0.125$). Country-clustered exact
inference is unavailable on this panel at conventional levels---not because of
overlap, not because of packing, but because there are three countries. This is
the same wall that Ibragimov--M\"uller-style few-group procedures meet, arrived
at from the design side, and it is why granularity rather than capture is the
object to optimize.

Firm-level clustering, by contrast, is comfortably feasible: $C = 8$ supports
give a group of order $256$ and $\kap = 0.804$, so a researcher who believes a
firm's return deviations are serially correlated pays essentially nothing in
capture and retains a valid $5\%$ test.

Across six error processes the firm-grouped test holds $5\%$ size; additive FE-level
shocks vanish by Remark~\ref{prop:reann}. Appendix~F reports the full table
and a separable-error counterexample where the observation-level test has size
$0.22$ at nominal $10\%$ while firm grouping remains valid. Thus robustness
outside Assumption~\ref{ass:sym} is design-specific; firm grouping is preferable
here because it raises capture from $0.795$ to $0.804$.

\end{document}